\newtheorem{theorem}{Theorem}
\newtheorem{definition}{Definition}
\newtheorem{proposition}{Proposition}
\newcommand{\tr}[1]{\mbox{Tr}\left[ #1\right]}
\newcommand{{\Cd}}{{\mathbb{C}^d}}
\newcommand{{\C}}{{\mathbb{C}}}
\newcommand{\sign}{\text{sign}}
\newcommand{\Int}{\mathrm{int}}
\DeclareMathOperator{\Tr}{Tr}
\DeclareMathOperator{\atanh}{atanh}
\newcommand{\I}{\mbox{Im}}
\newcommand{\bra}[1]{\langle #1|}
\newcommand{\ket}[1]{|#1\rangle}
\newcommand{\braket}[2]{\langle #1|#2\rangle}
\newcommand{\ketbra}[2]{| #1\rangle\!\langle #2|}
\newcommand{\trA}[1]{\mbox{Tr}_A\left[ #1\right]}
\newcommand{\trB}[1]{\mbox{Tr}_B\left[ #1\right]}
\newcommand{\trS}[1]{\mbox{Tr}_S \left[ #1\right]}
\newcommand{\expp}[1]{e^{#1}}
\newcommand{\exppp}[1]{\mbox{exp} \left[ #1 \right]}
\begin{document}

\title{Witnessing non-Markovian dynamics through correlations}

\author{Dario De Santis$^1$, Markus Johansson$^1$, Bogna Bylicka$^1$, Nadja K. Bernardes$^{2,3}$, and Antonio Ac\'\i n$^{1,4}$}
\affiliation{$^1$ICFO-Institut de Ciencies Fotoniques, The Barcelona Institute of Science and Technology,
08860 Castelldefels (Barcelona), Spain\\
$^2$ Departamento  de  F\'\i sica,  Universidade  Federal  de  Minas  Gerais,
Belo  Horizonte,  Caixa  Postal  702, 30161-970, Brazil\\
$^3$ Departamento de F\'\i sica, Universidade Federal de Pernambuco, 50670-901 Recife-PE, Brazil\\ 
$^4$ICREA--Institucio Catalana de Recerca i
Estudis Avan\c{c}ats, Lluis Companys 23, 08010 Barcelona, Spain}

\date{\today}
\begin{abstract}
Non-Markovian effects in an open-system dynamics are usually associated to information backflows from the environment to the system. However, the way these backflows manifest and how to detect them is unclear. A natural approach is to study the backflow in terms of the correlations the evolving system displays with another unperturbed system during the dynamics. In this work, we study the power of this approach to witness non-Markovian dynamics using different correlation measures. We identify simple dynamics where the failure of completely-positive divisibility is in one-to-one correspondence with a correlation backflow. We then focus on specific correlation measures, such as those based on entanglement and the mutual information, and identify their strengths and limitations. We conclude with a study of a recently introduced correlation measures based on state distinguishability and see how, for these measures, adding an extra auxiliary system enlarges the set of detectable non-Markovian dynamics.
\end{abstract}

\maketitle

\section{Introduction}\label{I0}

The dynamics of open quantum systems \cite{book_B&P, book_Weiss, book_R&H} has been investigated extensively in recent years for both fundamental and applicative reasons.
In particular 
the problem of understanding and characterizing memoryless dynamics, the so-called Markovian regime, and dynamics exhibiting memory effects, the non-Markovian regime, have been considered in a wide range of different ways (for extended reviews see \cite{rev_RHP, rev_BLP}). While a unique agreed upon concept of quantum Markovianity does not exist it is frequently identified with the property of { \it Completely Positive divisibility} (CP-divisibility).
An evolution is CP-divisible if between every two points in time it can be described by a CP-map.  
This idea generalises the semigroup property \cite{GKS, L} of classical Markovian processes.

Intuitively, one expects that non-Markovian effects are associated to a backflow of information from the environment to the system. Several approaches have been pursued to put this intuition in rigorous terms. A standard procedure, see for example~\cite{BLP0,Luo,Buscemi&Datta,bogna}, consists of considering operational quantities that are monotonically non-increasing under CP maps.
An increase of any such quantities implies that the evolution is non-Markovian, although the converse is not true in general. 
One of the quantities considered in the context of non-Markovian characterization is correlations, a fundamental concept for our understanding of quantum theory and also a resource for many quantum information protocols. The general idea of this approach consists of monitoring the evolution of the correlations between a system subjected to a dynamics and an additional particle that does not take part in the evolution. If at some point a correlation backflow, that is, an increase in the correlations quantified by a given correlation measure, is observed, then the dynamics must be {non-Markovian}. 

The main goal of this work is to understand the power of correlations to witness non-Markovian evolutions. In particular, we study the strengths and weaknesses of several well-known correlation measures for this task. We derive several results that improve our understanding of this question.
First, it is shown how for a class of differentiable evolutions termed single parameter, which includes relevant examples such as depolarization, dephasing, and amplitude damping, any continuously differentiable correlation measure increases during non-Markovian dynamics unless it is time independent on the whole image of the preceding evolution.
Secondly, we focus on two fundamental quantum correlation measures: entanglement measures and (quantum) mutual information. For the first, we provide a simple argument explaining how it fails to witness non-Markovianity in many situations. For the second, we study its behaviour in different scenarios. 
We first show that the mutual information witnesses the non-Markovianity of any bijective unital and non-P-divisible dynamics on a qubit.
We then provide several examples of non-Markovian dynamics where no mutual information backflow is observed when using maximally entangled states. For some of these examples, we demonstrate that a backflow in the mutual information does appear when using non-maximally entangled states, in some cases even arbitrarily weakly entangled pure states. This highlight how a high degree of initial correlations is not necessarily beneficial for the detection of non-Markovianity when using the mutual information as a witness. 
Lastly, based on the results {in~\cite{previous}}, we construct examples of non-Markovian dynamics that do not display any backflow of the mutual information.
The last part of our work is devoted to a detailed analysis of the correlation measure introduced in {Ref.~\cite{previous}} that detects almost all non-Markovian dynamics, more precisely all but a zero-measure set consisting of all those dynamics that are not bijective in finite time intervals. Interestingly, for all these cases, an initial state arbitrarily close to a product state suffices for the non-Markovianity detection.
We also discuss how, in order to fully exploit the power of this correlation measure, one has to consider an enlarged scenario involving a second ancilla $A'$ that is not subjected to the evolution either, and look for correlation backflows along the bipartition $SA'$ versus $A$. 

The outline of the manuscript is as follows. In Section \ref{II0nmdyn} we define the notation used in the article, we introduce the key concepts of open-system dynamics and present two classes of evolutions that are very useful for the derivation of our results: the single parameter and random unitary dynamics. Section \ref{III'0} provides the definition of a correlation measure, describes the two settings considered in this work and discusses a family of dynamics for which any correlation measure increases in the presence of non-Markovian effects. Section \ref{IV'0} contains the results regarding entanglement measures, while in Section \ref{V'0} we study in detail the case of the mutual information. In Section \ref{V0dario} the new class of correlation measures based on state distinguishability is analyzed. The paper concludes with a discussion of the results and open questions.

{
Note added: while completing the work we became aware of the results in \cite{Janek} where it is shown how the negativity, a computable entanglement measure, is able to detect all the bijective non-Markovian dynamics, and even non-bijective in the case of qubits, in a tripartite configuration. We come back to this point below and discuss it in the context of our results.}

\section{Non-Markovian dynamics}\label{II0nmdyn}
In this section we describe the mathematical representation of open system dynamics and the notation used in what follows.  In Section~\ref{IIA} we focus on the description of the time evolution of open quantum systems and in Section~\ref{IIIB} we introduce what we call ``single parameter'' evolutions, a family of differentiable evolutions. Finally, in Section~\ref{RUD} we present the random unitary evolutions, {an} example of dynamics defined by the random application of unitary operations.

We consider a quantum system $S$ with an associated finite dimensional Hilbert space $\mathcal{H}_S$ of dimension $d$. The set of bounded linear operators acting on $\mathcal{H}_S$ is denoted by $B(\mathcal{H}_S)$ and the subset of states is denoted by $S(\mathcal{H}_S)$. An ancillary system $A$ with Hilbert space $\mathcal{H}_A$ is introduced and the corresponding set of bounded linear operators and set of states are denoted by $B(\mathcal{H}_A)$ and $S(\mathcal{H}_A)$, respectively. The set of bounded operators on the combined Hilbert space $\mathcal{H}_A\otimes \mathcal{H}_S$ is denoted by $B(\mathcal{H}_A\otimes \mathcal{H}_S)$ and the set of states is denoted by $S(\mathcal{H}_A\otimes \mathcal{H}_S)$.

The evolution of $S$ from initial time $t=0$ to a later time $t$ is described by a dynamical map, i.e., a linear operator $\Lambda_t:B(\mathcal{H}_S)\rightarrow B(\mathcal{H}_S)$ that is completely positive and trace preserving (CPTP). The dynamics of the system is thus described by the family of maps $\{\Lambda_t\}_t$ parametrized by $t$.
An important concept for the study of non-Markovian effects is that of divisibility of the dynamical map $\Lambda_t$, as well as Positive (P) and Completely Positive (CP)-divisibility in terms of intermediate maps $V_{s,t}$.

\begin{definition}\label{defCPdiv}
A dynamical map $\Lambda_s$ is called divisible (P/CP-divisible) if it can
be expressed as a sequence of linear trace preserving (P/CP) maps $\Lambda_s=V_{s,t}\Lambda_t$, where $V_{s,t}$ is a linear trace preserving (P/CP) map, for any $0\leq t\leq s$.
\end{definition}

CP-divisibility of the family of dynamical maps has been taken by many authors as the definition of Markovian evolution \cite{book_R&H}.  It is in fact the definition we adopt in this work and, in what follows, whenever it is said that an evolution is non-Markovian, it means that it is not CP-divisible. {The idea behind this is that in a CP-divisible process the environment $E$ of the system can be disregarded and an intermediate map $V_{s,t}$ can be physically realized just by acting locally on the system and a second environment $E'$ in a product state with $S$ at time $t$ (see Fig. \ref{figenv}). If $V_{s,t}$ is not CP it cannot result from a local process involving only $S$ and $E'$. 
Hence, the state of $S-E$ at time $t$ holds information that is crucial for the subsequent evolution of $S$ up to time $s$. } This definition of Markovian evolution is the analogue of the classical Markov process where the evolution at each time interval only depends on the state of the system at the beginning of the interval. In other words both definitions capture the idea that no memory of previous events is needed to describe the process.

\begin{figure}
\includegraphics[width=0.3\textwidth]{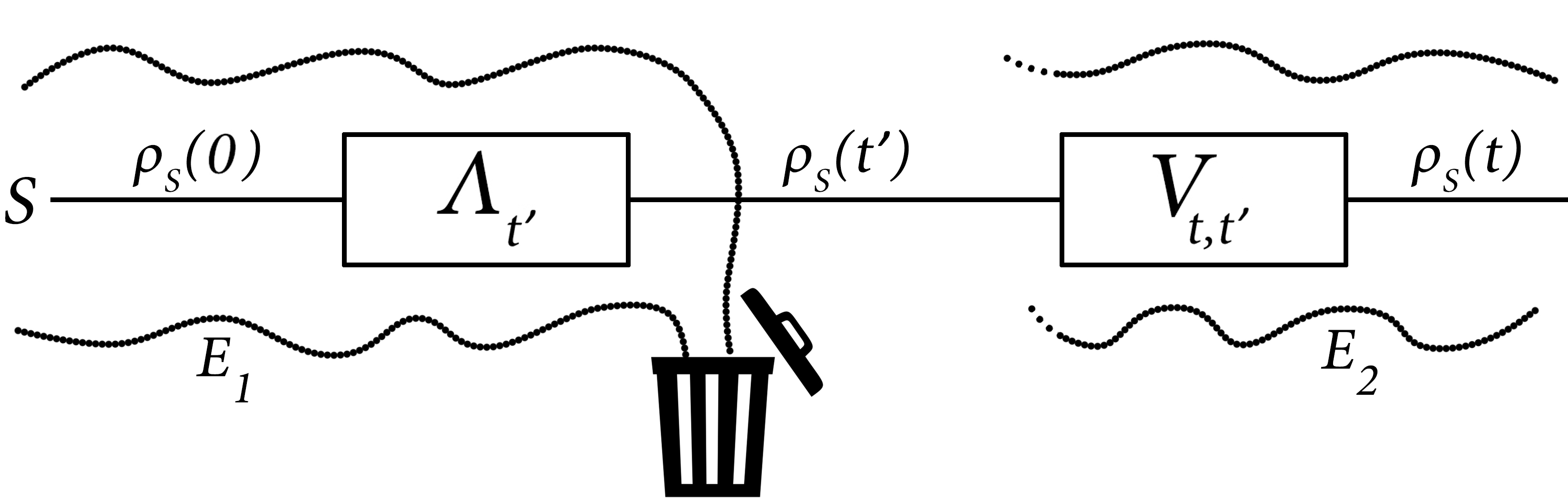}\vspace{0.2cm}
\includegraphics[width=0.45\textwidth]{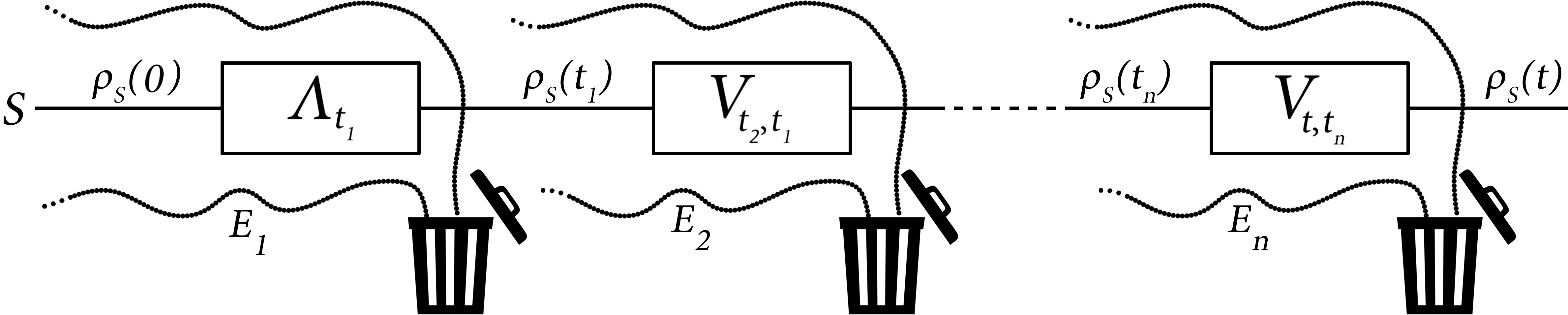}
\caption{ {Top: The evolution given by $\Lambda_t =V_{t,t'} \Lambda_{t'}$, where $V_{t,t'}$ is CP, can be represented by the interaction of $S$ with a first environment $E_1$, in a product state with $S$ at time 0, that after time $t'$ is discarded and replaced with a second environment $E_2$ in a product state with $S$ at time $t'$. Thus, any information discarded at time $t'$ cannot be recovered afterwards.  Bottom: We divide the evolution of a Markovian dynamical map $\Lambda_t$ in $n$ time intervals $(t_{i-1},t_i)$, where $i=1,\dots, n$, $t_0=0$ and $t_n=t$. The evolution during the $i$-th interval $(t_{i-1},t_i)$ is given by a CP intermediate map $V_{t_i,t_{i-1}}$ and it can be represented by the interaction of $S$ with an environment $E_i$ uncorrelated with $S$ at time $t_{i-1}$, while the previous environment $E_{i-1}$ is discarded.} }\label{figenv}

\end{figure}

\subsection{The generator of the evolution and the intermediate map}\label{IIA}

For a differentiable evolution, any dynamical map $\Lambda_t$ and any intermediate map $V_{s,t}$ can be expressed as time ordered exponentials
\begin{eqnarray}
\label{basics}
\Lambda_t(\rho)=\mathcal{T}e^{\int_{0}^{t}\mathcal{L}_{\tau}d\tau},\phantom{uu} V_{s,t}=\mathcal{T}e^{\int_{t}^{s}\mathcal{L}_{\tau}d\tau},
\end{eqnarray}
where $\mathcal{L}_{t}$ is the generator of the evolution as first defined by 
Gorini et. al. \cite{GKS} and Lindblad \cite{L} for dynamical semigroups and later extended to general divisible maps,

\begin{eqnarray}\label{geno}
\mathcal{L}_t(\rho)\equiv i[H(t),\rho]+\sum_k\gamma_{k}(t)\left( G_k(t)  \rho G_k^{\dagger}(t)-\frac{1}{2}\left\{G_k^{\dagger}(t) G_k(t),\rho\right\}\right).\nonumber\\
\end{eqnarray}
Here $\gamma_k(t)$ are real time dependent functions, $G_k(t)$ are time dependent operators and $H(t)$ is the Hamiltonian of the system, thus also a Hermitian possibly time-dependent operator. The Hamiltonian term of the generator describes the unitary part of the dynamics generated by $H(t)$ and the second term describes the dissipative part of the dynamics generated by the $G_k(t)$.
The generator $\mathcal{L}_t$ gives rise to a Markovian evolution if and only if it can be written in a form where $\gamma_k(t)\geq 0$ for all $k$. See e.g. Theorem 5.1 of Ref. \cite{book_R&H} for a proof.

The generator $\mathcal{L}_t$ can be defined in terms of the intermediate map as  $\mathcal{L}_t\equiv\frac{d V_{s,t}}{ds}\big|_{s=t}$.
Often, it is convenient to describe the intermediate map $\mathcal{I}_A\otimes V_{s,t}$ and the generator $\mathcal{L}_t$ by how they act on a basis of $B(\mathcal{H}_A\otimes \mathcal{H}_S)$. 
Such a basis can be constructed from operators of the form $\chi_A\otimes{\chi_S}$ where $\chi_A$ is an operator on $\mathcal{H}_A$ and $\chi_S$ is an operator on $\mathcal{H}_S$.
Let the dimension of $\mathcal{H}_S$ be $d_S$ and let $\chi_{Sk}$ be $d_S^2-1$ traceless Hermitian operators such that $\Tr[\chi_{Sk}\chi_{Sl}]=\delta_{kl}d_S$. Likewise let $d_A$ be the dimension of $\mathcal{H}_A$ and let $\chi_{Ak}$ be $d_A^2-1$ traceless Hermitian operators such that $\Tr[\chi_{Ak}\chi_{Al}]=\delta_{kl}d_A$.
Then one can choose an orthonormal basis $\{e_i\}_{i=0}^{d_A^2d_S^2-1}$ for $B(\mathcal{H}_A\otimes \mathcal{H}_S)$ by constructing the basis vectors $e_i$ as the tensor products $\chi_{Ai}\otimes{\chi_{Sj}}$, $\chi_{Ai}\otimes{\mathbbm{1}_{S}}$, $\mathbbm{1}_A\otimes{\chi_{Sj}}$, and $\mathbbm{1}_A\otimes{\mathbbm{1}_S}$ for all $i,j$. We denote $e_0\equiv \mathbbm{1}_A\otimes{\mathbbm{1}_S}$. 
A state $\rho\in S(\mathcal{H}_A\otimes \mathcal{H}_S)$ can be represented in this basis by coordinates given by the real numbers $a_i=\frac{1}{d_Ad_S}\Tr(\rho e_i)$, i.e., 
\begin{eqnarray}\label{stateAS}
\rho=\frac{1}{d_Ad_S} \mathbbm{1}_A\otimes{\mathbbm{1}_S}+\sum_{i=1}^{d_A^2d_S^2-1} a_i e_i.
\end{eqnarray}

We can now describe the intermediate map $\mathcal{I}_A\otimes V_{s,t}$ by how it acts on each basis elements $e_i$. For this purpose we define 
\begin{eqnarray}\label{comp}V_{ij}(s,t)\equiv \tr{e_i\mathcal{I}_A\otimes V_{s,t}(e_j)}.\end{eqnarray}
Note that $V_{ij}(s,t)$ is real for all $i,j$ since $V_{s,t}$ is hermiticity preserving.
Moreover, since the map $V_{s,t}$ is trace preserving it follows that $V_{00}(s,t)=1$ and $V_{0j}(s,t)=0$ for $j\neq{0}$.
Let the  coordinates {$\bar{a}\equiv\{a_i\}$}, where $a_0=1/d_Ad_S$, describe a state at time $t$. This state is mapped by $\mathcal{I}_A\otimes V_{s,t}$ to coordinates $\bar{a}(s)=\{a_i(s)\equiv\sum_{j}V_{ij}(s,t)a_j \}$.
Analogously to Eq. (\ref{comp}), we define the time derivatives of the components $V_{ij}(s,t)$ as
\begin{eqnarray}\label{kliy}\frac{d V_{ij}(s,t)}{ds}\Big|_{s=t}\equiv \tr{e_i\mathcal{I}_A\otimes \frac{d V_{s,t}}{ds}(e_j)}\Big|_{s=t}  \, .\end{eqnarray}

\subsection{Single parameter evolutions}\label{IIIB}

A simple family of open-system dynamics is given by those differentiable evolutions $\{\Lambda_t\}_t$ such that for all $t$ the generator $\mathcal{L}_t$ can be expressed as 

\begin{eqnarray}
\mathcal{L}_t(\rho)=i[H(t),\rho]+\gamma(t)\sum_k\left( G_k  \rho G_k^{\dagger}-\frac{1}{2}\left\{G_k^{\dagger} G_k,\rho\right\}\right),\nonumber\\
\end{eqnarray}
where $G_k$ are time-independent operators and $\gamma(t)$ is a continuous function of time. We call those evolutions with this property {\it single parameter} since $\gamma(t)$ alone describes the time dependence of the dissipative part. Paradigmatic examples of single parameter evolutions are depolarization, as well as dephasing and amplitude damping in a time independent basis.
Using the representation defined in Eq. (\ref{kliy}) we can state the single parameter property as
\begin{eqnarray}
\frac{d V_{ij}(s,t)}{ds}\Big{|}_{s=t}=g_{ij}\gamma(t)+h_{ij}(t),
\end{eqnarray}
where $h_{ij}(t)$ is corresponds to the action of $H(t)$ and  the time independent parameters $g_{ij}$ to $\sum_kg_k\left( G_k  \rho G_k^{\dagger}-\frac{1}{2}\left\{G_k^{\dagger} G_k,\rho\right\}\right)$.

An important property of single-parameter evolutions is that  
they can be divided into CP-divisible and not P-divisible time intervals, that is, they never present intermediate maps that are not CP but P.
\begin{proposition}\label{prop2}
Let $\{\Lambda_t\}_t$ be single parameter. 
Then the intermediate map $V_{t'+\epsilon',t'}$ is either CP or not P for any $t'$ and sufficiently small $\epsilon'>0$.
\end{proposition}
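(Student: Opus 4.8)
The plan is to analyse $V_{t'+\epsilon',t'}$ for small $\epsilon'$ through the leading terms of its Dyson expansion and to show that the sign of $\gamma$ on $[t',t'+\epsilon']$ alone decides whether $V_{t'+\epsilon',t'}$ is CP or is not P. First I would normalise the problem: conjugation by a unitary channel is CP with CP inverse, hence preserves both CP and P, so we may pass to the interaction picture with respect to $H(\cdot)$ and assume $H\equiv 0$. Then $V_{t'+\epsilon',t'}=\mathcal{T}e^{\int_{t'}^{t'+\epsilon'}\gamma(\tau)\mathcal{D}_\tau\,d\tau}$, where $\mathcal{D}_\tau(\rho)=\sum_k\big(G_k(\tau)\rho\,G_k^\dagger(\tau)-\tfrac12\{G_k^\dagger(\tau)G_k(\tau),\rho\}\big)$ is a GKLS dissipator with continuous, time-dependent jump operators obeying $G_k(t')=G_k$, so that $\mathcal{D}_{t'}=\mathcal{D}$ is the original (time-independent) dissipator. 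If $\mathcal{D}=0$ then every $G_k(t')$, hence every $G_k(\tau)$, is proportional to $\mathbbm{1}$, so $\mathcal{D}_\tau\equiv 0$ and $V_{t'+\epsilon',t'}$ is trivially CP; from now on assume $\mathcal{D}\neq 0$. The only identity needed is that for orthonormal $\ket{\psi},\ket{\phi}$ one has $\bra{\phi}\mathcal{D}_\tau(\ketbra{\psi}{\psi})\ket{\phi}=\sum_k|\bra{\phi}G_k(\tau)\ket{\psi}|^2\ge 0$ (the anticommutator terms drop because $\braket{\phi}{\psi}=0$).

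\emph{CP branch.} If $\gamma\ge 0$ on $[t',t'+\epsilon']$, then $V_{t'+\epsilon',t'}$ is a time-ordered limit of products of the maps $e^{\gamma(\tau)\mathcal{D}_\tau\,\Delta\tau}$, each of which is CP because $\mathcal{D}_\tau$ is of GKLS form and $\gamma(\tau)\ge 0$; a product of CP maps is CP, so $V_{t'+\epsilon',t'}$ is CP. (Equivalently, the generator restricted to $[t',t'+\epsilon']$ is everywhere of GKLS form with non-negative rates, and one invokes the characterisation of Markovian generators recalled above, Theorem~5.1 of Ref.~\cite{book_R&H}.) By continuity of $\gamma$ this covers every $t'$ with $\gamma(t')>0$, for all small $\epsilon'$.

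\emph{Not-P branch.} If $\gamma\le 0$ and $\gamma\not\equiv 0$ on $[t',t'+\epsilon']$, then since $\mathcal{D}\neq 0$ some $G_k(t')$ is not a multiple of $\mathbbm{1}$ and hence possesses a non-eigenvector, so we may fix orthonormal $\ket{\psi},\ket{\phi}$ with $c:=\sum_k|\bra{\phi}G_k(t')\ket{\psi}|^2>0$; by continuity $c(\tau):=\sum_k|\bra{\phi}G_k(\tau)\ket{\psi}|^2\ge c/2$ throughout $[t',t'+\epsilon']$ once $\epsilon'$ is small. Let $I_{\epsilon'}:=\int_{t'}^{t'+\epsilon'}|\gamma(\tau)|\,d\tau$, which is positive because $\gamma\not\equiv 0$. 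Expanding $V_{t'+\epsilon',t'}=\mathcal{I}+\int_{t'}^{t'+\epsilon'}\gamma(\tau)\mathcal{D}_\tau\,d\tau+O(I_{\epsilon'}^2)$ and using the identity above,
\begin{equation}
\bra{\phi}V_{t'+\epsilon',t'}(\ketbra{\psi}{\psi})\ket{\phi}=\int_{t'}^{t'+\epsilon'}\gamma(\tau)\,c(\tau)\,d\tau+O(I_{\epsilon'}^2)\le-\tfrac{c}{2}\,I_{\epsilon'}+O(I_{\epsilon'}^2),
\end{equation}
which is strictly negative for $\epsilon'$ small. Hence $V_{t'+\epsilon',t'}$ maps the state $\ketbra{\psi}{\psi}$ to a non-positive operator and is not P. By continuity this covers every $t'$ with $\gamma(t')<0$, for all small $\epsilon'$.

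\emph{Conclusion and main obstacle.} Since $\gamma$ is continuous, $\gamma(t')\neq 0$ forces $\gamma$ to keep a constant sign on $[t',t'+\epsilon']$ for all small $\epsilon'$, so the two branches give the claimed dichotomy; the same argument works, for small $\epsilon'$, whenever $\gamma$ does not change sign in some right-neighbourhood of $t'$ (including $\gamma\equiv 0$ there, which is CP). The hard part, and the one case not covered, is $\gamma(t')=0$ with $\gamma$ attaining both signs in every interval $(t',t'+\delta)$: then $[t',t'+\epsilon']$ carries both signs of $\gamma$ for arbitrarily small $\epsilon'$, the linear term of $V_{t'+\epsilon',t'}$ vanishes, and one must argue at second order that $\sign\!\big(\int_{t'}^{t'+\epsilon'}\gamma\big)$ still decides CP versus not-P. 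The obstruction is that the non-commuting corrections $\int_{t'}^{t'+\epsilon'}\gamma(\tau)\big(\mathcal{D}_\tau-\mathcal{D}\big)\,d\tau$ need not be negligible relative to $\int_{t'}^{t'+\epsilon'}\gamma$, so settling it requires a careful estimate built on the modulus of continuity of $\tau\mapsto\mathcal{D}_\tau$. This residual case is, however, vacuous as soon as $\gamma$ has isolated zeros — in particular for all the quoted examples (depolarisation, dephasing, amplitude damping) — since then $\gamma(t')=0$ already forces $\gamma$ to have a fixed sign on one side of $t'$.
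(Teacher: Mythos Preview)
Your argument is correct and, in fact, more direct than the paper's. The paper proceeds via eigenvalue perturbation: it invokes the continuous differentiability of the eigenvalues of the Choi matrix $C_{V_{s,t}}$ and of $V_{s,t}(\rho)$ for rank-one $\rho$ (citing Rellich), uses Appendix~\ref{dermu} to guarantee a zero eigenvalue with nonzero time derivative at $s=t$, and then observes that both derivatives are proportional to $\gamma(t)$, so the sign of $\gamma$ simultaneously governs CP (Choi eigenvalue) and P (eigenvalue of $V_{s,t}(\rho)$). Your route avoids eigenvalue perturbation theory entirely: passing to the interaction picture to kill $H$, you handle the CP branch by the GKLS characterisation directly, and the not-P branch by computing the single matrix element $\bra{\phi}V_{t'+\epsilon',t'}(\ketbra{\psi}{\psi})\ket{\phi}$ to first order in the Dyson series. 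This is essentially the same mechanism as the paper's Appendix~\ref{dermu} (your identity $\bra{\phi}\mathcal{D}_\tau(\ketbra{\psi}{\psi})\ket{\phi}=\sum_k|\bra{\phi}G_k(\tau)\ket{\psi}|^2$ is exactly the content of that appendix), but packaged without the detour through eigenvalue differentiability. Your approach is more elementary and self-contained; the paper's has the aesthetic advantage of treating CP and P in parallel through eigenvalues.

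Regarding the residual case you flag (a $t'$ with $\gamma(t')=0$ and $\gamma$ taking both signs in every right-neighbourhood): the paper's proof does not address it either, since its argument also hinges on $\gamma(t')\neq 0$ to obtain a nonzero first derivative. So this is not a gap relative to the paper. Note, incidentally, that when $H\equiv 0$ the time-ordered exponential collapses to $V_{t'+\epsilon',t'}=\exp\!\big[(\int_{t'}^{t'+\epsilon'}\gamma)\,\mathcal{D}\big]$ exactly, because $\mathcal{D}$ is time-independent; in that situation your dichotomy extends to the oscillatory case as well, since only the sign of $\int\gamma$ matters and the map is never P-but-not-CP. The genuinely delicate situation is thus confined to nontrivial $H(t)$ combined with pathological $\gamma$, which, as you observe, is irrelevant for the paper's applications.
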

\begin{proof}
An intermediate map $V_{s,t}$ is CP if the Choi matrix $C_{V_{s,t}}$ has non-negative eigenvalues \cite{choi,jammy}. Moreover, $V_{s,t}$ is P if $V_{s,t}(\rho)$ is positive semidefinite for all
positive semidefinite Hermitian trace one matrices $\rho\in B(\mathcal{H}_S)$. In particular, if $V_{s,t}$ is P, $V_{s,t}(\rho)$ is positive semidefinite for any positive semidefinite rank one $\rho$.

The eigenvalues of $V_{s,t}(\rho)$ for a rank-one pure state $\rho$ and the eigenvalues of $C_{V_{s,t}}$ are functions of the parameters $\bar{a}$ and $V_{ij}(s,t)$. Since we assumed that $\gamma(t)$ is continuous it follows that $V_{s,t}(\rho)$ and $C_{V_{s,t}}$ are both continuously differentiable. This in turn implies that the eigenvalues of $V_{s,t}(\rho)$ and $C_{V_{s,t}}$ can be described by continuously differentiable functions \cite{rellich}. 

If $\gamma(t)\neq 0$, so that $V_{s,t}$ is not unitary, there always exists at least one eigenvalue {$\lambda_{C}(V_{s,t})$ } of the Choi matrix that is zero and has a nonzero time derivative at $s=t$. There also exists at least one rank-one $\rho$ with eigenvalue { $\lambda(\rho)$ that is zero and its evolution $\lambda(\rho(s))$, i.e., given by $\rho(s)=\mathcal{I}_A\otimes V_{s,t}(\rho)$, has a nonzero time derivative at $s=t$  (see Appendix \ref{dermu}).
}


 {
We start by considering the time derivatives  $\frac{d \lambda_{C}(V_{s,t})}{ds}\big{|}_{s=t}=\sum_{ij}\frac{d \lambda_{C}(V_{s,t})}{dV_{ij}(s,t)}\big{|}_{s=t} g_{ij}\gamma(t)$ and $\frac{d \lambda (\rho(s)) }{ds}\big{|}_{s=t}=\sum_{ij}\frac{d \lambda (\rho(s) ) }{dV_{ij}(s,t)}\big{|}_{s=t} g_{ij}\gamma(t)$. Here we used that the eigenvalues are invariant under unitary evolution and thus $\sum_{ij}\frac{d \lambda_{C}(V_{s,t})}{dV_{ij}(s,t)} \big{|}_{s=t} h_{ij}(t)=0$ and $\sum_{ij}\frac{ d \lambda (\rho(s) )}{dV_{ij}(s,t)}\big{|}_{s=t} h_{ij}(t)=0$. If the time derivatives are non-zero they are proportional to $\gamma(t)$. If $V_{t+\epsilon,t}$ is CP for any sufficiently small $\epsilon$ it is clear that $\frac{d \lambda_{C}(V_{s,t})}{ds}\big{|}_{s=t}>0$ and $\frac{d \lambda (\rho(s) ) }{ds}\big{|}_{s=t}>0$. Then if a $t'$ exist such that $\sign[\gamma(t')]=-\sign[\gamma(t)]$, there exists an $\epsilon'$ such that $V_{t'+\epsilon',t'}$ is neither CP or P since $\frac{d \lambda_{C}(V_{s',t'} )}{ds'}\big{|}_{s'=t'}<0$ and $\frac{d \lambda (\rho(s'))}{ds'}\big{|}_{s'=t'}<0$. From this follows that it is impossible for $V_{t'+\epsilon',t'}$ to be P but not CP for sufficiently small $\epsilon'$.
}

\end{proof}

Thus, if the assumptions of Proposition \ref{prop2} hold we can conclude that the dynamics can be divided into closed intervals where $\gamma(t)\geq 0$ and open time intervals for which $\gamma(t)< 0$. In the closed intervals  where $\gamma(t)\geq 0$ the dynamics is CP-divisible and in the open intervals where $\gamma(t)<0$ the dynamics is not P-divisible.

Notice that, if $\Lambda_t$ is CP-divisible in a time interval, we do not necessarily mean that it is Markovian. Indeed, in order for $\Lambda_t$ to be Markovian, the corresponding intermediate map {$V_{s,t}$ has to be CP for any $s$ and $t$ such that $0\leq t \leq s$}  (see Definition \ref{defCPdiv}).

\subsection{Random unitary evolutions}\label{RUD}

Unital dynamical maps $\Lambda_t$ on a finite dimensional Hilbert space $\mathcal{H}_S$ are defined by the property of preserving the identity operator, i.e., $\Lambda_t(\mathbbm{1}_S)=\mathbbm{1}_S$ for any $t\geq 0$. In this section we introduce a useful class of unital evolutions that we repeatedly use in the following: the random unitary dynamics.

We consider a generic finite-dimensional Hilbert space $\mathcal{H}_S$, where dim$(\mathcal{H}_S)=d_S$.
The CPTP map $\Lambda: B(\mathcal{H}_S) \rightarrow B(\mathcal{H}_S)$ represents a random unitary channel on $S$ if its action can be written as
\begin{equation}\label{RUgen}
\Lambda (\rho_S) = \sum_{k} p_k \,  U_k \, \rho_S  \, U_k^\dagger \, ,
\end{equation}
where $\{p_k\}_k$ is a probability distribution and $\{U_k\}_k$ is a set of unitary transformations \cite{Wuda}. If $d_S=2$, i.e., for qubit evolutions, any unital channel is random unitary \cite{LaSt}. However, when $d_S\geq 3$, the set of random unitary channels is strictly included in the set of unital channels.

In the following, we consider random unitary dynamical maps $ \Lambda_t :  B(\mathcal{H}_S) \rightarrow B(\mathcal{H}_S)$ of the form
\begin{equation}\label{RU}
\Lambda_t (\rho_S(0)) = \sum_{k=0}^{N_S} p_k(t)\,  \sigma_k \, \rho_S (0) \, \sigma_k \, ,
\end{equation}
where the unitary operators $\sigma_k$, for $k=1,\dots,N_S$ and $N_S=d_S^2-1$, constitute a basis of elements for the Hermitian and traceless operators in $B(\mathcal{H}_S)$ such that $\sigma_0=\mathbbm{1}_S$ and $\tr{\sigma_i\sigma_j^\dagger}= d_S \delta_{ij}$ for $i,j=0,1,\dots,d_S^2-1$. Since $\{ p_k(t)\}_k$ is a probability distribution, we have that $\Sigma_{k=0}^{N_S} p_k(t)=1$ and $p_k(t)\geq 0$ for any $k=0,\dots,N_S$. Moreover, from the condition $\Lambda_0=\mathcal{I}_S$ it follows that $p_0(0)=1$. For the sake of simplicity, when we refer to a random unitary dynamics we consider dynamical maps of the form given in Eq. (\ref{RU}).

\subsubsection{Random unitary evolutions for qubits}

In the case that the evolution is defined on a qubit, a random unitary dynamics is given by the action of the corresponding dynamical map $\Lambda_t$ on the Pauli operators $\sigma_x, \sigma_y$ and $\sigma_z$
\begin{eqnarray}\label{randomunit}
\Lambda_t(\sigma_x)&&=\exppp{-\int_{0}^{t}( \gamma_z(\tau)+\gamma_y(\tau))d\tau}\sigma_x,\nonumber\\
\Lambda_t(\sigma_y)&&=\exppp{-\int_{0}^{t}( \gamma_z(\tau)+\gamma_x(\tau))d\tau}\sigma_y,\nonumber\\
\Lambda_t(\sigma_z)&&=\exppp{-\int_{0}^{t}( \gamma_x(\tau)+\gamma_y(\tau))d\tau}\sigma_z,\nonumber\\
\Lambda_t(\mathbbm{1})&&=\mathbbm{1},
\end{eqnarray}
where $\gamma_k(t)$ are real valued functions of $t$. Any random unitary dynamical map is bijective for all $t$ and the intermediate {maps are} given by
\begin{eqnarray}\label{ranu}
V_{s,t}(\sigma_x)&&=\exppp{-\int_{t}^{s}(\gamma_z(\tau)+\gamma_y(\tau))d\tau}\sigma_x,\nonumber\\
V_{s,t}(\sigma_y)&&=\exppp{-\int_{t}^{s}(\gamma_z(\tau)+\gamma_x(\tau))d\tau}\sigma_y,\nonumber\\
V_{s,t}(\sigma_z)&&=\exppp{-\int_{t}^{s}(\gamma_x(\tau)+\gamma_y(\tau))d\tau}\sigma_z,\nonumber\\
V_{s,t}(\mathbbm{1})&&=\mathbbm{1}.
\end{eqnarray}

The corresponding generator of the evolution (see Eqs. (\ref{basics}) and (\ref{geno})) is
\begin{eqnarray}\label{genoru}
\mathcal{L}_t(\rho)=\sum_{k=x,y,z}\gamma_k(t)(\sigma_k\rho\sigma_k-\rho).
\end{eqnarray}
We define 
\begin{equation}\label{Aij}
A_{ij} (t) \equiv \exppp{-2 \int_{0}^t (\gamma_i(\tau) + \gamma_j(\tau)) \, d\tau} \geq 0 \, ,
\end{equation}
for $i\neq j$ and $i,j\in{x,y,z}$. The map defined by Eqs. (\ref{randomunit}) is CPTP if and only if 
\begin{equation}\label{CPmap}
B_{ijk}(t)  \equiv 1 + A_{ij}(t) - A_{jk}(t) - A_{ki}(t) \geq 0 \, ,
\end{equation}
for any cyclic permutation of $i,j$ and $k$, i.e., for $(i,j,k)=(x,y,z),(y,z,x),(z,x,y)$ \cite{Hall}.
The dynamics is CP-divisible if and only if $\gamma_k(t)\geq 0$ for all $k$ and $t\geq 0$. Moreover, the dynamics is P-divisible if and only if the conditions
\begin{eqnarray}\label{rattes}
\gamma_x(t)+\gamma_y(t)\geq{0},\nonumber\\
\gamma_y(t)+\gamma_z(t)\geq{0},\nonumber\\ 
\gamma_z(t)+\gamma_x(t)\geq{0},
\end{eqnarray}
are satisfied since the intermediate maps are then contractive in the trace norm~\cite{koss1,koss2,ruskai,darekk}. 
The similarity between Eqs. (\ref{randomunit}) and (\ref{Aij}) allows us to use the following equivalent definition for the random unitary qubit dynamics $\Lambda_t$
\begin{equation}\label{lambdaAij}
\Lambda_t(\sigma_i)=\lambda_i(t)\sigma_i \, \, , \,\,\, \mbox{ where }\lambda_i(t)=\sqrt{A_{jk}(t)} \, ,
\end{equation}
where $(i,j,k)=(x,y,z),(y,z,x),(z,x,y)$ and $\Lambda_t(\mathbbm{1})=\mathbbm{1}$.

Any random unitary dynamical map for qubits can be defined either by the time-dependent probability distribution $\{p_k(t)\}_k$ that appears in Eq. (\ref{RU}) or by the time-dependent set of rates $\{\gamma_k(t)\}_k$ that defines the generator $\mathcal{L}_t$ and the dynamical map $\Lambda_t$ itslef (see Eqs. (\ref{randomunit}) and (\ref{genoru})). The relations that link these two sets are \cite{darekk} 
$$
p_0(t) = \frac{1}{4} \left(1 + A_{xy}(t) + A_{xz}(t) + A_{yz} (t) \right) \, ,
$$
$$
p_x(t) = \frac{1}{4} \left(1 - A_{xy}(t) - A_{xz}(t) + A_{yz} (t) \right) \, ,
$$
$$
p_y(t) = \frac{1}{4} \left(1 - A_{xy}(t) + A_{xz}(t) - A_{yz} (t) \right) \, ,
$$
\begin{equation}\label{pkt}
p_z(t) = \frac{1}{4} \left(1 + A_{xy}(t) - A_{xz}(t)- A_{yz} (t) \right) \, .
\end{equation}

\subsubsection{Example: the quasi-eternal non-Markovian model}\label{quasieternal}

We present a class of non-Markovian random unitary evolutions inspired by the `eternal' non-Markovian model in Refs.\cite{eternal0,eternal}.
We analyze the dynamical maps, defined by Eqs. (\ref{randomunit}), with parametrized time-dependent rates
\begin{equation}\label{gammaet}
\left\{\gamma_x(t),\gamma_y(t),\gamma_z(t)\right\} = \frac{\alpha}{2}  \left\{ 1 , 1 , -\mbox{tanh}(t-t_0)\right\} \, ,
\end{equation}
where $\alpha>0$ and $t_0\geq 0$. 

First, we notice that: $\gamma_x(t)=\gamma_y(t)=\alpha/2>0$ for any $t$, $\gamma_z(t)\geq 0$ for $t\leq t_0$ but $\gamma_z(t)< 0$ for any $t>t_0$.
We point out that, depending on the value of $\alpha>0$ and $t_0\geq 0$, the corresponding dynamical map $\Lambda_t^{(t_0,\alpha)}$  generated by Eqs. (\ref{randomunit}) may be physical, i.e., CPTP, or not. While the non-positivity of the rates implies the non-CP-divisibility of the dynamics, the P-divisibility conditions given by Eqs. (\ref{rattes}) are satisfied for any $\alpha>0$, $t_0$ and $t$. Therefore, in case $\alpha$ and $t_0$ defines a CPTP dynamical map, it is non-Markovian and P-divisible: the intermediate maps $V_{t_1,t_2}^{(t_0,\alpha)}$ are P (but not CP) for any time interval $(t_1,t_2)$ such that $t_1 > t_0$.  

\begin{itemize}
\item
{ $t_0=0$ (eternal model): }The trace-preserving map  $\Lambda_t^{(t_0,\alpha)}$ generated by the rates in Eq. (\ref{gammaet}) is CP for $\alpha \geq 1$ and only P for $0<\alpha <1$ \cite{eternal}.

\item
{ $t_0>0$ (quasi-eternal model):} If $\alpha\geq 1$, the map $\Lambda_t^{(t_0,\alpha)}$ is CPTP for any $t\geq 0$ and $t_0\geq 0$. Instead, if $0<\alpha<1$, we need to wisely choose the value of $t_0 \geq 0$ in order to have a $\Lambda_t^{(t_0,\alpha)}$ that is CPTP for every $t\geq 0$. First of all, we calculate the quantities $A_{ij}^{(\alpha,t_0)}(t)$  that completely define the dynamical map $\Lambda_t^{(\alpha,t_0)}$ (see Eq. (\ref{lambdaAij}))
\begin{equation}\label{Axy}
A^{(\alpha,t_0)}_{xy}(t)=\expp{-2 \alpha t} \, ,
\end{equation}
\begin{equation}\label{Ayz}
  A^{(\alpha,t_0)}_{yz}(t)=A^{(\alpha,t_0)}_{zx}(t)=\left( \expp{-t} \frac{\cosh(t-t_0)}{\cosh(t_0)}\right)^{\alpha} \, .
\end{equation}
Now, we derive the ranges of physicality of the parameters $\alpha>0$ and $t_0\geq 0$. 
In order to satisfy the CPTP conditions given in Eq. (\ref{CPmap}), we notice that $B^{(\alpha,t_0)}_{yzx}(t)=B^{(\alpha,t_0)}_{zxy}(t)=1-A^{(\alpha,t_0)}_{xy}(t)=1-\expp{-2 \alpha t}\geq 0$ for any $\alpha>0$ and $t\geq 0$. It is straightforward to verify that the last condition $B^{(\alpha,t_0)}_{xyz}(t)=1+A^{(\alpha,t_0)}_{xy}(t)-2 A^{(\alpha,t_0)}_{yz}(t) \geq 0$ is satisfied for any $t \geq 0$ if and only if  $\lim_{t\rightarrow\infty} B^{(\alpha,t_0)}_{xyz}(t) \geq 0$. 
Therefore, the relation between $\alpha>0$ and $t_0\geq 0$ that implies that $\Lambda_t^{(t_0,\alpha)}$ is CPTP for any $t\geq 0$ is $1-2( {\expp{2 t_0}+ 1 } )^{-\alpha}\geq 0$ or, equivalently
\begin{equation}\label{alphat0}
 t_0 \geq  T^{(\alpha)} \equiv \frac{1}{2}\log\left( 2^{1/\alpha} - 1 \right) \, .
\end{equation} 
\end{itemize}

\section{Correlation measures as witnesses of non-Markovianity}\label{III'0}

In this section we introduce the general notion of correlation measures and see how they can be used for the detection of non-Markovian dynamics. We then show that for single-parameter evolutions, any non-Markovian effect leads to an increase in any continuously differentiable correlation measure that is non-zero during the preceding evolution. 

\subsection{Correlation measures}\label{III0corrme}

A correlation measure is a function that quantifies correlations between subsystems. Many different measures have been constructed that quantify correlations in different ways or capture qualitatively different types of correlations. 
For a function $M$ to be considered an operationally meaningful correlation measure, it has to satisfy the following property:
\begin{itemize}\label{kko}
\item $M$ is non-increasing on $S(\mathcal{H}_A\otimes \mathcal{H}_S)$ under local operations.\label{c3}
\end{itemize}
This condition encapsulates the natural requirement that correlations cannot be created by local operations.

Note that since any product state can be prepared by local operations, all these states should give the same value of $M$, which also corresponds to the minimum of $M$ over all quantum states. We can impose this value to be equal to zero without loss of generality, having:
\begin{itemize}\label{kko}
\item  $M(\rho)\geq 0$ for any state $\rho$.\label{c1}
\item $M(\rho)=0$ if $\rho$ is a product state.\label{c2}
\end{itemize}

Now, consider a generic correlation measure {$M(\rho)$} that is continuously differentiable {on $S(\mathcal{H}_{AS})$}. If the intermediate map $\mathcal{I}_A\otimes V_{s,t}$ is differentiable with respect to $t$ the time derivative of {$M(\rho(s))\equiv M(\mathcal{I}_A\otimes V_{s,t}(\rho))$} at time $t$ can be expressed as
{ 
\begin{equation}\label{formul}
 \sum_i\frac{\partial M (\rho(s))}{\partial a_i(s)}\frac{d a_i(s)}{ds}\Big|_{s=t}=\sum_{ij}\frac{\partial M (\rho)}{\partial a_i} a_{j} \frac{d V_{ij}(s,t)}{ds}\Big|_{s=t} ,
 \end{equation}}
where $a_i$ ($a_i(s)$) are the coordinates of $\rho$ ($\rho(s)$).
We see that {$\frac{d}{ds}M(\rho(s))|_{s=t}$} depends only on {$a_i$},  and the time derivatives of the components $V_{ij}(s,t)$ of the intermediate map.
As long as {$\frac{\partial M (\rho)}{\partial a_i}a_{j}\neq 0$} for some $i,j$ and {$\rho$} there exist some intermediate map that will induce either a decrease or increase of {$M(\rho)$}. However, it is not always the case that a given measure $M$ satisfies {$\frac{\partial M (\rho)}{\partial a_i}a_{j}\neq 0$} for a {$\rho$} in the image of the dynamical map $\Lambda_t$. In Section \ref{IV'0} we show that entanglement measures and entanglement breaking evolutions \cite{horo} can provide examples of this situation.


\subsection{Correlation backflows as a signature of non-Markovianity}\label{III'C}

The monotonicity of correlation measures under CP maps implies that they are all non-increasing under Markovian, that is, CP-divisible, evolutions. In fact, consider the scenario of Fig.\ref{fig} left, where a system $S$, correlated with an additional particle $A$, is subjected to an evolution. If there is an increase in the correlations between these two particles during the evolution between times $t$ and $s$, there cannot be a CP intermediate map $V_{s,t}$. This is why correlation measures have been proposed to witness and quantify non-Markovian effects \cite{RHP,Luo}. {From an open-system perspective a decrease in correlations between system and ancilla during the evolution may be due to an irrevocable loss of system-ancilla correlations but it may also be that these correlations have been transformed into potentially recoverable correlations of the environment-system-ancilla. An increase can therefore be seen as a flow of correlations lost during the previous evolution back to the system and ancilla.}

In what follows we also consider a slightly more complex setting with two additional systems $A$ and $A'$, while the evolution is again applied on system $S$, see also Fig.\ref{fig} right. The reasons to consider this extension will become clearer below, but it is straightforward to see that any measure of the correlations along the bipartition $A$-$A'S$ cannot increase either under Markovian dynamics. Of course it is trivial to recover the previous setting with only one additional {system} by taking an initial state which is product along the bipartition $A'$-$AS$.

\begin{figure}
\includegraphics[width=0.45\textwidth]{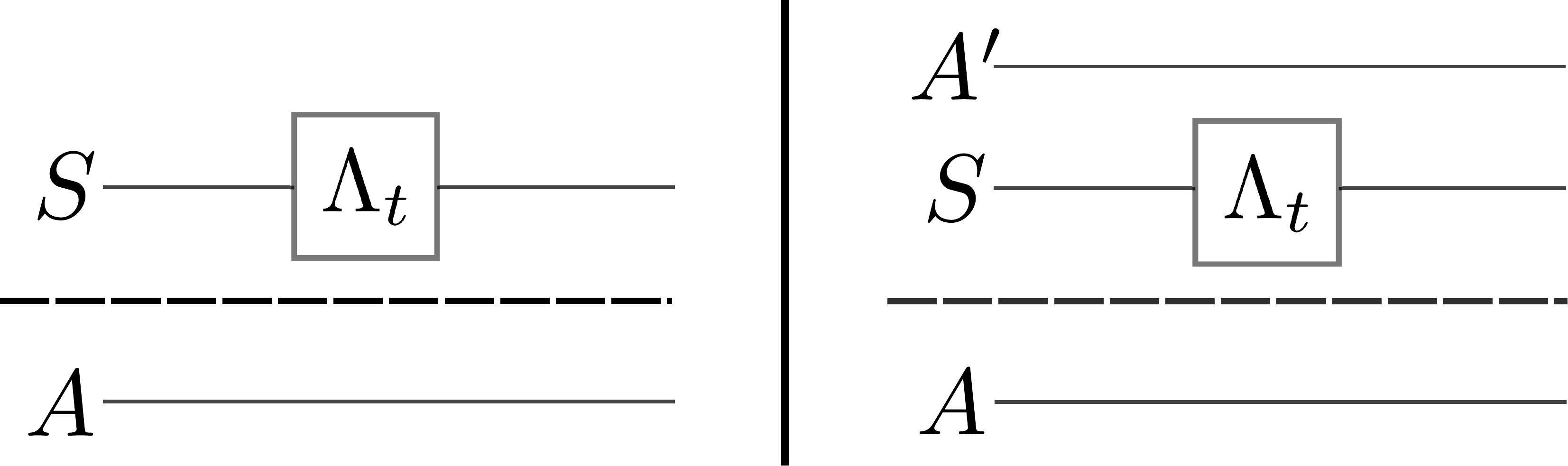}
\caption{Left: in the first setting, an initial state between system $S$ and ancilla $A$ is used. An increase of correlations between these two parts witnesses the presence of non-Markovian effects. Right: in our second extended setting, the whole setup consists of three parts, the systems $S$ and $A$ as before, plus an extra ancilla $A'$. An increase of the correlations over the bipartition $A$ versus $SA'$ is used to witness non-Markovian evolutions.}\label{fig}
\end{figure}

\subsection{Single parameter evolutions}\label{III'C2}

Our first result is for single-parameter evolutions and applies to any correlation measure that is continuously differentiable: for these evolutions, the sign of any non-zero derivative with time of a continuously differentiable correlation measure is determined by the sign of $\gamma(t)$. Therefore any non-Markovian effect leads to a correlation backflow, no matter which measure {of this kind is} used for the quantification {as long as it is not time independent on the whole image of $\Lambda_t$}. 

\begin{proposition}\label{prop1}
Let $M$ be a continuously differentiable correlation measure and let $\{\Lambda_t\}_t$ be a single parameter dynamical map. Then, 
\begin{equation}
\sign\left[\frac{d}{dt}M(\rho(t))\right]=-\sign[\gamma(t)] \, ,
\end{equation} 
for all {$\rho\in S(\mathcal{H}_{AS})$} such that {$\frac{d}{dt}M(\rho(t))\neq 0$}, where {$\frac{d}{dt}M(\rho(t)) \equiv \frac{d}{ds}M(\rho(s))|_{s=t}$ and $\rho(s)=\mathcal{I}_A\otimes V_{s,t}(\rho)$}.
\end{proposition}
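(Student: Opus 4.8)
The plan is to differentiate $M$ along the intermediate map and exploit the special structure of single-parameter generators. Starting from Eq.~(\ref{formul}) and inserting $\frac{d V_{ij}(s,t)}{ds}|_{s=t}=g_{ij}\gamma(t)+h_{ij}(t)$, the derivative splits into a dissipative piece, linear in $\gamma(t)$, and a Hamiltonian piece:
\begin{equation}
\frac{d}{dt}M(\rho(t))=\gamma(t)\sum_{ij}\frac{\partial M(\rho)}{\partial a_i}a_j\,g_{ij}+\sum_{ij}\frac{\partial M(\rho)}{\partial a_i}a_j\,h_{ij}(t).
\end{equation}
I will write $c(\rho)\equiv\sum_{ij}\frac{\partial M(\rho)}{\partial a_i}a_j g_{ij}$ and note that, by the single-parameter assumption, $c(\rho)$ depends only on $\rho$ and on the time-independent operators $G_k$ — in particular not on $t$ nor on the chosen rate function $\gamma$.

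Second, I would show that the Hamiltonian piece vanishes identically. The coefficients $h_{ij}(t)$ encode the action of $i[H(t),\cdot]$ on $S$ (tensored with the identity on $A$). Freezing $H(t)$ at its value at time $t$ and exponentiating yields the one-parameter family of local unitaries $\tau\mapsto\mathcal{I}_A\otimes e^{-iH(t)\tau}(\cdot)e^{iH(t)\tau}$; since $M$ is non-increasing under local operations and local unitaries are reversible (with local-operation inverses), $M$ must be constant along this family, so its $\tau$-derivative at $\tau=0$ vanishes. By the chain rule that derivative equals $\sum_{ij}\frac{\partial M(\rho)}{\partial a_i}a_j h_{ij}(t)$, so the Hamiltonian piece is zero and $\frac{d}{dt}M(\rho(t))=\gamma(t)\,c(\rho)$.

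Third, I would fix the sign of $c(\rho)$ by comparison with a bona fide Markovian reference. Let $\mathcal{L}_0\equiv\sum_k\big(G_k\,\rho\,G_k^{\dagger}-\frac{1}{2}\{G_k^{\dagger}G_k,\rho\}\big)$, i.e.\ the dissipative part of the generator with unit rates and no Hamiltonian; this is a GKLS generator with non-negative constant coefficients, so $e^{s\mathcal{L}_0}$ is a CPTP semigroup and $\mathcal{I}_A\otimes e^{s\mathcal{L}_0}$ is CP for every $s\geq 0$. Hence $s\mapsto M\big(\mathcal{I}_A\otimes e^{s\mathcal{L}_0}(\rho)\big)$ is non-increasing, so its right derivative at $s=0$ is $\leq 0$; but for this evolution $\frac{d V_{ij}(s,0)}{ds}|_{s=0}=g_{ij}$, so that derivative is exactly $c(\rho)$. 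Therefore $c(\rho)\leq 0$ for every $\rho\in S(\mathcal{H}_{AS})$.

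Combining the three steps, $\frac{d}{dt}M(\rho(t))=\gamma(t)\,c(\rho)$ with $c(\rho)\leq 0$, so whenever this quantity is nonzero we have $c(\rho)<0$ and $\gamma(t)\neq 0$, which gives $\sign[\frac{d}{dt}M(\rho(t))]=-\sign[\gamma(t)]$. I expect the main obstacle to be the rigorous justification of the second step: one must verify carefully that the frozen-Hamiltonian flow is genuinely a family of local operations along which $M$ is exactly constant, and that applying the chain rule to Eq.~(\ref{formul}) correctly identifies its initial derivative with the $h_{ij}$ sum — this is precisely where the continuous differentiability of $M$ is used. A secondary, routine point is checking that the reference generator $\mathcal{L}_0$ reproduces the same coefficients $g_{ij}$ appearing in the single-parameter expansion.
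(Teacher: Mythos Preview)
Your proposal is correct and follows essentially the same approach as the paper: decompose the derivative into a Hamiltonian piece (which vanishes by local-unitary invariance of $M$) and a dissipative piece $\gamma(t)\,c(\rho)$, then use CP monotonicity to conclude $c(\rho)\leq 0$. Your auxiliary reference-semigroup argument for the sign of $c(\rho)$ is a slightly cleaner variant of the paper's, which instead invokes the actual dynamics at a time where $\gamma(t)>0$ to reach the same conclusion.
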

\begin{proof}
For a continuously differentiable correlation measure $M$ the time derivative {$\frac{d}{dt}M(\rho(t))$} under an evolution of this type can be expressed on the form {$F(\rho)\gamma(t)$}, where {$F(\rho)\equiv \sum_{ij}\frac{\partial M(\rho)}{\partial a_i}a_j g_{ij}$} is a time independent function. Here we used that {$\sum_{ij}\frac{\partial M(\rho)}{\partial a_i}a_j h_{ij}(t)=0$} since $M$ is invariant under unitary evolution. Thus, {$\frac{d}{dt}M(\rho(t))$} is proportional to $\gamma(t)$.

If  $\gamma(t)>0$, so that $V_{\tau,t}$ is a non-unitary CP map, the time derivative 
{$\frac{d}{dt}M(\rho(t))$ is non-positive for all $\rho$. This implies that $F(\rho)$ is non-positive for all $\rho$.
Assume that $F(\rho)\neq 0$ for some $\rho$. Then 
it follows that $\sign{[\frac{d}{dt}M(\rho(t))]}=-\sign[\gamma(t)]$.}
\end{proof}
It follows from Proposition \ref{prop1} that a continuously differentiable correlation measure $M$ shows an increase in the not P-divisible intervals as long as {$\frac{d}{dt}M(\rho(t))\neq 0$} somewhere in the image of $\Lambda_t$.


\subsubsection{Example: Dephasing evolution}\label{ex1}
Dephasing in a fixed basis is an example of a {random unitary and} single-parameter evolution that satisfies the conditions in Propositions \ref{prop2} and \ref{prop1}.
The pure qubit dephasing dynamics is described by the dynamical maps
\begin{eqnarray}\label{dephasingEx}
\Lambda_t(\sigma_x)&=&e^{-\int_{0}^{t}\gamma(\tau)d\tau}\sigma_x,\nonumber\\
\Lambda_t(\sigma_y)&=&e^{-\int_{0}^{t}\gamma(\tau)d\tau}\sigma_y,\nonumber\\
\Lambda_t(\sigma_z)&=&\sigma_z,\nonumber\\
\Lambda_t(\mathbbm{1})&=&\mathbbm{1}.
\end{eqnarray}
The dynamical maps are bijective for all times and the intermediate maps $V_{t,s}$ are therefore given by $V_{s,t}=\Lambda^{-1}_t\Lambda_s$. 
Explicitly $V_{s,t}$ is given by
\begin{eqnarray}\label{cccc}
V_{s,t}(\sigma_x)&=&e^{-\int_{t}^{s}\gamma(\tau)d\tau}\sigma_x,\nonumber\\
V_{s,t}(\sigma_y)&=&e^{-\int_{t}^{s}\gamma(\tau)d\tau}\sigma_y,\nonumber\\
V_{s,t}(\sigma_z)&=&\sigma_z,\nonumber\\
V_{s,t}(\mathbbm{1})&=&\mathbbm{1}.
\end{eqnarray}
The generator $\mathcal{L}_t$ is given by
\begin{eqnarray}
\mathcal{L}_t(\rho)=\gamma(t)(\sigma_z\rho\sigma_z-\rho),
\end{eqnarray}
where $\gamma(t)$ is the time dependent dephasing rate. The dynamics is CP-divisible if and only if the dephasing rate $\gamma(t)\geq 0$. Furthermore, the dephasing dynamics is not P-divisible when $\gamma(t)< 0$. This can be seen directly from the two eigenvalues $1-e^{-\int_{t}^{s}\gamma(\tau)d\tau}$ and $1+e^{-\int_{t}^{s}\gamma(\tau)d\tau}$ of $V_{s,t}(\mathbbm{1}+\sigma_x)$.


We can see from Eq. (\ref{cccc}) that for all $i,j$ such that $V_{ij}(s,t)$ has a non-zero time derivatives it holds that $\frac{d V_{ij}(s,t)}{ds}\big{|}_{s=t}=-\gamma(t)$. Thus for any continuously differentiable correlation measure {$M(\rho(t))$ the dephasing rate $\gamma(t)$ determines the sign of $\frac{d}{dt}M(\rho(t))$  and $\frac{d}{dt}M(\rho(t))\geq 0$ for $\gamma(t)\leq 0$}.


\subsubsection{Example: Generalized amplitude damping evolution}\label{ex2}
Generalized amplitude damping evolution in a fixed basis is a second example that is single-parameter and satisfies the conditions in Propositions \ref{prop2} and \ref{prop1} except where it is non-differentiable.
The dynamics of generalized amplitude damping on a qubit is described by the dynamical maps
\begin{eqnarray}
\Lambda_t(\sigma_x)&=&G(t)\sigma_x,\nonumber\\
\Lambda_t(\sigma_y)&=&G(t)\sigma_y,\nonumber\\
\Lambda_t(\sigma_z)&=&G^2(t)\sigma_z,\nonumber\\
\Lambda_t(\mathbbm{1})&=&\mathbbm{1}+(2p-1)(1-G^2(t))\sigma_z,
\end{eqnarray}
where $0\leq G(t)\leq 1$ and $0\leq p\leq 1$. For $G(t)>0$ the dynamical maps are bijective and the
intermediate maps are given by
\begin{eqnarray}
 V_{s,t}(\sigma_x)&=&\frac{G(s)}{G(t)}\sigma_x,\nonumber\\
 V_{s,t}(\sigma_y)&=&\frac{G(s)}{G(t)}\sigma_y,\nonumber\\
 V_{s,t}(\sigma_z)&=&\left(\frac{G(s)}{G(t)}\right)^2\sigma_z,\nonumber\\
 V_{s,t}(\mathbbm{1})&=&\mathbbm{1}+(2p-1)\left(1-\left(\frac{G(s)}{G(t)}\right)^2\right)\sigma_z.
\end{eqnarray}
For $s$ and $t$ such that $G(t)=0$ the intermediate map only exists if $G(s)=0$ and can be defined as the identity map. If $G(t)=0$ and $G(s)\neq 0$ the intermediate map does not exist since the evolution is many-to-one.
For $t$ where $\mathcal{L}_t$ is well defined, it is given by
\begin{eqnarray}
\mathcal{L}_t(\rho)&=&p\gamma(t)(\sigma_-\rho\sigma_+-1/2\{\sigma_+\sigma_-,\rho\})\nonumber\\
&+&(1-p)\gamma(t)(\sigma_+\rho\sigma_--1/2\{\sigma_-\sigma_+,\rho\}),
\end{eqnarray}
where $\sigma_\pm=1/2(\sigma_x\pm i\sigma_y)$ and $\gamma(t)$ is given by
\begin{eqnarray}
\gamma(t)=-2\frac{d}{ds}\frac{G(s)}{G(t)}\bigg|_{s=t}=-\frac{2}{G(t)}\frac{d}{dt}G(t),
\end{eqnarray}
whenever $G(t)>0$ and differentiable. The dynamics is CP-divisible in a generic time interval $[t_1,t_2]$ when $\gamma(t)\geq 0$ for any $t\in [t_1,t_2]$. Furthermore, the amplitude damping dynamics is not P-divisible in  $[t_1,t_2]$  when $\gamma(t)< 0$ for any $t\in [t_1,t_2]$. 

We can easily see that when $G(t)>0$ and differentiable all non-zero time derivatives of the components of the intermediate map are proportional to $\gamma(t)$. Thus $\gamma(t)$ determines the sign of $\frac{d}{dt}M(\bar{a},t)$ for any continuously differentiable correlation measure $M(\bar{a},t)$.

\section{Entanglement measures}\label{IV'0}

After presenting the general definition of correlation measures, in the next sections we focus on different examples of these measures and study their behavior under non-Markovian effects. We start with entanglement measures, denoted by $M_E$, which are originally constructed to capture only non-classical correlations. This is why they do not only {not} increase under local operations, but even when these operations are assisted by classical communication (LOCC). This implies that  $M_E(\rho)=0$ if $\rho$ is a separable state. Ref. \cite{RHP} introduced the idea of using entanglement measures to witness non-Markovianity.

\begin{figure}
\includegraphics[width=0.31\textwidth]{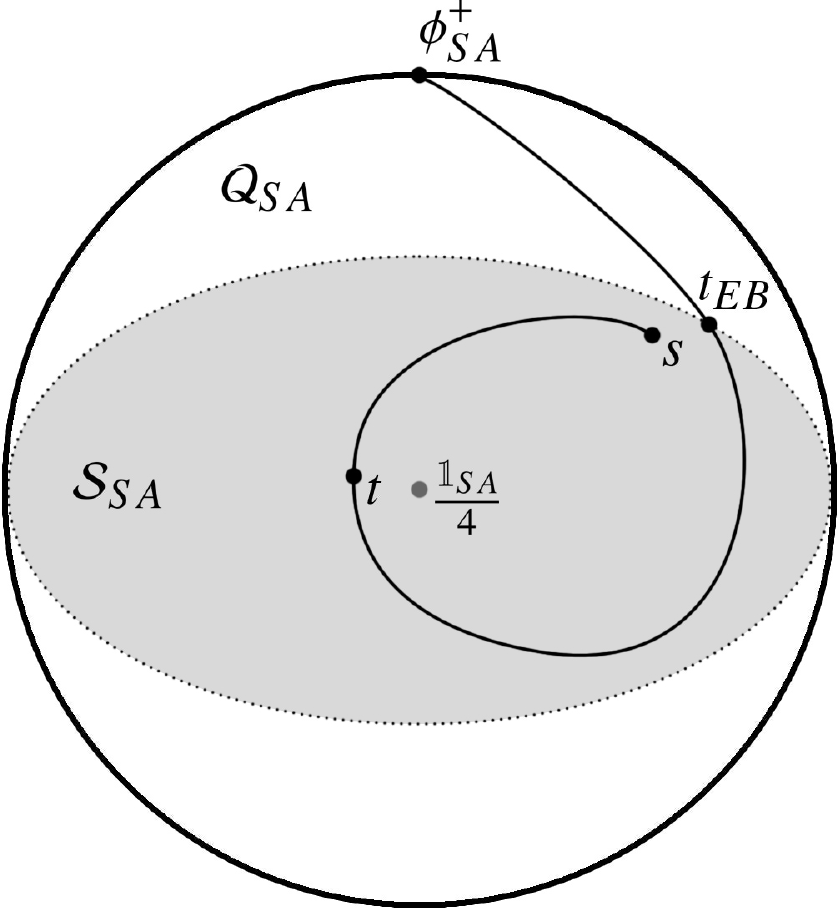}
\caption{Depiction of the trajectory of the evolution of a maximally entangled state $\phi^+_{SA}$ , where the system $S$ evolves under an entanglement breaking $\Lambda_t^{EB}$. Therefore, if $t>t_{EB}$, any initial state $\rho_{SA}(0)\in \mathcal{Q}_{SA}=S(\mathcal{H}_{SA})$ is evolved into a separable state $\rho_{SA}(t)\in \mathcal{S}_{SA}$. Suppose that $\Lambda_t^{EB}$ is non-Markovian but CP-divisible in $[0,t_{EB}]$ and with a non-CP intermediate map $V_{s,t}^{EB}$ for  $s>t>t_{EB}$. In this case, it is not possible to witness backflows of any entanglement measure. Indeed, entanglement is zero in the set of separable states $\mathcal{S}_{SA}$. }\label{EB}
\end{figure}

It is relatively straightforward to see that entanglement measures cannot detect all possible examples of non-Markovian dynamics, see also~\cite{previous}. In fact, consider the situation in which a dynamics $\Lambda_t^{EB}$ becomes and remains entanglement breaking (EB) after a given time $t_{EB}$, see Fig.~\ref{EB}. Any entanglement measure remains equal to zero for $t>t_{EB}$ and will therefore be unable to detect any non-Markovian effect taking place for $t>t_{EB}$. This is for instance the case for dynamics that are $P$-divisible. At $t=t_{EB}$ all the states in the image are separable, and remain separable after it {since} local positive maps do not create any entanglement when acting on separable states.



For the sake of clarity, in what follows we provide an example of a non-Markovian qubit dynamics $\Lambda_t^{EB}$ which is CP-divisible in the time interval $[0,t_0]$, while it is only P-divisible for $t >t_0$. Showing that $t_{EB}(\Lambda_t^{EB}) < t_0$ we prove that this non-Markovian dynamics does not display any entanglement backflow.

\subsubsection{Example: the quasi-eternal non-Markovian model}

We consider a bipartite system, where $A$ and $S$ are qubits. A necessary and sufficient condition for a bipartite qubit state $\rho_{SA}\in S(\mathcal{H}_S\otimes \mathcal{H}_A)$ to be entangled is that its negativity, i.e., the entanglement measure
\begin{equation}\label{negativity}
N(\rho_{SA})\equiv \frac{||\rho_{SA}^{\Gamma_A}||_1-1}{2} \, ,
\end{equation}
is positive, where $\Gamma_A$ represents the operation of partial transposition on the system $A$.

 We consider the dynamical map $\Lambda_t^{(2/5,2)}$ that belongs to the class of quasi-eternal non-Markovian (P-divisible) evolutions introduced in Section \ref{quasieternal}, with $\alpha=2/5$, $t_0=2$ and the condition of physicality given by Eq. (\ref{alphat0}) is satisfied. Since $\gamma_x(t)=\gamma_y(t)=1/5$ and $\gamma_z(t)<0$ if and only if $t>t_0$, the corresponding intermediate map $V_{s,t}^{(2/5,2)}$ of this evolution is P- but not CP-divisible for any $s>t>t_0$, while it is CP when $t_0\geq s \geq t$.

This temporal evolution becomes entanglement breaking. Indeed, let $\phi^+_{SA}(t)= \mathcal{I}_A\otimes \Lambda_t^{(2/5,2)}(\phi^+_{SA})$ be the temporal evolution of the maximally entangled state $\phi^+_{SA}=\ketbra{\phi^+}{\phi^+}_{SA}$, where $\ket{\phi^+}_{SA}=(\ket{00}_{SA}+\ket{11}_{SA})/\sqrt{2}$. We obtain the separability of $\phi^+_{SA}(t)$, namely $N(\phi^+_{SA}(t))=0$, for any $t \geq t_{EB} (\Lambda_t^{(2/5,2)})\simeq 1.47$. Therefore, since $t_{EB}(\Lambda_t^{(2/5,2)})< t_0$, we conclude that it is not possible to observe a non-monotonic evolution of $N(\rho_{SA}(t))$, and more generally of any entanglement measure, for any initial state $\rho_{SA}(0)$. Indeed, the intermediate maps of this evolution are either P or CP for $ t_{EB}(\Lambda_t^{(2/5,2)})\leq t < s$ and CP for $t<s\leq t_{EB}(\Lambda_t^{(2/5,2)})$.

\section{Mutual Information}\label{V'0}

A commonly used correlation measure {that satisfy the conditions of Section~\ref{III0corrme}} is the 
the (quantum) mutual information $I(\rho_{AB})$~\cite{straton} 
defined as 
\begin{eqnarray}
I(\rho_{AB})\equiv S(\rho_A)+S(\rho_B)-S(\rho_{AB}),
\end{eqnarray}
where $S$ is the von Neumann entropy, $\rho_{AB}$ is the state of the system, and $\rho_A$ ($\rho_B$) is the reduced states of $A$ ($B$). The mutual information is a continuous function on the set of states and is analytic on the interior of the set of states, i.e., it is infinitely differentiable and equals its Taylor series in a neighbourhood of any point. {Differently from entanglement measures (see Section \ref{IV'0}), $I$ considers both classical and quantum correlations of bipartite systems.} The use of the mutual information to detect non-Markovian effects was proposed in Ref.~\cite{Luo}.

In the following, we present many different scenarios where the mutual information withnesses non-Markovian effects, but conclude by identifying a situation where it fails in this task.
In Section \ref{IVBunit} we show that whenever a qubit unital non-Markovian dynamics is not even P-divisible, an increase in the mutual information appears. In Section \ref{IVEnne} we consider non-Markovian  random unitary dynamics that are P-divisible to prove that maximally entangled states are not optimal to detect correlation backflow in terms of the mutual information. In Section \ref{IVAlabel} we move to a non-unital dynamics, in the form of a generalized amplitude damping channel, and provide a class of initial states that efficiently witness the non-Markovian nature of this evolution. Finally, in Sections \ref{IVCmeth} and  \ref{IVDrand}, we introduce the tools needed to show the existence of non-Markovian evolutions for which the mutual information does not provide backflows for any initial state.

\subsection{{Non-Markovian non-P-divisible} unital qubit dynamics}\label{IVBunit}

A number of commonly studied dynamics including dephasing and random unitary dynamics are unital.
For bijective unital dynamical maps acting on a qubit an increase of the mutual information can be observed  for any non-P intermediate map $V_{s,t}$.

\begin{theorem}
Let $\Lambda_{t}$ be a unital bijective qubit evolution. Furthermore assume that the intermediate map $V_{s,t}$ is analytic and non-$P$ at a given time $t$. Then there exist states $\rho_{AS}\in{B(\mathcal{H}_A\otimes \mathcal{H}_S)}$ in the image of $\Lambda_{t}$ for which $I(\mathcal{I}_A\otimes V_{s,t}(\rho_{AS}))>I(\rho_{AS})$.
\end{theorem}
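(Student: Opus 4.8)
\emph{Proof idea.} My plan is to work entirely in the Bloch picture and to exhibit explicit classically correlated states whose mutual information goes up. Since $\Lambda_t$ is unital and bijective, the intermediate map $V_{s,t}=\Lambda_s\Lambda_t^{-1}$ is a linear, trace-preserving and Hermiticity-preserving map with $V_{s,t}(\mathbbm{1}_S)=\mathbbm{1}_S$; on a qubit, trace and Hermiticity preservation make it act affinely on Bloch vectors, and since it fixes the maximally mixed state the affine part vanishes, so $V_{s,t}(\vec v\cdot\vec\sigma)=(M\vec v)\cdot\vec\sigma$ for a real $3\times3$ matrix $M$ and $V_{s,t}(\mathbbm{1}_S)=\mathbbm{1}_S$. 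A unital qubit map is positive exactly when it maps the Bloch ball into itself, i.e. when $\|M\|\le1$, so the hypothesis that $V_{s,t}$ is non-$P$ gives $\|M\|>1$. Let $\hat e$ be a unit right singular vector of $M$ with $|M\hat e|=\|M\|$ and write $M\hat e=\|M\|\,\hat f$ with $\hat f$ a unit vector; likewise let $M_t$ be the (invertible) Bloch matrix of $\Lambda_t$, so that $\Lambda_t^{-1}(\vec w\cdot\vec\sigma)=(M_t^{-1}\vec w)\cdot\vec\sigma$.

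Next, for a small parameter $c>0$ I would take the classically correlated candidate $\rho_{AS}(c)=\tfrac14\big(\mathbbm{1}_A\otimes\mathbbm{1}_S+c\,\sigma_z\otimes(\hat e\cdot\vec\sigma)\big)$. This is a valid state for $|c|\le1$, and its preimage under $\mathcal{I}_A\otimes\Lambda_t$ is $\sigma_{AS}(c)=\tfrac14\big(\mathbbm{1}_A\otimes\mathbbm{1}_S+c\,\sigma_z\otimes((M_t^{-1}\hat e)\cdot\vec\sigma)\big)$, whose eigenvalues are $\tfrac14(1\pm c\,|M_t^{-1}\hat e|)$; thus $\sigma_{AS}(c)$ is a bona fide state once $c\,|M_t^{-1}\hat e|\le1$, and for such $c$ the state $\rho_{AS}(c)$ lies in the image of $\Lambda_t$. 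Applying $\mathcal{I}_A\otimes V_{s,t}$ sends $\rho_{AS}(c)$ to $\rho'_{AS}(c)=\tfrac14\big(\mathbbm{1}_A\otimes\mathbbm{1}_S+c\,\|M\|\,\sigma_z\otimes(\hat f\cdot\vec\sigma)\big)$, which is a valid state whenever $c\,\|M\|\le1$.

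It remains to compare the mutual information at the two times. After a local unitary on $S$ rotating $\hat e$ (respectively $\hat f$) to $\hat z$, each of $\rho_{AS}(c)$ and $\rho'_{AS}(c)$ is unitarily equivalent to a state $\tfrac14(\mathbbm{1}\otimes\mathbbm{1}+\kappa\,\sigma_z\otimes\sigma_z)$, whose two marginals are maximally mixed and whose spectrum is $\{\tfrac14(1\pm\kappa)\}$ with multiplicities two, so that $I=\log2-h\!\big(\tfrac{1+|\kappa|}{2}\big)$ with $h$ the binary entropy function; here $\kappa=c$ for the initial state and $\kappa=c\,\|M\|$ for the evolved one. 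Since $h$ is strictly decreasing on $[\tfrac12,1]$ and $0<c<c\,\|M\|$, for every $c$ with $0<c<\min\{\|M\|^{-1},\,|M_t^{-1}\hat e|^{-1}\}$ (a nonempty interval, because $\|M\|>1$ and $M_t$ is invertible) one gets $I(\rho'_{AS}(c))=\log2-h(\tfrac{1+c\|M\|}{2})>\log2-h(\tfrac{1+c}{2})=I(\rho_{AS}(c))$, which is exactly the required inequality.

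The point that needs care is to keep $\rho_{AS}(c)$ \emph{both} in the image of $\Lambda_t$ \emph{and} mapped by $V_{s,t}$ to a genuine density matrix, so that the mutual information and the formula above are defined at both $t$ and $s$; this is precisely why $c$ must be taken small. I also note that the analyticity hypothesis seems unnecessary for this argument, which is global and uses only $\|M\|>1$; it would instead be relevant for an alternative, perturbative proof that expands $V_{t+\epsilon,t}$ near $\epsilon=0$ and controls the sign of $\tfrac{d}{ds}I|_{s=t}$.
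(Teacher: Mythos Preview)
Your proof is correct and, once translated into the Bloch picture, uses the \emph{same} probe state as the paper. Indeed the paper's state
\[
\tfrac12\ketbra{0}{0}_A\otimes\big(p\ketbra{\phi}{\phi}+(1-p)\tfrac{\mathbbm{1}_S}{2}\big)
+\tfrac12\ketbra{1}{1}_A\otimes\big(p\ketbra{\phi^\perp}{\phi^\perp}+(1-p)\tfrac{\mathbbm{1}_S}{2}\big)
\]
is exactly $\tfrac14\big(\mathbbm{1}\otimes\mathbbm{1}+p\,\sigma_z\otimes(\hat n\cdot\vec\sigma)\big)$ with $\hat n$ the Bloch direction of $\ket\phi$, i.e.\ your $\rho_{AS}(c)$ with $c=p$. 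The genuine difference is in how the increase is established. The paper parametrizes the evolved eigenvalues by $\epsilon(s)$, computes $\tfrac{d}{ds}I\big|_{s=t}$, and invokes analyticity to tie the sign of this derivative to the non-$P$ property near $t$. You instead characterize non-$P$ globally as $\|M\|>1$, choose the top singular direction of $M$, and compare the closed-form value $I(\kappa)=\log 2-h\!\big(\tfrac{1+\kappa}{2}\big)$ at $\kappa=c$ and $\kappa=c\|M\|$. Your route is more elementary and, as you correctly observe, makes the analyticity hypothesis unnecessary; the paper's infinitesimal argument is tailored to the reading ``non-$P$ for $s$ arbitrarily close to $t$''.

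One small redundancy: the bound $c<\|M\|^{-1}$ is automatic once $c\le|M_t^{-1}\hat e|^{-1}$, because $\|M\|=|M\hat e|=|M_sM_t^{-1}\hat e|\le\|M_s\|\,|M_t^{-1}\hat e|\le|M_t^{-1}\hat e|$ (using that $\Lambda_s$ is unital and CP, so $\|M_s\|\le1$). Equivalently, once the preimage $\sigma_{AS}(c)$ is a state, $\rho'_{AS}(c)=(\mathcal{I}_A\otimes\Lambda_s)(\sigma_{AS}(c))$ is automatically a state. This does not affect correctness, only tightness of the stated range for $c$.
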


\begin{proof}
Assume that $V_{s,t}$ is not $P$. Then there exists a pure state $\ket{\phi}_S$ on $\mathcal{H}_S$ such that $V_{s,t}(\ketbra{\phi}{\phi}_S)$ has a negative eigenvalue.
Let the eigenvalues of $V_{s,t}(\ketbra{\phi}{\phi}_S)$ be $1+\epsilon(s)$ and $-\epsilon(s)$, where $\epsilon(t)=0$ and $\epsilon(s)\geq{0}$.
Consider the state 
\begin{eqnarray}
\rho_{AS}&\equiv& \frac{1}{2}|0\rangle\langle 0|_A \otimes \left( p|\phi\rangle\langle\phi|_S+(1-p)\frac{\mathbbm{1}_S}{2}\right) \nonumber\\
&+&\frac{1}{2}|1\rangle\langle 1|_A \otimes \left( p|\phi^\bot\rangle\langle\phi^\bot|_S+(1-p)\frac{\mathbbm{1}_S}{2} \right) \, ,
\end{eqnarray}
where $\ket{\phi^\bot}$ is the pure state orthogonal to $\ket{\phi}$, and $|0\rangle\langle 0|_A$ and $|1\rangle\langle 1|_A$ are orthogonal states in $\mathcal{H}_A$.
Note that since the evolution is unital, the eigenvalues of $V_{s,t}(|\phi^\bot\rangle\langle\phi^\bot|_S)$ are $1+\epsilon(s)$ and $-\epsilon(s)$.
Note also that the reduced density matrices of both the system and the ancilla are maximally mixed.
Therefore, the reduced states are unchanged by a unital map $V_{s,t}$.

The mutual information as a function of time is
\begin{eqnarray}
I(\mathcal{I}_A&\otimes& V_{s,t}(\rho_{AS}))=\left(\frac{1+p}{2}+p\epsilon(s)\right)\ln\left(\frac{1+p}{2}+p\epsilon(s)\right)\nonumber\\&&\!\!\!\!\!\!+\left(\frac{1-p}{2}-p\epsilon(s)\right)\ln\left(\frac{1-p}{2}-p\epsilon(s)\right)+\ln 2. 
\end{eqnarray} 
Its time derivative now reads 
\begin{equation}-\frac{d}{ds}S (\mathcal{I}_A\otimes V_{s,t}(\rho_{AS}))|_{s=t}=
\frac{d\epsilon (t)}{dt}p\left( \ln(1+p)-\ln(1-p)\right).
\end{equation}
Note that $p(\ln(1+p)-\ln(1-p))>0$ for $0<p<1$. Therefore, for $\frac{d\epsilon(t) }{dt}>0$ the time derivative of the mutual information is positive for $\rho_{AS}$. Moreover, since $\epsilon(t)$ is assumed to be analytic $\frac{d\epsilon(t)}{dt}>0$ implies that the map $V_{t+\delta,t}$ is non-P for a sufficiently small $\delta$.
Since $\Lambda_{t}$ is bijective and unital, there always exists a sufficiently small $p$ such that $\rho_{AS}$ is in the image of $\Lambda_{t}$.
\end{proof}

\subsection{Non-Markovian random unitary dynamics and maximally entangled states}\label{IVEnne}

We provide a condition for the mutual information not to show backflows when a maximally entangled state is evolved by a random unitary dynamical map. {Thereafter}, we formulate a version of this condition that applies to qubits, where the P-divisibility of the dynamical map is implied.

We consider the bipartite scenario where the systems $S$ and $A$ are qudits ($d_S=d_A=d$), $S$ is evolved by a random unitary dynamical map $\Lambda_t$ (see Eq. (\ref{RU})) and  the ancillary system $A$ is left untouched. We study the evolution of a maximally entangled state $\phi=\ketbra{\phi}{\phi}$, where
\begin{equation}\label{mesSA}
\ket{\phi} = \frac{1}{\sqrt{d}} \sum_{i=1}^d \ket{a_i}_{A} \otimes\ket{ s_i}_S \, ,
\end{equation}
and $\{\ket{s_i}_S\}_i$ ($\{\ket{a_i}_A\}_i$) is an orthonormal basis of $\mathcal{H_S}$ ($\mathcal{H}_A$). First, we show that the evolved state is diagonal in a Bell basis with eigenvalues given by the the same probability distribution $\{p_k(t)\}_k$ that defines $\Lambda_t$ (see Eq. (\ref{RU})). Indeed
\begin{align}
\phi (t)  &= \sum_{k=1}^{N} p_k(t) (\mathbbm{1}_A \otimes\sigma_k )\ketbra{\phi}{\phi}  (\mathbbm{1}_A\otimes\sigma_k  ) 
\nonumber\\
&\equiv\sum_{k=0}^{N} p_k(t) \ketbra{\phi_k}{\phi_k}   ,
\end{align}
where $N=d^2-1$. The set of states $\{\ket{\phi_k}\}_k\equiv \{(\mathbbm{1}_A\otimes \sigma_k)\ket{\phi}\}_k$ define a Bell basis and are orthonormal since: $\braket{\phi_i}{\phi_j} = \tr{\phi (\mathbbm{1}_A\otimes \sigma_i \sigma_j) }= \trS{\trA{\phi}\sigma_i \sigma_j} = \frac{1}{d } \trS{ \sigma_i \sigma_j}=\delta_{ij}$.
It follows that the von Neumann entropy of $\phi (t)$ is defined by the distribution $p_k(t)$
\begin{equation}\label{Spk}
S(\phi (t))=- \sum_{k=0}^{N} p_k(t) \ln {p_k(t)} \, .
\end{equation}
The reduced states of $\phi (t)$ to the subsystems $S$ and $A$ are maximally mixed: $\rho_S(t)= \trA{\phi(t)}=\mathbbm{1}_S/d$ and $\rho_A(t)= \trS{\phi(t)}=\mathbbm{1}_A/d$. Thus, the only evolving component of the mutual information of $\phi (t)$ is given by (\ref{Spk}): $I(\phi (t))= 2 \log_2 d - S(\phi (t))$. The time derivative of this quantity is
\begin{align}
\frac{d}{dt} I (\phi (t))&=\sum_{k=0}^N \frac{dp_k(t)}{dt}  (\ln p_k(t) + 1 ) \nonumber\\
&= \frac{d p_0(t)}{dt} (\ln  p_0(t) + 1)  + \sum_{k=1}^N \frac{d p_k(t)}{dt} (\ln p_k(t) + 1)\nonumber\\ 
&= \sum_{k=1}^N \frac{d p_k(t)}{dt} \ln \frac{p_k(t)}{p_0(t)} \, .
\end{align}
It follows that, if  $\frac{d}{dt} p_k(t) \geq 0$  and $p_0(t)\geq p_k(t)$ for $k = 1,2,\dots,N$, we cannot witness any backflow of mutual information with a maximally entangled state since $\frac{d}{dt} I(\phi (t)) \leq 0$. 

Finally, we consider the qubit case, namely when $d=2$. From Eqs. (\ref{pkt}) and (\ref{Aij}) it follows that the conditions $p_0(t) \geq p_k(t)$, for $k=x,y,z$ and $t\geq 0$, are always satisfied. We conclude that, when $S$ and $A$ are qubits, if $\frac{d}{dt} p_k(t)\geq 0$ for any $k=x,y,z$ and $t\geq 0$, we cannot obtain any backflow of mutual information if the initial state is maximally entangled. 

We notice that in a time interval where $\frac{d}{dt} p_k(t)\geq 0$ the dynamics is P-divisible, but not necessarily CP-divisible. Thus, there are some cases of non-Markovian P-divisible qubit dynamics that cannot be witnessed by the the mutual information of an evolved maximally entangled state. In order to prove this result, we write the time derivative of $p_x(t)$
\begin{eqnarray}
\frac{d}{dt} p_x(t)&=& \frac{1}{2}\left(  (\gamma_x(t) + \gamma_y(t) ) A_{xy}(t)+(\gamma_z(t) + \gamma_x(t) ) A_{zx}(t) \right. \nonumber \\ 
& -&\left.  (\gamma_y(t) + \gamma_z(t) ) A_{yz}(t) \right) \, .
\end{eqnarray}
Similarly, we can write $\frac{d}{dt} p_y(t)$ and $\frac{d}{dt} p_z(t)$. We notice that $\frac{d}{dt} p_x(t) + \frac{d}{dt} p_y(t)=(\gamma_x(t)+\gamma_y(t) ) A_{xy}(t)$. Therefore, given the positivity of $A_{xy}(t)$, $A_{yz}(t)$ and $A_{zx}(t)$ (see Eq. (\ref{Aij})), if $\frac{d}{dt} p_k(t)\geq 0$ for $k=x,y,z$, the dynamics is P-divisible (in this case the conditions given in Eqs. (\ref{rattes}) are automatically satisfied). However, in general the converse is not true. Indeed, in Sec. \ref{Examples} we study two similar P-divisible evolutions for qubits where in the first the conditions $\frac{d}{dt}p_k(t) \geq 0$ for $k=x,y,z$ are not satisfied, while in the second they are.

In order to obtain an intuitive meaning of the conditions presented in this section, we look at the definition given in  Eq. (\ref{RU}) for random unitary evolutions. We notice that $p_0(t)$ represents the fraction of $\Lambda_t$ that acts as the identity map on $\rho_S(0)$. Therefore, if the value of $p_0(t)$ is increasing for some $t$, i.e., $\frac{d}{dt}p_0(t)>0$, it is reasonable to expect that at time $t$ the system $S$ is getting closer to its initial configuration $\rho_S(0)$ and therefore evolving under a non-Markovian evolution that can be witnessed. Conversely, since $\sum_{k=0}^N p_k(t)=1$, if $\frac{d}{dt} p_k(t) \geq 0$ for any $k\neq 0$, it follows that $\frac{d}{dt} p_0(t) \leq 0$. We expect that in this situation, where the overlap of $\rho_S(t)$ with its initial configuration decreases, $S$ undergoes an evolution that cannot be distinguished from a Markovian one.

\subsubsection{Example: the quasi-eternal non-Markovian model}\label{Examples}
In the last section we gave a set of conditions for non-Markovian random unitary dynamics such that, if satisfied, cannot be witnessed evolving maximally entangled states. 
In the case of qubits, these conditions are given in terms of the time derivative of the probability distribution $\{p_k(t)\}_k$ that defines Eq. (\ref{RU}). In this section we consider two examples. First, we consider a model that does not satisfy the conditions given in Section \ref{IVEnne} and we check if, apart from the maximally entangled states, the evolution of random pure states can provide any backflow. Secondly, we consider a model that satisfies these conditions, i.e., cannot be witnessed by any initially maximally entangled state, and we notice that the evolution of random pure initial states do not provide any backflow of mutual information.

First of all, we consider the qubit random unitary dynamics given in Section \ref{quasieternal}, where $\alpha=2/5$ and $t_0=1$. This example satisfies the condition of physicality given by Eq. (\ref{alphat0}). Indeed, $\lim_{t\rightarrow\infty} B^{(2/5,1)}_{xyz}(t)  \simeq 0.146 >0$. Moreover, the physicality of this model is shown by the positivity of the time-dependent distribution $\{p_k(t)\}_k$, for $k=0,x,y,z$, that defines the corresponding random unitary evolution 
$$
p_x(t)=p_y(t)=\frac{1}{4}\left(1- \expp{-4t/5} \right) \, ,
$$
$$
p_z(t)= \frac{1}{4}\left(1+\expp{-4t/5} -2\expp{-2t/5} \left(\frac{\cosh(t-1)}{\cosh(1)}\right)^{2/5}\right) \, ,
$$
where $p_0(t)=1-p_x(t) - p_y(t)-p_z(t) \geq 0$ and $p_0(0)=1$.  

 We define $t_{NM}(\rho_{SA}(0))$ the time when the backflow of mutual information starts if the initial state considered is $\rho_{SA}(0)$. 
We evaluated $t_{NM}(\rho_{SA}(0))$ for $2 \cdot 10^4$ pure random states $\rho_{SA}(0)=\ketbra{\psi_{SA}}{\psi_{SA}}$ of the form
$$
\ket{\psi_{SA}}= a_1 \ket{00}_{SA} + a_2 \ket{01}_{SA} +a_3 \ket{10}_{SA} + a_4\ket{11}_{SA} \, ,
$$
where the parameters $a_i$ are normalized complex random numbers. The minimum value of $t_{NM} (\rho_{SA}(0))$ obtained has been $t_{NM}(\overline \rho_{SA}(0))\simeq 2.404$, where the values of the parameters that generates $\overline\rho_{SA}(0)$ are characterized by: $|\overline a_1|\simeq |\overline a_3| \simeq |\overline a_4| \simeq 0$ and $|\overline a_2|\simeq 1$ up to local unitary operations on $A$. Our numerical analysis does not give any insight about the possible existence of a class of initial states for which the corresponding $t_{NM}(\rho_{SA}(0))$ is arbitrarily close to $t_0=1$, i.e., the earliest time for which the intermediate map $V_{t_1,t_2}$ is P but not CP. {If there exist pure states with $t_{NM}$ closer to $t_0=1$, they must belong to a small subset that we did not sample.}
We point out that for this model, while $\frac{d}{dt} p_{x}(t)\geq 0$ and $\frac{d}{dt} p_{y}(t)\geq 0$ for any $t\geq 0$, $\frac{d}{dt} p_z(t)<0$ for $t > 1.3254$. {Indeed, the evolution of a maximally entangled state $\ket{\phi^+}_{SA}=(\ket{00}_{SA}+\ket{11}_{SA})/\sqrt{2}$ shows a backflow of mutual information with $t_{NM}(\ketbra{\phi^+}{\phi^+})\simeq 2.741$.}

Finally, we have studied the eternal non-Markovian model given by $\alpha = 1 $ and $t_0=0$. Interestingly, in this case $\frac{d}{dt} p_{x}(t)\geq 0$, $\frac{d}{dt} p_{y}(t)\geq 0$ and $\frac{d}{dt} p_{z}(t)\geq 0$ for any $t\geq 0$. Therefore, this model satisfies the conditions given in Section \ref{IVEnne}, i.e., maximally entangled states that are evolved by this dynamics do not show backflows of mutual information. In our numerical simulations no backflow of mutual information has been observed evolving $10^3$ random pure initial states.

\subsection{Non-maximally entangled states help to detect non-Markovian dynamics}\label{IVAlabel}

The purpose of this section is to examine, through a concrete example, how the use of non-maximally entangled states can be of benefit for the detection of non-Markovian effects. This example also serves as an illustration of the witnessing potential of the mutual information for non-Markovian dynamics that are non-unital and not P-divisible. 


The model we consider is a generalized amplitude damping channel (GADC) with two time dependent parameters, defined by the following set of Kraus operators
\begin{align}
K_1(t) &= \sqrt{s(t)}
\left( \begin{array}{cc}
1 & 0 \\
0 & \sqrt{r(t)}
\end{array} \right),  \nonumber\\ 
K_2(t) &= \sqrt{s(t)}
\left( \begin{array}{cc}
0 & \sqrt{1-r(t)} \\
0 & 0
\end{array} \right),  \nonumber\\ 
K_3(t) &= \sqrt{1-s(t)}
\left( \begin{array}{cc}
\sqrt{r(t)} & 0 \\
0 & 1
\end{array} \right),  \nonumber\\
K_4(t) &= \sqrt{1-s(t)}
\left( \begin{array}{cc}
0 & 0 \\
\sqrt{1-r(t)}  & 0
\end{array} \right),
\end{align}
where $s(t)=\cos^2(5t)$ and $r(t)=\expp{-t}$. The evolution induced by these operators is equivalent to that described by a generator $\mathcal{L}_t$ of the form given in Eq.(\ref{geno}) with $G_-=\sigma_-$ and $G_+=\sigma_+$ and the respective time-dependent rates
\begin{equation}\label{gammam}
\gamma_-(t)=  \cos^2(5t) -5 (1 -\expp{-t}) \sin(10t)   \, ,
\end{equation}
\begin{equation}\label{gammap}
\gamma_+(t)=\sin^2(5t)+5 (1 -\expp{-t}) \sin(10t) \, ,
\end{equation}
for which the following equality holds
\begin{equation}\label{1}
\gamma_-(t) + \gamma_+(t) =1 \, .
\end{equation}
In order to understand the non-Markovian behaviour of this model, it is possible to calculate the function $g(t)$ introduced in \cite{RHP}
$$
g(t)= \lim_{dt\rightarrow 0} \frac{||( \mathcal{I}_A\otimes V_{t+dt,t} )(\phi_{AS}) ||_1 - 1}{dt} \, ,
$$
where $\phi_{AS}$ is the maximally entangled state (\ref{mesSA}) and $|| \cdot ||_1$ represents the trace norm.
Indeed it is positive if and only if $V_{t+dt,t}$ is not CP. In our case $g(t)>0$ if and only if either $\gamma_-(t)$ or $\gamma_+(t)$ is negative
$$
g(t)= \frac{1}{2} \sum_{i=\pm} |\gamma_i(t)| - \gamma_i(t) =
 \left\{ \begin{array}{cc} 
-\gamma_-(t) &  t\in T^- \\
-\gamma_+(t) & t\in T^+ \\ 
 0 & \mbox{otherwise}
\end{array} \right. \, ,
$$  
where $T^\pm \equiv \{ t : \gamma_{\pm}(t) <0 \}$ are two non-overlapping sets of time intervals. We can define $T^-$ as the union of the time intervals $T_{k}^{-}\equiv (t_{in,k}^- , t_{fin,k}^-) $ when the rate $\gamma^-(t)$ is negative. If we do the same with $T^+$, we can write
$$
T^{\pm} \equiv  \bigcup_{k=1}^\infty T_{k}^{\pm} \equiv \bigcup_{k=1}^\infty \left(t_{in,k}^\pm , t_{fin,k}^\pm\right) \, .
$$
 In Ref. \cite{fanchini} the authors compare the ability of mutual information and entanglement of formation to witness non-Markovianity when a maximally entangled state is shared between $S$ and $A$ for the considered dynamics. They note that the mutual information does not show any backflow during the first time interval where dynamics is not CP-divisible, i.e., for $ t\in T^-_1$, while the entanglement of formation shows a backflow in a time interval that is a proper subset of $ T^-_1$. 
However, in order to fairly compare the witnessing potential of two different correlation measure, we must consider any possible initial state.

\begin{figure}
\includegraphics[width=0.49\textwidth]{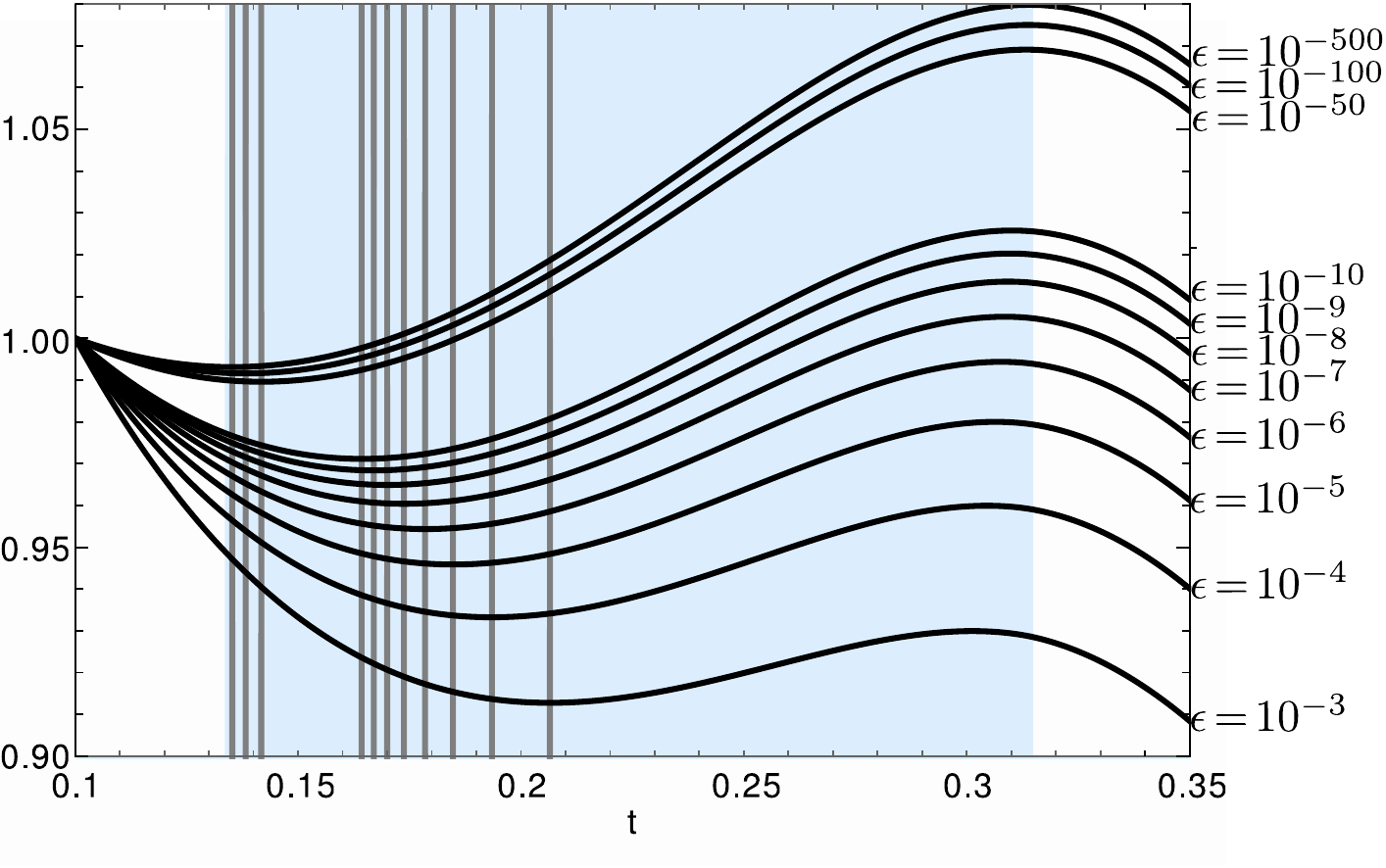}
\caption{The mutual information relative to its value at $t=0.1$, $I(\rho^-(t,\epsilon))/I(\rho^-(0.1,\epsilon))$ as a function of $t$ for values of $\epsilon$ between $10^{-3}$ and $10^{-500}$ (black curves). With successively smaller $\epsilon$ the mutual information increases in a larger part of the interval $0.13437\lesssim t\lesssim 0.31416$ where the dynamics is non CP-divisible (light blue area), and the $t$ where the mutual information begins to increase approaches the beginning of the interval (grey vertical lines). For $\epsilon=10^{-500}$ the increase in mutual information begins at $t\simeq 0.1352$.}\label{graphy}
\end{figure}

From now on we will focus on detecting backflows of mutual information for times $t\in T_1^- \simeq (0.13437, 0.31416)$: we give numerical results indicating that the mutual information can detect backflows for any $t\in T_1^-$. 
We consider as initial states the pure states 
\begin{equation}\label{psi-}
\ket{\psi^-(\epsilon)} \equiv \sqrt{1-\epsilon^2} \ket{00}_{SA} + \epsilon \ket{11}_{SA}  \, ,
\end{equation}
and provide strong evidence that these statesprovide backflows of mutual information for any time $t\in T_1^{-}$ for $\epsilon \rightarrow 0^+$. More precisely, we have observed backflows in the mutual information for $t\in \widetilde T_1^- \subset T_1^- $, where $\widetilde T_1^- \equiv ( t^-_{in,1}+\delta \tau, t^-_{fin,1}-\delta \tau )$ and $\delta\tau= 10^{-10}$.

We consider $\rho^{-}(0,\epsilon)=\ket{\psi^-(\epsilon)}\bra{\psi^-(\epsilon)}$ as the initial state of our complete system and we study its evolution $\rho^{-}(t,\epsilon)=\mathbbm{1}_A \otimes \Lambda_t(\rho^{-}(0,\epsilon))$, where $\Lambda_t$ represents the GADC described above. In Fig. \ref{graphy} we show the behaviour of $I(\rho^{-}(t,\epsilon))$ for several values of $\epsilon$. We notice that as $\epsilon$ approaches zero, the time interval where $I(\rho^- (t,\epsilon))$ is increasing widens and approaches $T^-_1$, while the amplitude of the mutual information decreases. The latter effect, and the increasingly small values of $\epsilon$, makes it difficult to numerically verify the possibility to witness a backflow of mutual information for  $t\in T^-_1$ arbitrarly close to $t_{in,1}^-$ and $t_{fin,1}^-$.

\begin{figure}
\includegraphics[width=0.45\textwidth]{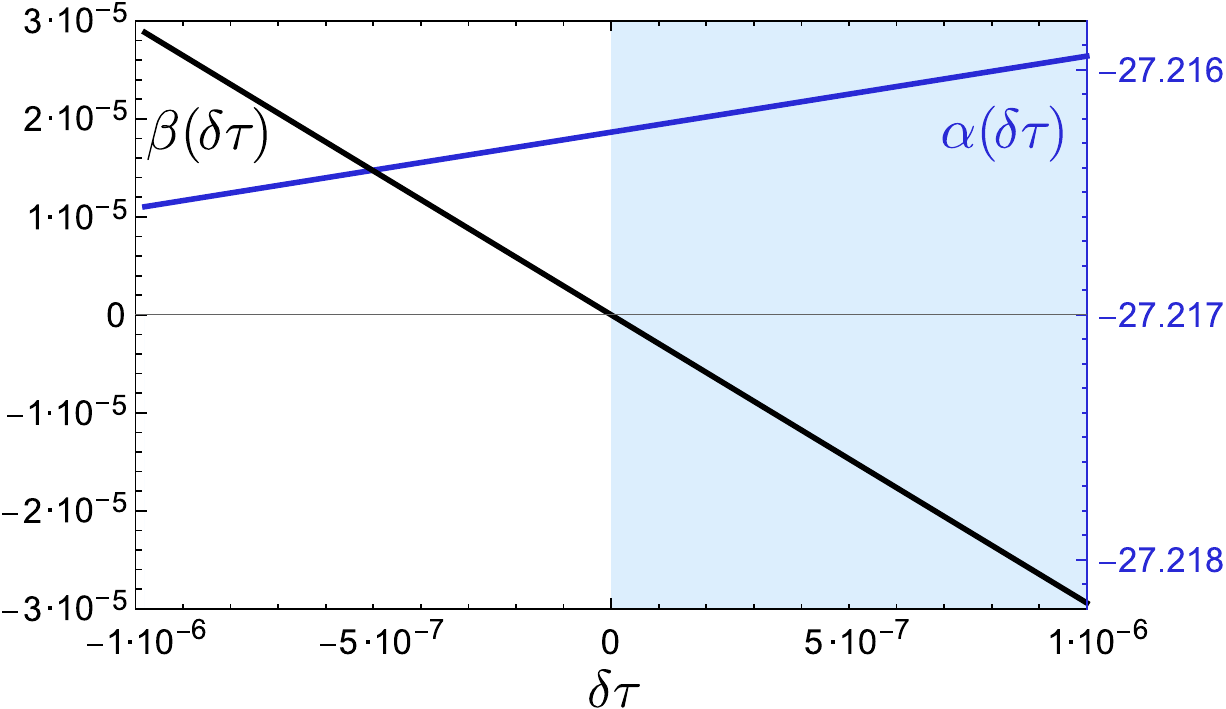}
\caption{The coefficients $\alpha(\delta \tau)$ (dark blue curve) and $\beta(\delta \tau)$ (black curve) of the leading order of the series expansion of $\frac{d}{dt}I(\rho^-(t,\epsilon))$ in $\epsilon$ as functions of $\delta \tau$ for $-10^{-6}\leq\delta\tau\leq 10^{-6}$. The dynamics is non-Markovian for $\delta \tau> 0$ (light blue area). For $\delta \tau\leq 0$ the coefficient $\beta(\delta \tau)$ is non-negative while $\alpha(\delta \tau)$ is negative and therefore the leading order term of the expansion is negative for any $\epsilon$.
For $\delta \tau >0$ both $\beta(\delta \tau)$ and $\alpha(\delta \tau)$ are negative and therefore, for sufficiently small $\epsilon$ the leading order term of the expansion is positive. For $\delta \tau=0$ the value of $\beta(\delta \tau)$ is zero to within numerical precision. 
}\label{agraph}
\end{figure}

To better understand the behaviour of $ I(\rho^-(t, \epsilon))$ when $t \simeq t_{in,1}^-$, we consider a series expansion  in $\epsilon$ of the time derivative of this quantity for times close to the beginning of $T_1^-$, e.g., for $|\delta\tau| \equiv| t-t_{in,1}^-| \leq 10^{-6}$. We find
$$
\frac{d}{dt} I(\rho^-(t_{in,1}^- + \delta\tau,\epsilon))=(\alpha(\delta\tau)+\beta(\delta\tau)\ln(\epsilon))\epsilon^2+\mathcal{O}(\epsilon^3) \, .
$$
This expansion (Fig. \ref{agraph}) is characterized by $\alpha (\delta\tau)<0$ and the relation sign$(\beta(\delta \tau))=-\,$sign$(\delta \tau)$, which has been verified up to $\delta\tau=\pm10^{-10}$.  An analogous result is obtained when $t \simeq t_{fin,1}^-$. For this case $\alpha (\delta\tau)$ is negative and sign$(\beta(\delta \tau))=\,$sign$(\delta \tau)$. In summary, the numerical analysis indicates that for any value of $t \in T_1^-$ there exists a positive number $\epsilon_t$ such that, if  $\rho_{SA}(0)=\rho^- (0,\epsilon)$, we have a backflow of mutual information at time $t$, i.e., $\frac{d}{dt} I(\rho^-(t,\epsilon)) >0$, for any $0<\epsilon < \epsilon_t$.

We focused just on the first time interval of non-Markovianity, i.e.,  $T_1^-$, because in this case mutual information does not show any backflow for the maximally entangled state. Indeed, given a value of $\epsilon>0$, the fraction of the time interval of $T_1^-$ for which $\frac{d}{dt} I (\rho^- (t,\epsilon))>0$ is smaller than the one that we have for $T_k^-$ when $k\geq 2$. Similar results are obtained for $T_1^+$ if we consider the initial state $\rho^+(0,\epsilon)\equiv \ketbra{\psi^{+} (\epsilon)}{\psi^{+} (\epsilon)}$ for $\epsilon>0$ that approaches to zero, where $\ket{\psi^+(\epsilon)} \equiv \epsilon \ket{00}_{SA} + \sqrt{1-\epsilon^2} \ket{11}_{SA}$. We underline that we cannot perform this numerical analysis for each $T_k^\pm$ since there is an infinite number of such intervals.

The results in this section demonstrate that it appears difficult to fully determine when a given non-Markovian dynamics experiences correlation backflow in terms of the mutual information, as one needs to consider all possible initial states. In fact, to our knowledge, it is not even clear if one can restrict the study to pure states. Despite all these difficulties, in the next sections we construct examples of non-Markovian dynamics where it can be proven that no backflow in the mutual information takes place.

\subsection{Non-Markovian dynamics with no backflows of mutual information}

We have analyzed several situations where correlations backflows as measured by the mutual information detect non-Markovianity, including explicit examples of non-P-divisible, P-divisible, unital and non-unital non-Markovian evolutions. In this section we construct, using the ideas introduced in~\cite{previous}, examples of non-Markovian evolutions for qubits that cannot be witnessed by backflows of mutual information. The key feature for the construction of these examples is that for large enough times all evolved states are contained in a neighborhood of the stationary state, that is, by choosing a large enough evolution time, the image of the dynamical map can be made as close as desired to its stationary state. Moreover, there are examples of evolutions in which it is possible to tune when non-Markovian effects take place. In particular, they can be made arbitrarily close to the stationary state. To understand the presence of correlation backflows for these maps it is enough to perform a perturbative study of the chosen correlation measure, the mutual information in this case, around the stationary state. This hugely simplifies the analysis and allows us to conclude the absence of correlation backflow in the mutual information for some non-Markovian dynamics. Examples of such maps are for instance shown in Sections \ref{IVCmeth} and \ref{IVDrand} for P-divisible evolutions of qubits.

\subsubsection{Example: the quasi-eternal non-Markovian model}\label{EE}

We start by presenting examples of channels in which it is possible to tune when non-Markovian effects take place.
In Section \ref{quasieternal} we introduced a class of non-Markovian random unitary evolutions \footnote{Since the rates of Eq. (\ref{gammaxyz}), for the values of $\alpha>0$ and $t_0\geq 0$ that satisfy the  condition of physicality (see Eq. (\ref{alphat0})), define a class of non-Markovian and unital dynamics for qubits which are P-divisible, the results given in Section \ref{IVBunit} do not apply. } where the rates that define the corresponding P-divisible dynamical maps, see Eqs. (\ref{randomunit}), are given by Eq. (\ref{gammaet}), i.e.,
\begin{equation}\label{gammaxyz}
\left\{ \gamma_x(t), \gamma_y(t), \gamma_z^{(t_0)}(t) \right\} = \frac{1}{5}\left\{ 1, 1, - \mbox{tanh}(t-t_0) \right\} \, .
\end{equation}
where we fixed $\alpha=2/5$.

We define $\Lambda_t^{(t_0)}$ to be the qubit dynamical map induced by the rates given in Eq. (\ref{gammaxyz}).
Moreover, let $V_{t_1,t_2}^{(t_0)}$ be the intermediate map of $\Lambda_t^{(t_0)}$ in the time interval $(t_1,t_2)$. 
 We define the image of the dynamical map $\Lambda_t^{(t_0)}$, i.e., the set of the accessible states of the evolution at time $t$, to be
\begin{equation}
\mbox{Im}(\Lambda_t^{(t_0)})\!=\! \{ \sigma\in S(\mathcal{H}_S) : \sigma= \Lambda_t^{(t_0)} ( \rho) \mbox{ for some } \rho\in S(\mathcal{H}_S) \}  .
\end{equation}
In Section \ref{Examples} we showed that $t_0=1$ and $\alpha=2/5$ define a physical evolution. Indeed, from Eq. (\ref{alphat0}), $\Lambda_t^{(t_0)}$ is a CPTP map for any $t\geq0$ and $t_0\geq T^{(2/5)}\simeq 0.769$.

The structure of this section is the following. First, we pick $t_0=t_0'$ to define $\Lambda_t^{(t_0')}$ and we consider an intermediate map $\overline V$ that occurs in the time interval $(t_1,t_2)$, where $t_1>t_0'$. Therefore, we consider a larger value of $t_0$, e.g., $t_0''>t_0'$, and we show that the same $\overline V$ occurs in a time interval which is delayed by the factor $\Delta {t_0}=t_0''-t_0'$, i.e., in $(t_1+\Delta {t_0},t_2+\Delta {t_0})$. Secondly, once we understand that varying the value of $t_0$ the same intermediate map $\overline V$ can be delayed as much as needed, we study the image of the dynamics for arbitrary large values of $t_0$. The final purpose is to show that, increasing enough the value of $t_0$, $\mbox{Im}(\Lambda_t^{(t_0)})$ is contained in an arbitrarily small neighbor of the stationary state.

\begin{figure}
\includegraphics[width=0.49\textwidth]{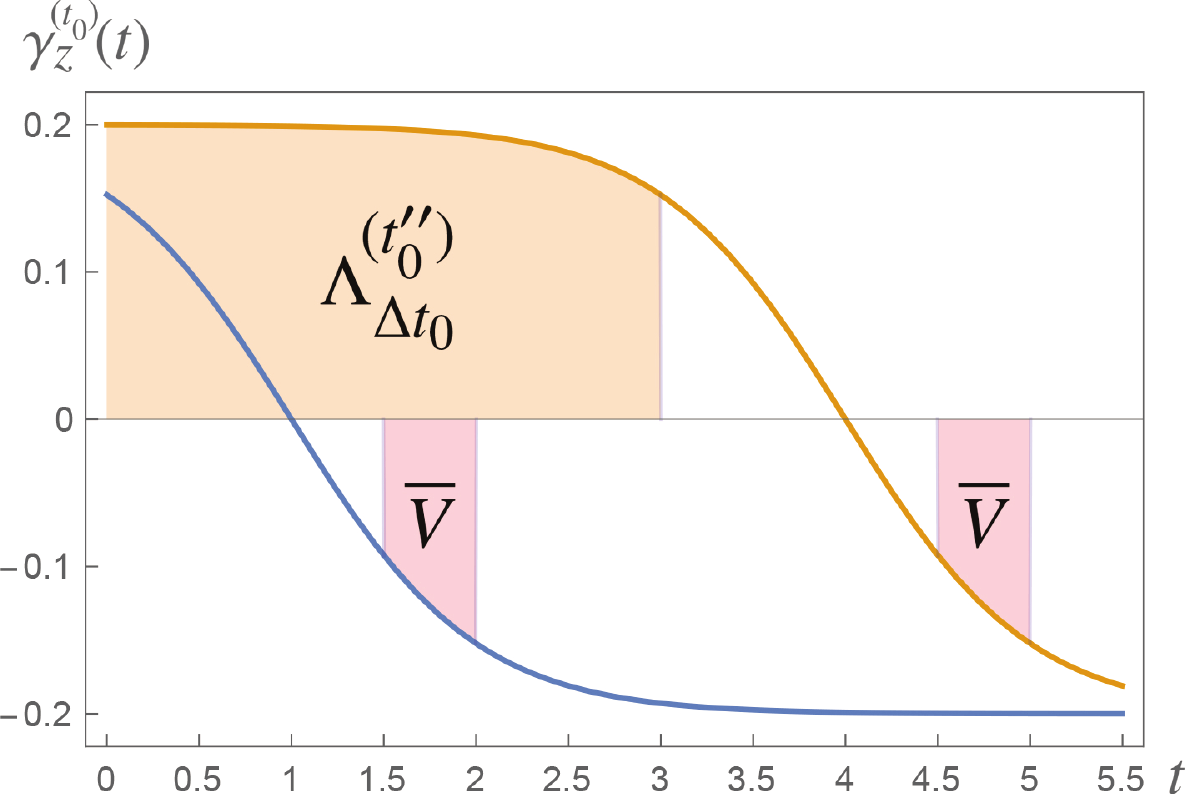}
\caption{The rate $\gamma_z^{(t_0)}(t)=-\tanh(t-t_0)/5$ of the quasi-eternal non-Markovian model for (blue) $t_0=t_0'=1$ and (orange) $t_0=t_0''=4$. The difference between $\Lambda_t^{(t_0')}$ and $\Lambda_t^{(t_0'')}$, since $\gamma_x(t) = \gamma_y(t)=1/5$ in both cases, is characterized by the different values of the integral of $\gamma_z^{(t_0)}(t)$ (see Eq. (\ref{randomunit})).
Let $\overline V$ be the P intermediate map of $\Lambda_t^{(t_0')}$ for the time interval $(t'_1,t'_2)=(1.5,2)$. Givent that $\Delta t_0=t_0''-t_0'=3$, $\overline V$ is also an intermediate map of $\Lambda_t^{(t_0'')}$, where now it occurs in a time interval shifted by the factor $\Delta t_0$, i.e., in $(t_1'',t_2'')=(4.5,5)$. The difference between the images of the two maps before the action of $\overline V$ is given by the contractive action of $\Lambda_{\Delta t_0}^{(t_0'')}$.}\label{Vbar}
\end{figure}

Following the steps we have described, we start considering the dynamical maps $\Lambda_t^{(t_0')}$ and $\Lambda_t^{(t_0'')}$, where $1 \leq t_0'<t_0''$ and $\Delta t_0\equiv t_0''-t_0'>0$. From Eqs. (\ref{randomunit}) and (\ref{ranu}), since the rates that defines the two evolutions differ by a simple time-shift (see Eq. (\ref{gammaxyz})), we can express the intermediate maps of $\Lambda_t^{(t_0'')}$ starting from time $\Delta t_0$ in terms of the dynamical map $\Lambda_t^{(t_0')}$. Indeed, for $\Delta t_0<t$
\begin{equation}\label{comp1}
\Lambda_{t-\Delta t_0}^{(t_0')}=V_{t,\Delta t_0}^{(t_0'')} \, ,
\end{equation}
As a consequence, if  $\Delta t_0<t$, the dynamical map $\Lambda_t^{(t_0'')}$ itself can be expressed as the composition of $\Lambda_{\Delta t_0}^{(t_0'')}$ and $\Lambda_{t-\Delta t_0}^{(t_0')}$
\begin{equation}\label{comp2}
\Lambda_t^{(t_0'')}=\Lambda_{t-\Delta t_0}^{(t_0')} \Lambda_{\Delta t_0}^{(t_0'')} \, .
\end{equation}
Indeed, composing Eqs. (\ref{comp1}) and (\ref{comp2}), we obtain the definition of the intermediate map $V_{t,\Delta t_0}^{(t_0'')}$, i.e., $\Lambda_t^{(t_0'')}=V_{t,\Delta t_0}^{(t_0'')} \Lambda_{\Delta t_0}^{(t_0'')}$. 

Now, we consider a time interval $(t'_1,t'_2)$ such that $t_0'<t'_1<t'_2$. In this time interval,  $\gamma_z^{(t_0')}(t)<0$  and the intermediate map $\overline{V}\equiv V_{t'_2,t'_1}^{(t_0')}$ is P but not CP  (see Section \ref{RUD}). From the results obtained above, it is clear that the action of $\overline V$ can also be obtained as the intermediate map of $\Lambda_t^{(t_0'')}$ that occurs in the shifted time interval $(t_1'',t_2'')\equiv (t'_1+\Delta t_0, t'_2 + \Delta t_0)$. Therefore, we have the identity
\begin{equation}\label{comp3}
\overline{V}= V_{t'_2,t'_1}^{(t_0')}= V_{t''_2,t''_1}^{(t_0'')} \, .
\end{equation}
Therefore, if we want to witness the non-Markovian effect of $\overline V$ while we consider $\Lambda_{t}^{(t_0')}$, its action occurs from time $t'_1$. Instead, considering $\Lambda_{t}^{(t_0'')}$,   $\overline V$ occurs from the delayed time $t_1''=t'_1+\Delta t_0$.

Having established that, for both $\Lambda_{t'_1}^{(t_0')}$ and $\Lambda_{t''_1}^{(t_0'')}$, the intermediate map of the evolution that follows for a time interval that lasts $t_2-t_1$ is $\overline V$, we proceed checking the images of the respective preceding evolutions, i.e., Im$(\Lambda_{t'_1}^{(t_0')})$ and Im$(\Lambda_{t''_1}^{(t_0'')})$.
The difference between the two images is given by the (contractive) CP map $\Lambda_{\Delta t_0}^{(t_0'')}$.
In order to understand the effect of this map, since $  \Delta t_0<t_0''$, the rates that define $\Lambda_{\Delta t_0}^{(t_0'')}$ through Eqs. (\ref{randomunit}) are positive in the time interval $(0,\Delta t_0)$. Therefore, the action of $\Lambda_{\Delta t_0}^{(t_0'')}$ is a "global" contraction, i.e., using Eq. (\ref{lambdaAij}) for $\Lambda_{ \Delta t_0}^{(t_0'')}$,
\begin{equation}\label{comp5}
\Lambda_{\Delta t_0}^{(t_0'')} (\sigma_i)= \lambda_i^{(t_0'')}(\Delta t_0) \sigma_i \, ,
\end{equation}
 where $\lambda_i^{(t_0'')}(\Delta t_0) < 1$ for  $i=x,y,z$ \footnote{In the example of Section \ref{ex1} we have studied a dynamics that, depending on the sign of $\gamma(t)$, can be Markovian or not.  Eqs. (\ref{dephasingEx}) shows that $\sigma_z$ is not contracted and therefore $\lambda_z=1$. If we consider the intermediate map of this evolution in a time interval where $\gamma(t)>0$, the action of this map is not a ``global'' contraction because $\lambda_z$ is never smaller than one.}. Moreover, since $\Lambda_{t'_1}^{(t_0')}$ is CP, we can write
\begin{equation}\label{comp6}
\Lambda_{t'_1}^{(t_0')} (\sigma_i)= \lambda_i^{(t_0')}(t'_1) \sigma_i \, ,
\end{equation}
where $\lambda_i^{(t_0')}(t_1') < 1$ for  $i=x,y,z$. Considering Eqs.  (\ref{comp5}) and (\ref{comp6}) in Eq. (\ref{comp2}), we obtain
\begin{equation}\label{comp6}
\Lambda_{t''_1}^{(t_0'')} (\sigma_i)=\lambda_i^{(t_0'')}(t''_1) \sigma_i=\lambda_i^{(t_0')}(t'_1)   \lambda_i^{(t_0'')}(\Delta t_0) \sigma_i \, ,
\end{equation}
where we remember that $t_1''=t'_1+\Delta t_0$.
Since $\lambda_i^{(t_0'')}(t_1'') < \min \{ \lambda_i^{(t_0')}(t'_1),\lambda_i^{(t_0'')}(\Delta t_0) \} $, we conclude that
\begin{equation}\label{comp4}
\I(\Lambda_{t''_1}^{(t_0'')})=\I (\Lambda_{t'_1}^{(t_0')} \Lambda_{\Delta t_0}^{(t_0'')} ) \subset \I (\Lambda_{t'_1}^{(t_0')} )  \, .
\end{equation}

The next logic step is to fix the value of $t_0'=1$ and increase $t_0''$ (and consequently $\Delta t_0=t_0''-t_0'$) in order to, at the same time, delay the occurrence of $\overline V$ for $\Lambda_t^{(t_0'')}$ and show that we can make the evolution that precedes its action more and more contractive (see Fig. \ref{Vbar}). 
Indeed, as showed above, if $t'_1$ is the time from which $\overline V$ occurs for $\Lambda_t^{(t_0')}$, the same intermediate map  (see Eq. (\ref{comp3})) occurs for $\Lambda_t^{(t_0'')}$ from time  $t_1''= t_1' + \Delta t_0$. Therefore, increasing $t_0''$, we delay more and more the time of occurrence $t_1''$ of the intermediate P map $\overline V= V_{t_2'',t_1''}^{(t_0'')}$.

Now, we prove that for any $\epsilon>0$ {there exists} a value of $t''_0$ such that the action of $\Lambda_{t_1''}^{(t''_0)}$ is characterized by  $\lambda_i^{(t''_0)}(t''_1)< \epsilon$ for each $i=x,y,z$. From Eqs. (\ref{randomunit}) and (\ref{gammaxyz}) it is easy to check that $\lambda_x^{(t''_0)}(t)=\lambda_y^{(t''_0)}(t)>\lambda_z^{(t''_0)}(t)$ for any $t>0$. Indeed,
\begin{equation}\label{lx}
\lambda_x^{(t''_0)}(t_1'')=\left( \expp{-{t_1''}} \frac{\cosh(t_1''-t_0'')}{\cosh(t_0)} \right)^{1/5} \!\!< \left( 2 \expp{-2t''_0}\right)^{1/5} \!\! \, ,
\end{equation}
\begin{equation}\label{lz}
 \lambda_z^{(t''_0)}(t_1'')=\left( 2 \expp{-2t''_0}\right)^{1/5} \left( \frac{\expp{-2(t'_1-1)}}{2} \right)^{1/5} \!\!\! \, .
\end{equation}
Therefore, for any $\epsilon>0$, if the following condition is satisfied
\begin{equation}\label{t0eps}
t_0''>\log \sqrt{2/\epsilon^5} \, ,
\end{equation} 
we have $\lambda_z^{(t_0'')}(t_1'')<\lambda_x^{(t_0'')}(t_1'')=\lambda_y^{(t_0'')}(t_1'')<\epsilon$. 

We want to understand the effects on the set of accessible states of $\Lambda_{t_1''}^{(t_0'')}$ that we obtain when $t_0''$ is increased over the bound given by Eq. (\ref{t0eps}). Therefore, we consider a generic initial state $\rho_S(0)=(\mathbbm{1}_S+\mathbf{v}(0)\cdot \bm{\sigma})/2$, represented by the Bloch vector $\mathbf{v}(0)=(v_x(0),v_y(0),v_z(0))$, where in the vector $\bm{\sigma}=(\sigma_x,\sigma_y,\sigma_z)$ we collect the Pauli matrices. We evolve this qubit state with the dynamical map $\Lambda_t^{(t_0'')}$, where the condition of Eq. (\ref{t0eps}) is satisfied for some $\epsilon>0$. At time $t=t_1''$ the Bloch vector is evolved to $\mathbf{v}(t_1'')=(v_x(t_1''),v_y(t_1''),v_z(t_1''))$. From Eq. (\ref{lx}) and (\ref{lz}), it is straightforward to show that $\max_i v_i(t_1'')<\epsilon \max v_i(0)$. In particular 
\begin{equation}
||\rho_S(t_1'') - {\mathbbm{1}_S}/{2}||_1= \frac{1}{2} \sqrt{ \sum_i v^2_i(t_1'') }<\frac{\epsilon }{2}  \, .
\end{equation}
In other words, Im$(\Lambda_{\tau(t_0)}^{(t_0)})$ is inside a neighbor of radius of order $\epsilon$ centered in the maximally mixed state $\mathbbm{1}_S/2$, namely the stationary state of the evolution. We can pick arbitrary small values of $\epsilon$ by considering arbitrary large values of $t_0''$ (and consequently $t_1''=t_1'+ t_0''-t_0'$), i.e., the time after which $\gamma_z^{(t_0'')}(t)<0$ and consequently $\Lambda_t^{(t_0'')}$ ceases to be CP-divisible.

Finally, we consider the mutual information $I$ of two qubits $A$ and $S$, where $A$ is an ancilla and $S$ is evolved by the dynamical map $\Lambda_t^{(t_0'')}$. In this scenario, we want to witness a backflow of $I$ in the time interval $(t_1'',t_2'')$, i.e., when the P intermediate map $\overline V$ evolves $S$.
Therefore, we increase the value of $t_0''$ until the image Im$(\mathcal{I}_A\otimes\Lambda_{t_1''}^{(t_0'')})$ 
of the evolution that precedes the action of $\mathcal{I}_A\otimes \overline V$ is in a neighbor of radius $\epsilon$ of the stationary states of the dynamics, namely $\rho_A\otimes \mathbbm{1}_S/2$, where $\rho_A$ is any state of $S(\mathcal{H}_A)$. At time $t_1''$ we have that
\begin{itemize}
\item The states in Im$(\mathcal{I}_A\otimes \Lambda_{t_1''}^{(t''_0)})$ are $\epsilon$-close from the stationary states of the evolution.
\item The evolution in the following time interval $(t_1'', t_2'')$ is described by the P, but not CP, intermediate map $\mathcal{I}_A\otimes \overline V$.
\item The previous point holds for arbitrary small time intervals $(t''_2-t''_1)$.
\end{itemize}
To sum up, there exist dynamics in which non-Markovian effects take place {only} arbitrarily close to the stationary state. For these cases, to understand the existence of correlation backflows in the mutual information it suffices to perform a perturbative analysis of this quantity around the stationary state. This is precisely what we do in what follows, Sections \ref{IVCmeth} and \ref{IVDrand}, for a random unitary dynamics for qubits. The calculations are lengthy and can be skipped by readers not interested in the technical details, as they are not needed for the rest of the article. However, this perturbative analysis allows us to conclude that the studied non-Markovian dynamics do not present any increase of the mutual information for any initial state in $S(\mathcal{H}_A\otimes \mathcal{H}_S)$.

\subsection{Taylor expansion of the time derivative of the mutual information}\label{IVCmeth}

{To study the time dependence of the mutual information perturbatively we here outline how the time derivative of the mutual information $\frac{d}{dt}I(\rho(t))\equiv\frac{d}{ds}I(\bar{a},V_{s,t})\big{|}_{s=t}$ in a neighbourhood of a given state $\rho$ can be described by a Taylor expansion in the coordinates $a_i$. In particular we consider Taylor expansions as a tool to investigate the neighborhoods of stationary states.}

{The mutual information as a functon on the set of states $S(\mathcal{H}_A\otimes \mathcal{H}_S)$ is analytic for all states $\rho$ of full rank, i.e., everywhere in the interior of the set of states, here denoted $\Int[S(\mathcal{H}_A\otimes \mathcal{H}_S)]$. Thus, in any open neighborhood $U\subset \Int[S(\mathcal{H}_A\otimes \mathcal{H}_S)]$ of a state $\rho$ the mutual information equals its Taylor series and we can use Taylor expansions to analyze it perturbatively around $\rho$. Moreover, if the dynamics is differentiable the time derivative of any analytic correlation measure is analytic as well.}

\begin{proposition}
Let $M$ be a correlation measure that is analytic at $\bar{a}$. If $\frac{d V_{s,t}}{ds}\big{|}_{s=t}$ is well defined it follows that $\frac{d}{dt}M(\bar{a},t)\equiv\frac{d}{ds}M[\bar{a},V_{s,t}]\big{|}_{s=t}$ is analytic at $\bar{a}$.
\end{proposition}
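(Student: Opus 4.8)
The plan is to reduce the statement to the chain-rule identity~(\ref{formul}) that was established above. By hypothesis $\frac{dV_{s,t}}{ds}\big{|}_{s=t}$ is well defined, so $\mathcal{I}_A\otimes V_{s,t}$ is differentiable in $s$ at $s=t$ and, because it acts linearly on the coordinates through the matrix elements $V_{ij}(s,t)$ (see Eqs.~(\ref{comp})--(\ref{stateAS}) and~(\ref{kliy})), we may write
\begin{equation}
\frac{d}{dt}M(\bar{a},t)=\sum_{ij}\frac{\partial M(\rho)}{\partial a_i}\,a_j\,\frac{dV_{ij}(s,t)}{ds}\Big|_{s=t}\, .
\end{equation}
The key observation is that, for each fixed $t$, the coefficients $\frac{dV_{ij}(s,t)}{ds}\big{|}_{s=t}$ are real numbers depending only on the generator of the evolution and not on the state $\bar{a}$. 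Hence, viewed as a function of $\bar{a}$ in a neighbourhood of the given point, the right-hand side is a finite linear combination, with constant coefficients, of the functions $\frac{\partial M(\rho)}{\partial a_i}\,a_j$.

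Next I would invoke three elementary facts about real-analytic functions: if $M$ is analytic at $\bar{a}$ then each partial derivative $\partial M/\partial a_i$ is analytic at $\bar{a}$ (term-by-term differentiation of the convergent power series in a neighbourhood); the coordinate functions $a_j$ are polynomials and hence analytic everywhere; and the class of functions analytic at a common point is closed under multiplication by constants, under finite products, and under finite sums. Applying these facts to the displayed expression---which contains at most $(d_A^2 d_S^2)^2$ terms---shows that $\frac{d}{dt}M(\bar{a},t)$ is analytic at $\bar{a}$, which is the claim.

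There is no deep obstacle in this argument; the only point that deserves to be made explicit is that the matrix elements $\frac{dV_{ij}(s,t)}{ds}\big{|}_{s=t}$ act as constants with respect to the Taylor variable $\bar{a}$, so that the entire $\bar{a}$-dependence of the time derivative is carried by the analytic factors $\partial M/\partial a_i$ and $a_j$. One should also keep in mind that ``analytic'' is meant here in the real-analytic sense on the relevant coordinate chart of $S(\mathcal{H}_A\otimes\mathcal{H}_S)$, in keeping with the way analyticity of $M$ (for instance of the mutual information on the interior of the set of states) has been used up to this point.
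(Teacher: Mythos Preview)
Your proof is correct and follows essentially the same approach as the paper: both write the time derivative via the chain-rule identity as a finite linear combination (with the $\bar a$-independent coefficients $\frac{dV_{ij}(s,t)}{ds}\big|_{s=t}$) of the products $a_j\,\partial M/\partial a_i$, and then invoke closure of real-analytic functions under differentiation, multiplication, and finite sums. Your version is somewhat more explicit about the analytic-function facts being used, but the argument is the same.
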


\begin{proof}
The time derivative $\frac{d}{dt}M(\bar{a},t)\equiv\frac{d}{ds}M[\bar{a},V_{ij}(s,t)]\big{|}_{s=t}$ can be expressed as $\frac{d}{dt}M(\bar{a},t)=\sum_{i,j}a_j\frac{d V_{ij}(s,t)}{ds}\big{|}_{s=t}\frac{\partial}{\partial a_i}M(\bar{a},t)$.
Assume that $\frac{d V_{ij}(s,t)}{ds}\big{|}_{s=t}$ is well defined for each $i,j$. Then,
since products, linear combinations, and derivatives of analytic functions are analytic it follows that $\frac{d}{dt}M(\bar{a},t)$ is analytic as a function of $\bar{a}$. 

\end{proof}

{Thus, in particular, if $V_{s,t}$ is differentiable $\frac{d}{dt}I(\rho(t))$ can be described perturbatively in any open neighborhood of a state in $\Int[S(\mathcal{H}_A\otimes \mathcal{H}_S)]$ by a Taylor expansion. On the other hand, for states of less than full rank, i.e., states on the boundary of $S(\mathcal{H}_A\otimes \mathcal{H}_S)$, the partial derivatives in the coordinates $a_i$ need not even be well defined to all orders.}

{Now consider a stationary state $\rho_0$ of a linear divisible dynamic described by $\Lambda_t$ which is of full rank, i.e. $\rho\in\Int[S(\mathcal{H}_A\otimes \mathcal{H}_S)]$.
Since $\frac{d}{dt}I(\rho_0(t))=0$ the sign of $\frac{d}{dt}I(\rho(t))$ in a neighborhood of $\rho_0$ is determined by the terms of order greater than zero in the Taylor expansion of $\frac{d}{dt}I(\rho(t))$.}

{In general $\frac{d}{dt}I(\rho(t))$ may take both positive and negative values in a neighbourhood of $\rho_0$. If a neighborhood of $\rho_0$ exists where $\frac{d}{dt}I(\rho(t))$ is everywhere non-negative, or alternatively everywhere non-positive, depends only on $\rho_0$ and on $\frac{d}{ds} V_{s,t}\big{|}_{s=t}$. 
In particular, the properties of the neighborhood is independent of the previous dynamic $\Lambda_t$ and the properties of the image of $\Lambda_t$ since we assumed linear divisibility of the dynamics.}

{This last observation allows us to make the following two statements about the change of the mutual information.
If there is some neighborhood of $\rho_0$ where $\frac{d}{dt}I(\rho(t))$ is somewhere positive, and this neighbourhood is contained in the image of $\Lambda_t$, we can observe an increase of the mutual information. 
If there is some neighborhood of $\rho_0$ where $\frac{d}{dt}I(\rho(t))$ is non-positive, and the image of $\Lambda_t$ is contained in this neighborhood, we can not observe any increase  of the mutual information.}

\subsubsection{Neighbourhoods of critical points}

{In the case that one or more first derivatives are zero one must consider higher order terms of the Taylor expansion to study how the sign of $\frac{d}{dt}I(\rho(t))$ behaves in a neighbourhood of a stationary state $\rho_0$. In particular this is true if all first derivatives with respect to the $a_i$ are zero, i.e., if $\rho_0$ is a critical point of $\frac{d}{dt}I(\rho(t))$.}
The relevance of considering critical points in relation to stationary states can be understood from the following two observations.
For any continuously differentiable dynamics a product state in the interior of the set of states is a critical point of $\frac{dM}{dt}$ if $M$ is analytic.

\begin{proposition}\label{prop3}
Let $M$ be a correlation measure that is analytic at a state $\rho$.
Assume that $V_{s,t}$ is continuously differentiable.
Then if $\rho\in\Int[ S(\mathcal{H}_A\otimes\mathcal{H}_S)]$ and is a product state it is a critical points of $\frac{dM}{dt}$.
\end{proposition}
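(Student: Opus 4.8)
The plan is to compute the gradient of the time derivative $\tfrac{dM}{dt}$ with respect to the coordinates $a_i$ and show it vanishes at the coordinate vector $\bar a^{(0)}$ of the product state $\rho=\rho_A\otimes\rho_S$. Starting from Eq.~(\ref{formul}) I would write $G(\bar a)\equiv\frac{d}{dt}M(\bar a,t)=\sum_{ij}\frac{\partial M(\bar a)}{\partial a_i}\,a_j\,w_{ij}$, where the numbers $w_{ij}\equiv\frac{dV_{ij}(s,t)}{ds}\big|_{s=t}$ are well defined since $V_{s,t}$ is continuously differentiable. Differentiating,
\begin{equation}
\frac{\partial G}{\partial a_k}(\bar a)=\sum_i w_{ik}\,\frac{\partial M(\bar a)}{\partial a_i}+\sum_i v_i(\bar a)\,\frac{\partial^2 M(\bar a)}{\partial a_i\,\partial a_k},
\end{equation}
where $v_i(\bar a)\equiv\sum_j w_{ij}a_j$ are the components of the velocity $\frac{d}{ds}\rho(s)\big|_{s=t}=(\mathcal I_A\otimes\mathcal L_t)(\rho)$. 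It then remains to show that both sums vanish at $\bar a=\bar a^{(0)}$.

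For the first sum I would invoke the defining properties of a correlation measure: $M\geq 0$ everywhere and $M=0$ on product states, so $\rho$ is a global minimum of $M$; since $\rho\in\Int[S(\mathcal H_A\otimes\mathcal H_S)]$ and $M$ is analytic (hence differentiable) there, this is an unconstrained interior minimum, which forces $\frac{\partial M}{\partial a_i}(\bar a^{(0)})=0$ for all $i$. Hence the first sum is identically zero, and $\nabla G(\bar a^{(0)})$ reduces to $H v$, where $H=\big(\frac{\partial^2 M}{\partial a_i\partial a_k}(\bar a^{(0)})\big)$ is the Hessian of $M$ at $\rho$ and $v=(v_i(\bar a^{(0)}))$ is the velocity.

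For the second sum I would show $Hv=0$ using two facts. First, $M$ vanishes identically on the submanifold $\mathcal P$ of product states through $\rho$: because $\rho_A,\rho_S$ are full rank, a neighbourhood of $\rho$ in $\mathcal P$ is smoothly parametrised by $\Phi(\sigma_A,\sigma_S)=\sigma_A\otimes\sigma_S$ for $\sigma_A,\sigma_S$ near $\rho_A,\rho_S$, and $M\circ\Phi\equiv 0$. Differentiating $M\circ\Phi\equiv0$ twice at the base point and using $\nabla M(\bar a^{(0)})=0$ gives $(D\Phi)^{T}H(D\Phi)=0$; since $H\succeq 0$ ($\rho$ being a local minimum of the $C^2$ function $M$), this matrix identity forces $H$ to annihilate every vector tangent to $\mathcal P$ at $\rho$. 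Second, $v$ is such a tangent vector: since $\mathcal I_A\otimes V_{s,t}$ acts only on $S$, it sends $\rho_A\otimes\rho_S$ to $\rho_A\otimes V_{s,t}(\rho_S)$, and for $s$ close to $t$ the operator $V_{s,t}(\rho_S)$ is still a full-rank state, so the trajectory $\rho(s)$ stays in $\mathcal P$ and $v=\frac{d}{ds}\rho(s)\big|_{s=t}\in T_\rho\mathcal P$. Therefore $Hv=0$, the second sum vanishes, and $\bar a^{(0)}$ is a critical point of $\frac{dM}{dt}$.

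I expect the main obstacle to be the Hessian step: one must argue carefully that ``analytic at $\rho$'' provides a genuine $C^2$ extension of $M$ to a full coordinate neighbourhood, that nearby product states really form a smooth submanifold on which $M$ is \emph{exactly} zero (which uses the full-rank, i.e.\ interior, hypothesis so that small perturbations of $\rho_A,\rho_S$ remain states), and that the implication $(D\Phi)^{T}H(D\Phi)=0\Rightarrow H(D\Phi)=0$ is correctly justified by positive semidefiniteness. The remaining ingredients --- the vanishing of $\nabla M$ at a product interior state, and the tangency of the velocity to $\mathcal P$ --- are comparatively immediate, the latter being nothing more than the statement that local operations preserve the product structure.
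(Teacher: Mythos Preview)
Your proof is correct, but it takes a different route from the paper's. The paper avoids the Hessian entirely: instead of splitting $\partial_k G$ into a gradient term and an $Hv$ term, it observes that for every $s$ near $t$ the evolved state $\mathcal I_A\otimes V_{s,t}(\rho)$ is again an interior product state, hence a global minimum of $M$, so the directional derivative $\partial_\epsilon M[\mathcal I_A\otimes V_{s,t}(\rho+\epsilon\chi)]\big|_{\epsilon=0}$ vanishes for all $s$. Differentiating this identity in $s$ and swapping the order of $\partial_\epsilon$ and $\tfrac{d}{ds}$ (justified by continuous differentiability of $V_{s,t}$) immediately gives $\partial_\epsilon G\big|_{\epsilon=0}=0$ for every direction $\chi$. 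Thus the paper uses only first-order information about $M$, applied along the whole trajectory, together with a derivative interchange. Your argument, by contrast, works at the single point $\rho$ but invokes second-order structure: $H\succeq 0$ and $M\equiv 0$ on the product submanifold force $H$ to annihilate $T_\rho\mathcal P$, and the velocity $v$ lies in $T_\rho\mathcal P$ because local maps preserve products. Both are valid; the paper's is slightly slicker, while yours makes explicit the geometric fact that the Hessian of $M$ kills the product tangent space, which is informative in its own right and connects naturally to the paper's subsequent Hessian analysis in Sect.~\ref{IVDrand}.
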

\begin{proof}
See Appendix \ref{critic2}.
\end{proof}
{Thus, in particular, all product states in $\Int [S(\mathcal{H}_A\otimes \mathcal{H}_S)]$ are critical points of $\frac{d}{dt}I(\rho(t))$.}

For qubit evolutions a stationary state in the interior of the set of states is a critical point of $\frac{dM}{dt}$ if $M$ is analytic.
\begin{proposition}\label{prop4}
Let $M$ be a correlation measure that is analytic at $\rho$ and let $\{\Lambda_{t}\}_t$ be a continuously differentiable dynamical qubit evolution.
If $\rho$ is in $\Int [S(\mathcal{H}_A\otimes \mathcal{H}_S)]$ and is a stationary state of $\{\Lambda_{t}\}_t$ it is a critical point of $\frac{dM}{dt}$.
\end{proposition}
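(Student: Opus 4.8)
The plan is to reduce the claim to a single condition on the gradient of $M$ at the stationary state, and then to settle that condition using the rigidity of qubit generators together with the local-unitary invariance of any correlation measure. Writing $\bar a=\{a_i\}$ for the coordinates of $\rho$ and $L_{ij}\equiv\frac{dV_{ij}(s,t)}{ds}\big|_{s=t}$ for the matrix elements of $\mathcal{I}_A\otimes\mathcal{L}_t$, Eq.~(\ref{formul}) gives $\frac{d}{dt}M(\rho(t))=\sum_{ij}\frac{\partial M(\bar a)}{\partial a_i}a_jL_{ij}$, so
\begin{equation*}
\frac{\partial}{\partial a_k}\Big[\frac{d}{dt}M(\rho(t))\Big]=\sum_{ij}\frac{\partial^2M(\bar a)}{\partial a_k\partial a_i}a_jL_{ij}+\sum_i\frac{\partial M(\bar a)}{\partial a_i}L_{ik}.
\end{equation*}
Since $\rho$ is stationary, $\sum_jL_{ij}a_j=0$ for every $i$ (this is just $\mathcal{I}_A\otimes\mathcal{L}_t(\rho)=0$), so the first sum vanishes and there remains $\frac{\partial}{\partial a_k}[\frac{d}{dt}M(\rho(t))]=\sum_i\frac{\partial M(\bar a)}{\partial a_i}L_{ik}$. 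Hence it suffices to prove that $\nabla M(\bar a)$ lies in the kernel of $L^{T}$, i.e. that $\nabla M(\bar a)$ is orthogonal to the image of $\mathcal{I}_A\otimes\mathcal{L}_t$.

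To establish this I would use that $\mathcal{H}_S\cong\mathbb{C}^2$. If $\mathcal{L}_t=0$ there is nothing to prove, so assume $\mathcal{L}_t\neq0$. Expanding $\rho=\sum_\mu\tau^A_\mu\otimes Y^S_\mu$ in a product basis with $\tau^A_0=\mathbbm{1}_A$, stationarity forces $Y^S_\mu\in\ker\mathcal{L}_t$ for all $\mu$, and in particular the reduced state $\rho_S=\trA{\rho}$ is a faithful fixed state of the CPTP map $\Lambda_s$ for every $s\ge0$. For a qubit, the set of operators fixed by a CPTP map possessing a faithful fixed state is governed by the structure theory of such maps, and combining this over all $s$ one shows that, unless $\rho$ is a product state or $\mathcal{L}_t=0$, this set is --- after a suitable local unitary on $S$ --- exactly the span of $\ketbra{0}{0}_S$ and $\ketbra{1}{1}_S$. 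The product case is handled by Proposition~\ref{prop3}. In the remaining case $\rho=Q^A_0\otimes\ketbra{0}{0}_S+Q^A_1\otimes\ketbra{1}{1}_S$; because $\Lambda_s$ fixes $\ketbra{0}{0}$ and $\ketbra{1}{1}$ its Kraus operators are diagonal in this basis, which forces $\mathcal{L}_t(\ketbra{0}{1})\propto\ketbra{0}{1}$, and hence in the Pauli picture $\mathcal{L}_t(\mathbbm{1}_S)=\mathcal{L}_t(\sigma_z)=0$ while $\mathcal{L}_t$ maps $\mathrm{span}\{\sigma_x,\sigma_y\}$ into itself. Consequently the image of $\mathcal{I}_A\otimes\mathcal{L}_t$ is contained in the span of the operators $\tau^A_\mu\otimes\sigma_x$ and $\tau^A_\mu\otimes\sigma_y$, whereas the coordinates of $\rho$ along all of these directions vanish.

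To finish I would invoke local-unitary invariance. A correlation measure is non-increasing under local operations and hence invariant under the reversible ones, in particular under $\mathbbm{1}_A\otimes\sigma_z$ acting by conjugation. This unitary fixes $\rho$ --- all its $S$-parts are diagonal in the $\ket 0,\ket 1$ basis --- while on coordinates it flips the sign of every $a_{\mu x}$ and $a_{\mu y}$ and leaves the rest unchanged. As $M$ is thus even under this $\mathbb{Z}_2$ action and $\bar a$ is a fixed point of it, all partial derivatives $\frac{\partial M}{\partial a_{\mu x}}(\bar a)$ and $\frac{\partial M}{\partial a_{\mu y}}(\bar a)$ vanish, so $\nabla M(\bar a)$ is orthogonal to the span containing the image of $\mathcal{I}_A\otimes\mathcal{L}_t$. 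This gives $\nabla M(\bar a)\in\ker(L^{T})$ and therefore $\frac{\partial}{\partial a_k}[\frac{d}{dt}M(\rho(t))]=0$ for all $k$, which is the assertion. (Alternatively one can use invariance of $M$ under the whole one-parameter group of conjugations $\rho\mapsto e^{i\theta\sigma_z}\rho\,e^{-i\theta\sigma_z}$, which fixes $\bar a$.)

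The step I expect to be the main obstacle is the classification in the second paragraph: showing rigorously that, for $\mathcal{H}_S\cong\mathbb{C}^2$, a non-product interior stationary state forces $\mathcal{L}_t$ into the above ``dephasing-type'' normal form, in which the image of $\mathcal{I}_A\otimes\mathcal{L}_t$ and the coordinate directions occupied by $\rho$ are orthogonal complements. This relies on the structure of fixed-point sets of CPTP maps with a faithful fixed state, on the list of $*$-subalgebras of $M_2(\mathbb{C})$, and on the observation that a channel fixing two orthogonal pure states has diagonal Kraus operators --- inputs that fail in dimension $\ge3$, which is why the qubit hypothesis is essential.
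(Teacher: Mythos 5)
Your proposal is sound in outline and, at its decisive step, coincides with the paper's own proof: in both arguments the only non-trivial case is the one where the common fixed set of the qubit maps is spanned by two orthogonal pure states, the product-state case is delegated to Proposition \ref{prop3}, the case $\mathcal{L}_t=0$ is immediate, and the vanishing of the remaining first derivatives of $\frac{d}{dt}M$ is extracted from invariance of $M$ under the local unitary $\mathbbm{1}_A\otimes\sigma_z$, which fixes the stationary state while flipping the $\sigma_x,\sigma_y$ coordinates. Where you differ is in the packaging, and the comparison is instructive. You reduce the claim, via the chain rule applied to Eq.\ (\ref{formul}) together with $\mathcal{I}_A\otimes\mathcal{L}_t(\rho)=0$, to the clean linear-algebra statement that $\nabla M(\bar a)$ annihilates the image of $\mathcal{I}_A\otimes\mathcal{L}_t$; the paper instead works with directional $\epsilon$-derivatives and an exchange of $\partial_\epsilon$ and $d/dt$. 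Your reduction has the virtue of making explicit a structural fact the paper uses only tacitly, namely that the generator preserves the splitting $\mathrm{span}\{\mathbbm{1},\sigma_z\}$ versus $\mathrm{span}\{\sigma_x,\sigma_y\}$ (your diagonal-Kraus argument). Conversely, the classification you flag as the ``main obstacle'' is precisely what the paper proves elementarily in Propositions \ref{proood} and \ref{propend}: a fixed-state set of positive dimension forces unitality, so the fixed segment is a diameter with antipodal (hence orthogonal) pure endpoints, and a two-dimensional fixed-state set is excluded by a direct Choi-matrix computation. Your appeal to the structure theory of fixed-point sets of channels with a faithful invariant state and to the $*$-subalgebras of $M_2(\mathbb{C})$ does the same job with heavier but standard machinery; as written, however, it is a citation rather than a proof, so to be self-contained you should either import those theorems explicitly or run the paper's two short lemmas.

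Two smaller points to patch. First, from ``the Kraus operators of $\Lambda_s$ are diagonal'' you get $\Lambda_s(\ketbra{0}{1}_S)=\lambda(s)\ketbra{0}{1}_S$, but passing to $\mathcal{L}_t(\ketbra{0}{1}_S)\propto\ketbra{0}{1}_S$ through $\Lambda_s=V_{s,t}\Lambda_t$ requires $\lambda(t)\neq 0$; if $\Lambda_t$ annihilates the off-diagonal block, the intermediate map (and hence $\mathcal{L}_t$) is not determined there by this relation, and this degenerate instant must be handled separately (for bijective dynamics it never occurs, and the paper's proof has the same implicit restriction). Second, your phrase ``stationarity forces $Y^S_\mu\in\ker\mathcal{L}_t$'' conflates invariance under all $\Lambda_s$ with membership in the kernel of the time-local generator; both follow under the divisibility and differentiability assumed in the paper, but the step $\sum_j L_{ij}a_j=0$ should be stated as a consequence of $\rho$ being fixed by $V_{s,t}$ (which holds because $\rho$ lies in the image of $\Lambda_t$), not merely by $\Lambda_t$.
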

\begin{proof}
See Appendix \ref{critic3}.
\end{proof}
{Thus, for qubit evolutions all stationary states in $\Int [S(\mathcal{H}_A\otimes \mathcal{H}_S)]$ are critical points of $\frac{d}{dt}I(\rho(t))$.}

{The nature of a critical point $\rho_0$ can be analyzed by obtaining the eigenvalues of the Hessian matrix, i.e., the matrix of second derivatives $\textbf{H}_{i,j}=\frac{\partial^2}{\partial a_i \partial a_j}\frac{d}{dt}I(\rho(t))$. 
However, for any stationary state, the Hessian $\textbf{H}_{i,j}$ is of less than full rank since $\frac{d}{dt}I(\rho(t))=0$ on the set of stationary states of $V_{s,t}$, denoted $S_s$, and on the set of product states, denoted $S_p$.
From this follows that any eigenvector of the Hessian that is tangent to $S_s\cup S_p$ corresponds to a zero eigenvalue. 
Thus, the sign of $\frac{d}{dt}I(\rho(t))$ in the part of the neighborhood of $\rho_0$ that coincides with the zero-eigenspace $E_0$ of  $\textbf{H}_{i,j}$ cannot be determined from the Hessian matrix alone since it depends on higher order derivatives.}

{On the overlap of the neighborhood of $\rho_0$ with the complement of $E_0$, i.e., with $E_0^C\equiv B(\mathcal{H}_A\otimes \mathcal{H}_S)\backslash E_0$, the Hessian does describe the sign of $\frac{d}{dt}I(\rho(t))$ if the neighborhood is sufficiently small.
In particular, if all non-zero eigenvalues of the Hessian, which correspond to eigenvectors tangent to $E_0^C$, are negative there exists a neighborhood $U^-_{\rho_0}$ of $\rho_0$ where $\frac{d}{dt}I(\rho(t))$ is negative in $U^-_{\rho_0}\cap E_0^C$. If all the non-zero eigenvalues of the Hessian are positive there exists a neighborhood $U^+_{\rho_0}$ of $\rho_0$ where $\frac{d}{dt}I(\rho(t))$ is positive in $U^+_{\rho_0}\cap E_0^C$.}

\subsubsection{Calculating partial derivatives}\label{Taylor}

{Directly calculating the derivatives of $\frac{d}{dt}I(\rho(t))$ with respect to the coordinates $a_i$ can be demanding since the eigenvalues of $\rho$ are the roots of a polynomial with degree equal to $\dim (\mathcal{H}_A\otimes\mathcal{H}_S)$.
To circumvent this difficulty we calculate the derivatives and second derivatives at a state $\rho$ using a method  adapted from Ref. \cite{Tsing}. The method described in this reference is valid for real symmetric matrices but generalizing it to Hermitian complex matrices is very straightforward. 
We present the version of this method that works for Hermitian matrices in the following paragraphs.}

{Let $f$ be a spectral function defined on a set of $n\times n$ Hermitian matrices $A$ parametrized by real numbers $a_i$. By spectral function we refer to a function that only depends on the eigenvalues $\{\lambda_k\}_{k=1}^n$ of $A$ but not on the ordering of the eigenvalues. Furthermore, assume that $f$ is analytic in a given point $\bar{a}$ and let $\lambda_k(\bar{a})$ be the eigenvalue of $A(\bar{a})$ with corresponding normalized eigenvector $u_k(\bar{a})$. Then the first and second order partial derivatives of $f$ with respect to the parameters $a_i$ in the point $\bar{a}$ can be expressed as}
\begin{eqnarray}
\frac{\partial f(\bar{a})}{\partial a_i}=\sum_{k}\frac{\partial{f[\lambda(\bar{a})]}}{\partial \lambda_k}h_i^k(\bar{a}),
\end{eqnarray}
and
\begin{eqnarray}
\frac{\partial^2 f(\bar{a})}{\partial a_i \partial a_j}=&&\sum_{k,l}\frac{\partial^2{f[\lambda(\bar{a})]}}{\partial \lambda_k \partial \lambda_l}h_i^k(\bar{a})h_j^l(\bar{a})\nonumber\\&&+\sum_{k}\frac{\partial{f[\lambda(\bar{a})]}}{\partial \lambda_k}h_{ij}^k(\bar{a})+\eta_{ij}(\bar{a}),
\end{eqnarray}
respectively, where
\begin{eqnarray}
\label{inter}
h_i^k(\bar{a})=&&u_k^\dagger\frac{\partial A(\bar{a})}{\partial a_i}u_k,\nonumber\\
h_{ij}^k(\bar{a})=&&u_k^\dagger\frac{\partial^2 A(\bar{a})}{\partial a_i \partial a_j}u_k+\sum_{l|\lambda_k\neq \lambda_l}\frac{\alpha_{ij}^{kl}(\bar{a})}{\lambda_k(\bar{a})-\lambda_l(\bar{a})},\nonumber\\
\alpha_{ij}^{kl}(\bar{a})=&&\left(u_k^\dagger(\bar{a})\frac{\partial A(\bar{a})}{\partial a_i}u_l(\bar{a})\right)\left(u_l^\dagger(\bar{a})\frac{\partial A(\bar{a})}{\partial a_j}u_k(\bar{a})\right)\nonumber\\&&+\left(u_k^\dagger(\bar{a})\frac{\partial A(\bar{a})}{\partial a_j}u_l(\bar{a})\right)\left(u_l^\dagger(\bar{a})\frac{\partial A(\bar{a})}{\partial a_i}u_k(\bar{a})\right),\nonumber\\
\eta_{ij}(\bar{a})=&&\sum_{k,l|\lambda_k=\lambda_l,k<l}\alpha_{ij}^{kl}(\bar{a})\frac{\partial^2 f[\lambda(\bar{a})]}{\partial^2 \lambda_k}.
\end{eqnarray}
{When two or more eigenvalues coincide, the corresponding eigenvectors are not uniquely defined. Nevertheless, the method here can still be used since while the expressions given in Eq. \ref{inter}, e.g. $h_i^k$, may depend on the choice of eigenvectors, the partial derivatives themselves are independent and can be evaluated using any such choice.}

{If the diagonal form of $A$ in the point $\bar{a}$ and the corresponding eigenvectors $u_k(\bar{a})$ are known the method described here can greatly simplify the computation of the partial derivatives.}

\subsection{Random unitary dynamics that the mutual information cannot witness}\label{IVDrand}
{We now show that the mutual information is non-increasing for some cases of random unitary qubit dynamics that are P-divisible but not CP-divisible by studying a neighborhood of the stationary states using the methods described in Sect. \ref{IVCmeth}.}

We consider an ancilla that is also a qubit and explicitly introduce coordinates $a_i$ for $B(\mathcal{H}_A\otimes \mathcal{H}_S)$ with respect to an orthonormal basis $\{e_i\}_{i=0}^{15}$ defined by
\begin{align}
&e_0=\mathbbm{1}\otimes{\mathbbm{1}},& &e_{8}=\sigma_y\otimes{\mathbbm{1}},\nonumber\\
&e_{1}=\mathbbm{1}\otimes{\sigma_x},& &e_9=\sigma_y\otimes{\sigma_x},\nonumber\\
&e_{2}=\mathbbm{1}\otimes{\sigma_y},& &e_{10}=\sigma_y\otimes{\sigma_y},\nonumber\\
&e_{3}=\mathbbm{1}\otimes{\sigma_z},& &e_{11}=\sigma_y\otimes{\sigma_z},\nonumber\\
&e_4=\sigma_x\otimes{\mathbbm{1}},& &e_{12}=\sigma_z\otimes{\mathbbm{1}},\nonumber\\
&e_5=\sigma_x\otimes{\sigma_x},& &e_{13}=\sigma_z\otimes{\sigma_x},\nonumber\\
&e_6=\sigma_x\otimes{\sigma_y},& &e_{14}=\sigma_z\otimes{\sigma_y},\nonumber\\
&e_7=\sigma_x\otimes{\sigma_z},& &e_{15}=\sigma_z\otimes{\sigma_z},
\end{align}
where all operators are of the form $\chi_A\otimes{\chi_S}$ for $\chi_A\in B(\mathcal{H}_A)$  and $\chi_S \in B(\mathcal{H}_S)$. 
A state $\rho$ is represented as
\begin{equation}
\rho=\frac{1}{4} \mathbbm{1}\otimes{\mathbbm{1}}+\sum_{i=1}^{15} a_i e_i \, ,
\end{equation}
where $a_i=\frac{1}{4}\Tr(\rho e_i)$.

{The analysis of $\frac{d}{dt}I(\rho(t))$ in the neighborhood of the stationary states is done by first considering the states of full rank, i.e., the states in $\Int[S(\mathcal{H}_A\otimes\mathcal{H}_S)]$, where $\frac{d}{dt}I(\rho(t))$ is analytic.
There we calculate the second derivatives of $\frac{d}{dt}I(\rho(t))$ at the stationary states and find the eigenvalues of the Hessian matrix. On the subset of states that fall in the zero eigenspace of the Hessian we then directly evaluate  $\frac{d}{dt}I(\rho(t))$. Finally, we consider the states of less than full rank and describe the neighborhood of the intersection of the stationary states with the boundary of the set of states.}

The stationary states are of the form $\rho_A\otimes{\mathbbm{1}}/2$ for arbitrary $\rho_A$. 
{Since the stationary states in $\Int [S(\mathcal{H}_A\otimes \mathcal{H}_S)]$ are critical by Propositions \ref{prop3} and \ref{prop4} and such that $\frac{d}{dt}I(\rho(t))=0$, there exists some sufficiently small neighborhood of the set of stationary states where the second order terms of the Taylor expansion in the $a_{i}$ determine the sign of $\frac{d}{dt}I(\rho(t))$, in all directions where the second derivative is non-zero.
To simplify the calculation of these derivatives we note that unitary transformations on the ancilla do not change the mutual information and it is therefore sufficient to consider diagonal $\rho_A$. In other words, the purity of the state of the ancilla is the only parameter that is relevant for our analysis.
The diagonal stationary states are of the form $\frac{1}{4} \mathbbm{1}\otimes{\mathbbm{1}}+a_{12}\sigma_z\otimes{\mathbbm{1}}$ for $-1/4\leq a_{12}\leq{1/4}$. The states for which  $-1/4 < a_{12} < {1/4}$ are in $\Int [S(\mathcal{H}_A\otimes \mathcal{H}_S)]$ and the states with coordinates $a_{12}=\pm 1/4$ are at the boundary of the set of states.}

{The second derivatives at the diagonal stationary states of full rank were calculated using the method described in Sect. \ref{Taylor} and the eigenvalues of the Hessian matrix were obtained.}{The Hessian has six eigenvalues that are zero for all stationary states in $\Int[S(\mathcal{H}_A\otimes\mathcal{H}_S)]$, and for all values of the parameters $\gamma_k(t)$. The remaining nine eigenvalues are functions of the parameters $\gamma_k(t)$ and of $a_{12}$, and are given by}
\begin{eqnarray}
32[\gamma_y(t) + \gamma_z(t)]\left (\frac{16 a_{12}^2+1}{ 16 a_{12}^2-1 }\right),\nonumber\\ 
 32[\gamma_x(t) + \gamma_z(t)]\left (\frac{16 a_{12}^2+1}{ 16 a_{12}^2-1 }\right),\nonumber\\ 
 32[\gamma_x(t) + \gamma_y(t)]\left (\frac{16 a_{12}^2+1}{ 16 a_{12}^2-1 }\right),\nonumber\\
 -8 [\gamma_y(t) + \gamma_z(t)]\frac{\atanh(4 a_{12})}{a_{12}},\nonumber\\ 
 -8 [\gamma_y(t) + \gamma_z(t)]\frac{\atanh(4 a_{12})}{a_{12}},\nonumber\\ 
 -8 [\gamma_x(t) + \gamma_z(t)]\frac{\atanh(4 a_{12})}{a_{12}},\nonumber\\ 
-8 [\gamma_x(t) + \gamma_z(t)]\frac{\atanh(4 a_{12})}{a_{12}},\nonumber\\ 
 -8 [\gamma_x(t) + \gamma_y(t)]\frac{\atanh(4 a_{12})}{a_{12}},\nonumber\\ 
 -8 [\gamma_x(t) + \gamma_y(t)]\frac{\atanh(4 a_{12})}{a_{12}}.
\end{eqnarray}
These nine eigenvalues are all non-positive if and only if the conditions in Eq. (\ref{rattes}) are satisfied, i.e., if and only if the dynamics is P-divisible.
{In particular, they are all strictly negative if $\gamma_i(t)+\gamma_j(t)>{0}$ for all $i,j$.
In this case there thus exists a sufficiently small neighborhood of the stationary states where $\frac{d}{dt}I(\rho(t))$ is negative in the intersection of the neighbourhood with the complement of the zero eigenspace of the Hessian.}

{Next, we need to investigate $\frac{d}{dt}I(\rho(t))$ on the intersection of a neighbourhood around a stationary state with the zero eigenspace of the Hessian. Here we would need higher order terms in the Taylor expansion to determine the sign of $\frac{d}{dt}I(\rho(t))$, however on this eigenspace we can evaluate it directly.  
The zero eigenspace $E_0(a_{12})$ as a function of $a_{12}$, is spanned by the six eigenvectors
$(\mathbbm{1}+4a_{12}\sigma_z)\otimes{\sigma_i}$ and $\sigma_i\otimes{\mathbbm{1}}$ for $i=x,y,z$. These eigenvectors are tangent to the set of product states for all $a_{12}$, but the plane they span, i.e. $E_0(a_{12})$, also contains correlated states. 
For a given stationary state $\rho_0=\frac{1}{4} \mathbbm{1}\otimes{\mathbbm{1}}+a_0\sigma_z\otimes{\mathbbm{1}}$ we can parametrize $E_0(a_0)$. The states in the $E_0(a_0)$ are of the form}
\begin{eqnarray}
\frac{1}{4} \mathbbm{1}\otimes{\mathbbm{1}}+(\mathbbm{1}+4 a_{0}\sigma_z)\otimes{(a_{1}\sigma_x+a_{2}\sigma_y+a_{3}\sigma_z)}\nonumber\\
+ (a_4\sigma_x+ a_{8}\sigma_y+a_{12}\sigma_z)\otimes{\mathbbm{1}}.
\end{eqnarray}
{Note that $E_0(a_0)$ is an invariant subspace of $V_{s,t}$ for all $a_0$ since the dynamics is unital. Thus any state in $E_0(a_0)$ is mapped into a state also belonging to $E_0(a_0)$. Furthermore, $a_4,a_8$ and $a_{12}$ are time independent. Therefore the time derivative of the mutual information $I$ as a function on $E_0(a_0)$ depends only on the coordinates $a_1,a_2,a_3$. 
Since the mutual information is independent of unitary transformations on the system we can diagonalize $a_{1}\sigma_x+a_{2}\sigma_y+a_{3}\sigma_z$ without changing its value. Let $\pm\lambda(s)=\pm\sqrt{a_{1}^2(s)+a_{2}^2(s)+a_{3}^2(s)}$ be the corresponding eigenvalues as functions of time where}
\begin{eqnarray}
a_1(s)&&=a_{1} \, \exppp{-\int_{t}^{s}(\gamma_z(\tau)+\gamma_y(\tau))d\tau},\nonumber\\
a_2(s)&&=a_{2}\, \exppp{-\int_{t}^{s}(\gamma_z(\tau)+\gamma_x(\tau))d\tau},\nonumber\\
a_3(s)&&=a_{3}\, \exppp{-\int_{t}^{s}(\gamma_x(\tau)+\gamma_y(\tau))d\tau}.
\end{eqnarray}
{Since the only time dependence of $I$ is its dependence on $\lambda(t)$ we can express the time derivative of the mutual information as }  
\begin{eqnarray}\label{kk}
\frac{d I[\rho(t))]}{dt}=\frac{d I[\rho(s)]}{d\lambda(s)}\frac{d \lambda(s)}{ds}\big{|}_{s=t} \nonumber
\end{eqnarray}
{for any $\rho(t)\in E_0(a_0)$,}
{where $\frac{d \lambda(s)}{ds}\big{|}_{s=t}$ is given by}
\begin{eqnarray}\label{kk}
-\frac{a_{1}^2[\gamma_z(t)+\gamma_y(t)]+a_{2}^2[\gamma_x(t)
+\gamma_z(t)]+a_{3}^2[\gamma_x(t)+\gamma_y(t)]
}{\sqrt{a_{1}^2+a_{2}^2+a_{3}^2}}.
\nonumber
\end{eqnarray}
{When the conditions in Eq. (\ref{rattes}) are satisfied, i.e., when the dynamics is P-divisible, $\frac{d \lambda(s)}{ds}\big{|}_{s=t}$ is non-positive for all $a_{1},a_{2},a_{3}$. This is equivalent to stating that the length of the Bloch vector of the reduced state of the system does not increase when the dynamics is P-divisible. Moreover, we see that for all states in $E_0(a_0)$ except those of the form  $\rho_A\otimes{\mathbbm{1}}$, for which $a_1=a_2=a_3=0$, there exists some CP-divisible random unitary dynamics such that $\frac{d \lambda(s)}{ds}\big{|}_{s=t}<0$.
 Since we know that $\frac{d I[\rho(t)]}{dt}\leq 0$ for all $\rho(t)\in E_0(a_0)$ when the dynamics is CP-divisible it follows that $\frac{d I[\rho(s)]}{d\lambda(s)}\big{|}_{s=t}$ is non-negative for all $\rho(s)\in E_0(a_0)$ not of the form $\rho_A\otimes{\mathbbm{1}}$. Therefore we can conclude that $\frac{d I[\rho(t)]}{dt}\leq 0$ for all $\rho(t)\in E_0(a_0)$ when $V_{s,t}$ is P-divisible.}

{In the above analysis we have seen that there exists random unitary non-Markovian P-divisible dynamics for which no increase of the mutual information occurs in a sufficiently small neighborhood of the stationary states in $\Int[S(\mathcal{H}_A\otimes\mathcal{H}_S)]$.
It remains to consider the neighbourhood of the stationary states of full rank, i.e., of stationary states in the boundary of the set of states. For these stationary states $a_{12}=\pm 1/4$ and they are thus of the form $1/4(\mathbbm{1}+\sigma_z)\otimes{\mathbbm{1}}$ and $1/4( \mathbbm{1}-\sigma_z)\otimes{\mathbbm{1}}$.
It is sufficient to consider restricted neighborhoods of these states where the coordinate $a_{12}$ is held fixed at $1/4$ or $-1/4$ respectively. Any other point in their neighbourhoods either belongs to a neighbourhood of a stationary state in $\Int[S(\mathcal{H}_A\otimes\mathcal{H}_S)]$, or is unphysical.
The physical states in these restricted neighbourhoods for which $a_{12}=\pm 1/4$ are product states of the form $1/4( \mathbbm{1}\pm \sigma_z)\otimes\rho$, where $\rho\in B(\mathcal{H}_S)$.
This can be seen by noting that 
if $a_{12}=\pm 1/4$, one must choose $a_4=a_8=0$ to ensure non-negative eigenvalues of the reduced state on $\mathcal{H}_A$. Therefore, for all physical states in the restricted neighbourhoods the reduced state of the ancilla is pure and of the form $1/4( \mathbbm{1}\pm \sigma_z)$, which implies that all such states are product states.
Since any product state has zero mutual information and remains a product state during the evolution it follows that $\frac{d}{dt}I(\rho(t))$ is zero for all states in any neighborhood of $1/4( \mathbbm{1}\pm \sigma_z)\otimes{\mathbbm{1}}$ where $a_{12}=\pm 1/4$.}

{Finally we can conclude that for random unitary evolutions such as the quasi-eternal model described in Sect. \ref{EE} where the non-Markovian dynamics is P-divisible and all initial states have been mapped to a sufficiently small neighourhood of the stationary states by the preceding Markovian evolution no increase in the mutual information occurs. Moreover, the neighbourhood where no increase of the mutual information occurs only depends on $V_{s,t}$ and is independent of the preceding dynamics. Therefore, for any random unitary P-divisible evolution subsequent to time $t$ it is always possible to find a CP-divisible random unitary $\Lambda_t$ such that Im$(\mathcal{I}_A\otimes \Lambda_{t})$ is contained in this neighbourhood by appropriately choosing the rates $\gamma_k(\tau)>0$ for $0\leq\tau\leq t$.}

\section{Correlation measures detecting almost-all non-Markovian dynamics}\label{V0dario}

After seeing that the mutual information and entanglement measures are unable to detect all non-Markovian dynamics, in this section we review and expand the results in~\cite{previous}, where it was proven that correlation backflows can be observed for all non-Markovian dynamics that are at most point-wise non-bijective. To do so, we need to (i) move to the enlarged scenario in which two auxiliary particles are employed and (ii) consider a new correlation measure based on the distinguishability of quantum states. To present this correlation measure we need to introduce the notion of maximally entropic measurement and review the definition of guessing probability for an ensemble of quantum states.

\subsubsection{Maximally entropic measurements}\label{VA}

Consider a quantum state $\rho$. For this state, we say that a measurement, defined by a Positive Operator Valued Measure (POVM), is \textit{maximally entropic},  and denote by ME-POVM, if, when applied on the state $\rho$, each outcome has the same probability of occurrence $p_i=1/n$, where $n$ is the generic number of outputs of the POVM considered. Indeed,  if  $S(\left\{ p_i \right\}_{i=1}^n)=-\sum_i p_i \log_n p_i$ is the Shannon entropy of the resulting $n$-outcome probability distribution, where we take as the basis of the logarithm in the entropy the number of outputs, $S(\{p_i\}_{i=1}^n)=1$ if and only if $p_i = 1/n$. We define the set of ME-POVMs for $\rho$ as
\begin{equation}\label{MEPOVM}
\Pi \left( \rho \right) \equiv \left\{ \left\{ P_{i} \right\}_{i=1}^n : S(\left\{ p_i \right\}_{i=1}^n)=1  \right\} \, .
\end{equation}
For any state $\rho$, this collection is non-empty and contains measurements with any number of outputs (see Appendix \ref{neverempty}). It is trivial to adapt this notion to the case when the number of outputs is fixed to $n$, having the set of measurements $\Pi_n(\rho)$.

\subsubsection{Guessing probability of an ensemble}\label{VBsectionPg}

Consider an ensemble of states $\mathcal{E} = \{ p_i , \, \rho_{i} \}_{i}$ (with $i=1,\dots,n$) defined on a finite dimensional state space $S(\mathcal{H})$. Assume that we know the composition of the ensemble $\mathcal{E}$, and we want to answer the question:
What is the average probability to correctly identify a state extracted from $\mathcal{E}$, maximized over all possible measurements?
This quantity is called the \textit{guessing probability} of the ensemble
\begin{equation}\label{Pg}
P_g (\mathcal{E}) \equiv \max_{ \left\{ P_i \right\}_i }\sum_{i=1}^n  p_i \tr{ \rho_i  P_i  } \, ,
\end{equation}
where the maximization is performed over the space of the $n$-output POVMs. It is clear that $P_g( \mathcal{E}) \geq \overline{p}$, where $\overline{p}=\max_i \{p_i\}_i \geq 1/n$ and $P_g( \mathcal{E}) = \overline{p}$ if the states of the ensemble are identical: $\rho_i=\overline{\rho}$ for any $i=1,\dots, n$. It means that, if we fix the number of states of an ensemble to $n$, the minimum value of the guessing probability, i.e., $1/n$, is obtained if, and only if, the distribution is uniform and the states are identical. 

Note that when the ensemble is composed by two equiprobable states, i.e., $ \mathcal{E}^{eq}=\{\{p_{1,2}=1/2\},\{\rho_1, \, \rho_2\}\}$, $P_g(\mathcal{E}^{eq})$ can be expressed in terms of  the distinguishability between $\rho_1$ and $\rho_2$
\begin{equation}\label{PgD}
P_g(\mathcal{E}^{eq} ) = \frac{1}{4} \left( 2 + ||\rho_1 - \rho_2 ||_1 \right) \, ,
\end{equation}
where $|| \cdot ||_1$ is the trace norm. The distinguishability is defined as $D(\rho_1,\rho_2)=||\rho_1 - \rho_2 ||_1 /2$ and $ P_g(\mathcal E^{eq}) -1/2 = D(\rho_1,\rho_2)/2$.

\subsection{The 2-output ME-POVMs' scenario}\label{VI'A}

We now have all the ingredients needed to define a correlation measure that witnesses almost-all non-Markovian evolutions. 

Let $\rho_{AB}$ be a bipartite state acting on a finite dimensional state space of a composed system $S(\mathcal{H}_A \otimes \mathcal{H}_B)$.  Now, consider the effect of a measurement in one of the two systems, say $A$. This measurement has $n$ possible outcomes and is represented by a POVM, i.e., a collection of positive semi-definite operators  $\{ P_{A,i}\}_{i=1}^n$ where the condition $\sum_{i=1}^{n} P_{A,i} = \mathbbm{1}_A$ holds. Each $ P_{A,i}$ represents a possible outcome with the probability of occurrence $p_i$. After the measurement, depending on the outcome $i$, the state of the $B$ part is $\rho_{B,i}$. The output states and probabilities are
\begin{equation}\label{prhooutputs}
p_i= \tr{ \rho_{AB}  P_{A,i}\otimes   \mathbbm{1}_B } , \,\,\, \rho_{B,i} = \frac{\trA{ \rho_{AB} P_{A,i}\otimes   \mathbbm{1}_B}}{ p_i }\, .
\end{equation}
We define by $\mathcal{E}(\rho_{AB}, \{P_{A,i}\}_{i=1}^n)$ the output ensemble made by the probabilities $p_i$ and the states $\rho_{B,i}$ obtained by the procedure described by Eq. (\ref{prhooutputs}).


Let's now restrict the previous analysis to ME-POVMs of $n$ outcomes for the reduced state on system $A$, $\Pi_n(\rho_A)$, where $\rho_A=\trB{\rho_{AB}}$. If $\{P_{A,i}\}_{i=1}^n \in \Pi_n(\rho_{A})$, from Eqs. (\ref{prhooutputs}) and (\ref{MEPOVM}) it follows that, 
\begin{equation}\label{ees}
\mathcal{E} (\rho_{AB}, \{ P_{A,i}\}_i )=\left\{p_i=\frac{1}{n} \, , \,\,\,
 \rho_{B,i} = n \, \trA{ \rho_{AB} P_{A,i}\otimes   \mathbbm{1}_B} \right\}_{i=1}^n  .
\end{equation}
Therefore, Alice measures $\rho_{AB}$ with  $\{P_{A,i}\}_{i=1}^n\in \Pi(\rho_{A})$, Bob obtains an \textit{equiprobable ensemble of states} (EES), namely an $n$-state output ensemble where the probability distribution of occurance of each state $\rho_{B,i}$ is uniform (see Fig. \ref{figees}).

\begin{figure}
\includegraphics[width=0.45\textwidth]{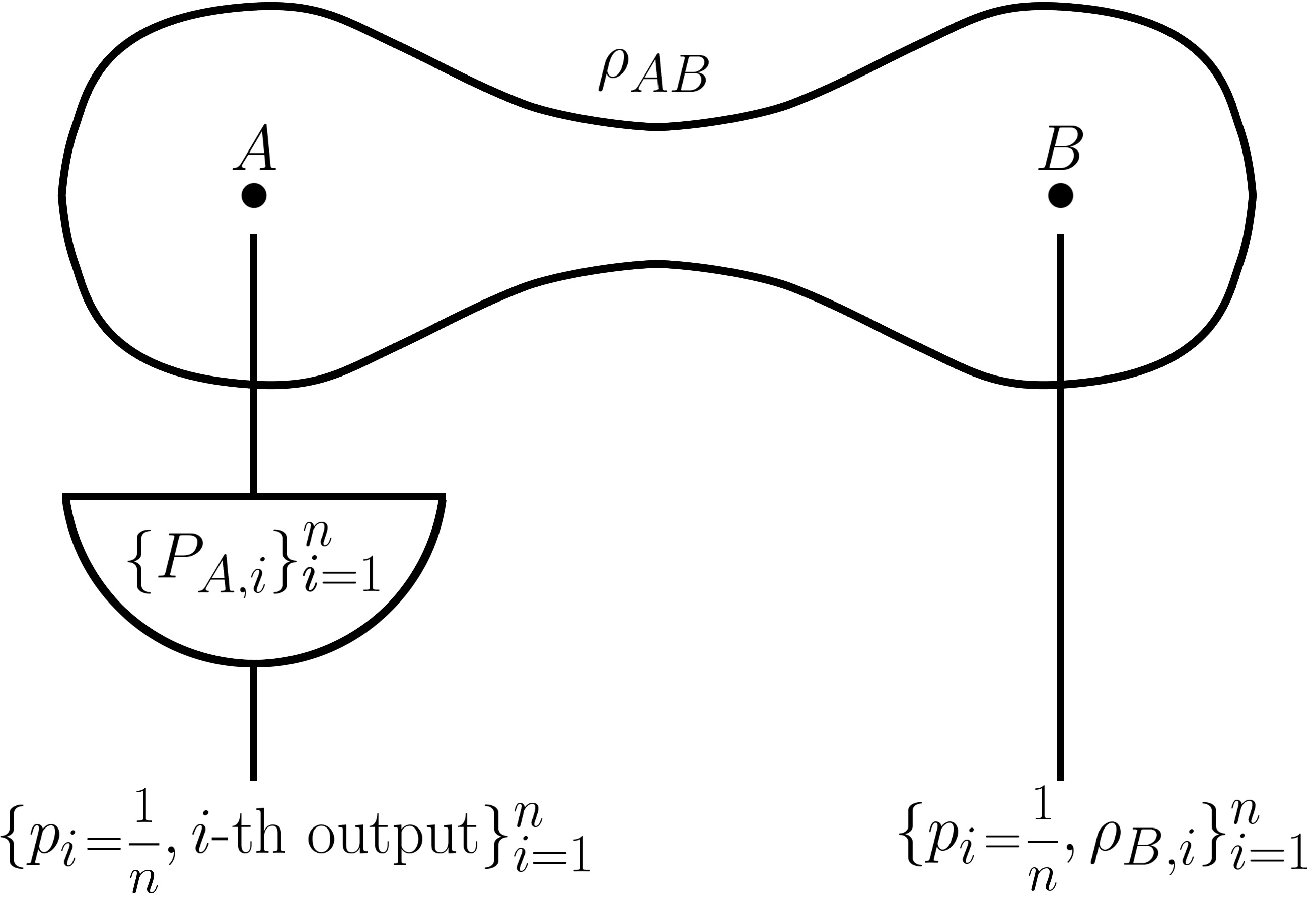}
\caption{The measurement scenario where Alice, measuring her side of $\rho_{AB}$ with an $n$-ouput ME-POVM $\{P_{A,i}\}_{i=1}^n$, produces on Bob's side the EES given by Eq. (\ref{ees}).}\label{figees}
\end{figure}

A family of correlation measures $C_A^{(n)}$ is obtained by maximizing the guessing probability of the EESs $\mathcal{E} (\rho_{AB}, \{ P_{A,i}\}_i )$ over the set of $n$-output ME-POVMs. But in fact, for most of our considerations, it suffices to consider the simplest case of 2 outputs, corresponding to a two-output ME-POVM $\{ P_{A,1}, P_{A,2}\} \in \Pi_2(\rho_A)$, having
\begin{equation}\label{CA}
 C_A^{(2)} (\rho_{AB}) \equiv \max_{ \left\{ P_{A,1}, P_{A,2}\right\} \in \Pi_2 \left( \rho_{A} \right) }  P_g \left( \mathcal{E} \left( \rho_{AB} ,\left\{ P_{A,1}, P_{A,2}\right\}  \right) \right)  - \frac{1}{2} \, ,
\end{equation}
where $\rho_A=\trB{\rho_{AB}}$ is the reduced state on $A$. In what follows, we mostly restrict our analysis to this case, although we come back to the case of an arbitrary number of outputs below. The scenario that reproduces the value of $C_A^{(2)} (\rho_{AB}) $ is described in Fig. \ref{figees}, where Alice chooses a 2-output ME-POVM that maximizes the guessing probability of the output ensemble generated on Bob's side. 
{Moreover,  we can use Eq. (\ref{PgD}) to rewrite $C_A^{(2)}(\rho_{AB})$ in the following way
\begin{equation}\label{CA}
 C_A^{(2)} (\rho_{AB}) = \max_{ \left\{ P_{A,1}, P_{A,2}\right\} \in \Pi_2 \left( \rho_{A} \right) } \frac{||\rho_{B,1} - \rho_{B,2}||_1}{2} \, ,
\end{equation}
where $\rho_{B,1}$ and $\rho_{B,2}$ are the two output states obtained when Alice applies the ME-POVM $\{P_{A,1},P_{A,2}\}$  (see Eq. (\ref{ees})).}

Alternatively, we could perform 2-output ME-POVMs on the system $B$ and obtain a measure
\begin{equation}\label{CB}
 C_B^{(2)} (\rho_{AB}) \equiv \max_{ \left\{ P_{B,1}, P_{B,2}\right\} \in \Pi_2 \left( \rho_{B} \right) }  P_g \left( \mathcal{E} \left( \rho_{AB} ,\left\{ P_{B,1}, P_{B,2}\right\}  \right) \right)  - \frac{1}{2} \, ,
\end{equation}
where $\rho_B=\trA{\rho_{AB}}$ is the reduced state on $B$.
We underline that the guessing probabilities that appear in Eq.~(\ref{CA}) and (\ref{CB}) can be evaluated using Eq. (\ref{PgD}).

A natural way to  construct a symmetric measure with respect to $A$ and $B$ is the following
\begin{equation}\label{C}
 C^{(2)} (\rho_{AB}) \equiv \max \, \left\{ C^{(2)}_A(\rho_{AB}), \, C^{(2)}_B(\rho_{AB}) \right\}  \, . 
\end{equation}
In Section \ref{C2SDP} we show that the computation of $C^{(2)} (\rho_{AB})$ can be done efficiently using SDP for any $\rho_{AB}$.
Operationally, $C^{(2)}_A (\rho_{AB})$ ($C^{(2)}_B (\rho_{AB})$) corresponds to the largest distinguishability between the pairs of equiprobable states of $B$ ($A$) that we can obtain from $\rho_{AB}$ by performing measurements on $A$ ($B$). 

We give two examples that provide an intuitive idea of the meaning of the correlation measures $C^{(2)}_A$, $C^{(2)}_B$ and $C^{(2)}$. First, we consider a generic product state $\rho_A \otimes \rho_B$. In this case, if Alice measures her side of $\rho_{AB}$ with a 2-output ME-POVMs $\{P_{A,i}\}_{i=1}^2$, the ensemble generated on Bob's side consists of the two states: $\rho_{B,i}=2\, \trA{\rho_{A}\otimes\rho_{B} \cdot P_{A,i}\otimes \mathbbm{1}_B}=\rho_B$, which are identical and equal to $\rho_B$. The corresponding guessing probability is $P_g=1/2$ and therefore $C^{(2)}_A(\rho_{A}\otimes\rho_{B})=0$. In fact, the $1/2$ factor is chosen just to make the value of the correlation measure equal to zero for product states. 

The second example is given by the maximally entangled state $\phi^+_{AB}=\ketbra{\phi^+_{AB}}{\phi^+_{AB}}$, where $\ket{\phi^+_{AB}}=(\ket{00}_{AB}+\ket{11}_{AB})/\sqrt{2}$. In order to evaluate $C^{(2)}_A(\phi^+_{AB})$, it is easy to realize that the projective measurement $\{P_{A,i}^{(\text{proj})}\}_{i=1}^2=\{\ketbra{0}{0}_A, \ketbra{1}{1}_A\}$ is the ME-POVM obtained by the maximization of Eq. (\ref{CA}). Indeed, in this case, Alice generates on Bob's side an orthogonal ensemble of two states: $\rho_{B,i}=2\, \trA{\phi^+_{AB} \cdot \ketbra{i}{i}_A\otimes \mathbbm{1}_B} = \ketbra{i}{i}_B$, which is perfectly distinguishable: $P_g(\{p_i=1/2, \ketbra{i}{i}_B\}_{i=1}^2)=1$. It follows that, since the guessing probability of an ensemble cannot be greater than 1, maximally entangled states are maximally correlated states with respect to $C^{(2)}_A$ and, as it is straightforward to prove, also to $C^{(2)}_B$ and $C^{(2)}$. Note however that the same maximum value can be obtained by a maximally correlated classical bit, defined by the equal mixture of states $\ket{00}$ and $\ket{11}$.

For these quantities to define a proper correlation measure, we need to demonstrate that they are non-increasing under local operations. This is proven in Appendix \ref{monolocal}, therefore showing that $C^{(2)}_A$, $C^{(2)}_B$ and $C^{(2)}$ are proper correlation measures. {Moreover, in Appendix \ref{conti} we show that the correlation measure $C^{(2)}$ is continuous on the set of states $S(\mathcal{H}_{A}\otimes\mathcal{H}_{B})$.}

\subsubsection{Computation of the correlation measure using semi-definite programming}\label{C2SDP}
A useful property of the introduced correlation measure is that its computation can be done by means of semi-definite programming (SDP). Without loss of generality, see Eqs. (\ref{CA}), (\ref{CB}) and (\ref{C}), we present this result for the case where Alice performs the ME-POVM.

The main reason why we were able to express $C_A^{(2)}$ using SDP is that for the case of two equiprobable states, $\rho_1$ and $\rho_2$, the guessing probability can be expressed as a function of the trace distance between the two states which, in turn, can be computed through SDP as follows:
\begin{eqnarray}
\label{trdist}
||\rho_1-\rho_2||_1 &=& \max \tr{P_+}+\tr{P_-}\, , \nonumber\\
\text{s.t.} &&\rho_1-\rho_2 = P_+ - P_- \, , \nonumber\\
&&P_+\,,P_-\geq 0 .
\end{eqnarray}
It now suffices to replace $\rho_1$ and $\rho_2$ by the two states prepared with probability $1/2$ on Bob's side by Alice's measurement on her part of $\rho_{AB}$. The final SDP to compute $C_A^{(2)}$ reads
\begin{eqnarray}
\label{C2}
C_A^{(2)}(\rho_{AB}) &=& \max \frac{1}{4}\left(\tr{P_+}+\tr{P_-}+2\right) \, ,\nonumber\\
\text{s.t.} &&\rho_{B,1}-\rho_{B,2} = P_+ - P_- \, ,\nonumber\\
&&\tr{ \rho_{A} \, P_{A,1} }= \tr{ \rho_{A} \, P_{A,2} } \, ,\nonumber\\
&&\rho_{B,1} = 2\trA{ \rho_{AB}  \, P_{A,1} \otimes \mathbbm{1}_B } \, ,\nonumber\\
&&\rho_{B,2} = 2\trA{ \rho_{AB}  \, P_{A,2} \otimes \mathbbm{1}_B } \, ,\nonumber\\
&&P_{A,1}+P_{A,2}=\mathbbm{1}_A \, ,\nonumber\\
&&P_+\,,P_-\,,P_{A,1}\,,P_{A,2}\geq 0 .
\end{eqnarray}
where again $\rho_A=\trA{\rho_{AB}}$ is the reduced state on $A$.

\subsubsection{Witnessing non-Markovian dynamics}\label{witnessing}

We now show how to use the correlation measure introduced above to detect non-Markovian evolutions. We prove that for any evolution that is at most point-wise non-bijective, we can find an initial state $\rho_{AB}^{(\tau)}(0)$ such that $C^{(2)} (\rho_{AB}^{(\tau)}(t))$ increases between time $t=\tau$ and $t=\tau+ \Delta t$ if and only if there is no CP intermediate map $V_{\tau+\Delta t,\tau}$.
 Although our method applies to any bijective or pointwise non-bijective evolution, at the moment we are unable to extend the proof to non-Markovian evolutions that are non-bijective in finite time intervals. Note however that the set of non-Markovian evolutions not covered by our result has zero measure in the space of evolutions. More precisely, if we take an evolution that is non-bijective in a finite time interval and add a perturbation chosen at random with respect to a Borel measure, this yields an at most point-wise non-bijective evolution with probability one \cite{ottyorke}.  

To take full advantage of this measure, we need to extend the standard setting and consider the scenario where $A$ is an ancillary qubit and $B$ is composed of the system $S$ undergoing the evolution and an additional ancilla $A'$, see Fig.\ref{fig}. This is because the measure exploits the increase of distinguishability of the states in the ensemble prepared by the measurement in $A$ under the action of a non-CP map. However, for $P$-divisible dynamics, this increase does not appear unless correlated states, for particles $S$ and $A'$, are considered \cite{koss2,ruskai}

In what follows we first construct the initial state $\rho_{AB}^{(\tau)}(t)$ to be used as a probe. Second, we show that for the class of non-Markovian dynamics specified above, $C^{(2)}( \rho_{AB}^{(\tau)}(t))$ provides a correlation backflow.

\subsubsection{The probe}
Let $\Lambda_t$ represent a bijective or pointwise non-bijective non-Markovian dynamical map that acts on the system $S$ and introduce an ancillary system $A'$.
As shown in Ref. \cite{bogna}, for any of these dynamics we can construct a class of pairs of initial states $\{ \rho_B'^{(\tau)}(0), \rho_B''^{(\tau)}(0)\}\in S(\mathcal{H}_B )= S (\mathcal{H}_{A'} \otimes \mathcal{H}_{S})$ that show an increase in distinguishability between time $t=\tau$ and $t=\tau +\Delta t$
\begin{equation}\label{rho12}
\big| \big|  \rho_B'^{(\tau)}(\tau +\Delta t) - \rho_B''^{(\tau)}(\tau +\Delta t)\big| \big|_1 >\big| \big|  \rho_B'^{(\tau)}(\tau ) - \rho_B''^{(\tau)}(\tau )\big| \big|_1  \, ,
\end{equation}
if and only if there is no CP intermediate map $V_{\tau+\Delta t, \tau}$, where the evolution of the system $B$ is given by the dynamical map $\mathcal{I}_{A'}\otimes \Lambda_{t}$, where $\mathcal{I}_{A'}$ is the identity map on $A'$.

The particular bipartite separable states $\rho_{AB}^{(\tau)}(t)$ for which we examine the correlation $C^{(2)}$ are classical-quantum states
. Our ``probe'' state is
\begin{equation}\label{hatrho}
\rho_{AB}^{(\tau)}(t) \equiv \frac{1}{2} \left( \ketbra{0}{0}_A \otimes \rho_B'^{(\tau)} (t)+ \ketbra{1}{1}_A \otimes \rho_B''^{(\tau)} (t) \right) \, ,
\end{equation}
where and $ \rho_B'^{(\tau)}(t)$ and $ \rho_B''^{(\tau)}(t) $ are the states that appear in Eq. (\ref{rho12}) and $\mathcal{B}_A \equiv\{\ket{0}_A, \ket{1}_A \}$ is an orthonormal basis for $\mathcal{H}_A$. Since only the system $B$ is involved in the evolution, $\rho_{AB}^{(\tau)}(t)$ is given by Eq. (\ref{hatrho}) for any $t\geq 0$. Note that from Eq. (\ref{hatrho}) it follows that $\rho_{AB}^{(\tau)}(t)$ does not contain any entanglement. Moreover, the state can be chosen arbitrarily close to an uncorrelated state since, as shown in \cite{bogna}, one can always choose states $\rho_B'^{(\tau)}(0)$ and $\rho_B''^{(\tau)}(0)$ arbitrarily close to each other.

\subsubsection{Detecting the correlation backflow}\label{finalwitness}
 In this section we show how the correlation measure $C^{(2)}_A(\rho_{AB}^{(\tau)}(t))$, and later $C^{(2)}(\rho_{AB}^{(\tau)}(t))$, witnesses bijective {or pointwise non-bijective} non-Markovian dynamics. 

To evaluate $C^{(2)}_A(\rho_{AB}^{(\tau)}(t))$, given the maximization used in Eq. (\ref{CA}), we have to find a ME-POVM $\{P_{A,1},P_{A,2}\}$ that, applied on $\rho_{AB}^{(\tau)}(t)$, generates the output ensemble $\{ \{p_{1,2}=1/2\},\{\rho_{B,1}(t),\rho_{B,2}(t)\}\}$ with the largest value of $||\rho_{B,1}(t)- \rho_{B,2}(t) ||_1$. Let $\lambda \in[0,1]$  and $\eta\in[0,1]$  be the diagonal elements of $P_{A,1}$ in the basis $  \mathcal{B}_A=\{\ket{0}_A, \ket{1}_A \}$. It is easy to show that $\lambda+\eta=1$ for ME-POVMs. The corresponding output states are
\begin{eqnarray}
\rho_{B,1}(t)&=&{\lambda \rho_{B}'^{(\tau)}(t) + \eta \rho_B''^{(\tau)}(t)} \, , \\
\rho_{B,2}(t)&=&{(1-\lambda) \rho_{B}'^{(\tau)}(t) +(1-\eta) \rho_B''^{(\tau)}(t)} \, .
\end{eqnarray}
It follows that
\begin{equation}
||\rho_{B,1}(t)- \rho_{B,2} (t) ||_1 =  {|\lambda-\eta| \, \left| \left|\rho_{B}'^{(\tau)}(t) - \rho_B''^{(\tau)}(t) \right| \right|_1 }    \, . 
\end{equation}
Since $0\leq |\lambda-\eta| \leq 1$, the maximum is obtained when either $\lambda$ or $\eta$ is equal to 1. In both cases the output states are $\rho_B'^{(\tau)}(t)$ and $\rho_B''^{(\tau)}(t)$ and 
\begin{equation}\label{CAhatrho2}
C^{(2)}_A( \rho_{AB}^{(\tau)}(t)) =  \frac{\left| \left| \rho_B'^{(\tau)} (t) - \rho_B''^{(\tau)}(t) \right| \right|_1}{4}   \, .
\end{equation}
In Appendices \ref{CB2CA} and \ref{2enoughA} we prove that $C^{(2)} (\rho_{AB}^{(\tau)}(t)) = C^{(2)}_A(\rho_{AB}^{(\tau)} (t) ) \geq {C_B^{(2)} (\rho_{AB}^{(\tau)} (t) )}$.
Therefore, using Eqs. (\ref{rho12}) and (\ref{CAhatrho2}), we obtain a correlation backflow
\begin{equation}\label{ddtCA}
C^{(2)}\left(\rho_{AB}^{(\tau)}(\tau+\Delta t) \right) > C^{(2)}\left(\rho_{AB}^{(\tau)}(\tau) \right)  \, ,
\end{equation}
if and only if there is no CP intermediate map $V_{\tau+\Delta t,\tau}$.

\subsubsection{Example: the quasi-eternal non-Markovian model}\label{VI'C}

For the sake of clarity, we illustrate the previous general results through a specific dynamics. Let us consider the example introduced in Section \ref{quasieternal}, namely the class of dynamical maps $\Lambda_t^{(t_0,\alpha)}$ where $\alpha>0$ and $t_0\geq0$ satisfies the relation given in Eq. (\ref{alphat0}). Recall that in Section \ref{EE} we showed that the mutual information fails to detect some non-Markovian dynamics that belong to this class.

Now, we show how to build the states $\rho_B'^{(\tau)}(t_0)$ and  $\rho_B''^{(\tau)}(t_0)$ that appear in Eqs. (\ref{rho12}) and (\ref{hatrho}) and therefore $\rho^{(\tau)}_{AB}(t_0)$ itself, when the evolution is given by $\Lambda_t^{(t_0,\alpha)}$. Picking $\tau> t_0$,  the intermediate map $V_{\tau+\Delta t,\tau}^{(t_0,\alpha)}$ is not CP for any $\Delta t>0$ (see Section \ref{witnessing}). The constructive method given in \cite{bogna} suggests to consider, together with the qubit $S$ evolved by $\Lambda_t^{(t_0,\alpha)}$, an ancillary qutrit $A'$: $S(\mathcal{H}_B )= S (\mathcal{H}_{A'} \otimes \mathcal{H}_{S})$. Now, being $\{\ket{0}_{A'},\ket{1}_{A'},\ket{2}_{A'}\}$ and $\{\ket{0}_{S},\ket{1}_{S}\}$ orthonormal basis respectively for $\mathcal{H}_{A'}$ and $\mathcal{H}_S$, we have:
\begin{equation}\label{rhoB'}
\rho_{A'S}'^{(\tau)} (\tau)=(1-p) \sigma_{A'S} + p \phi^+_{A'S} \, ,
\end{equation}
\begin{equation}\label{rhoB''}
\rho_{A'S}''^{(\tau)} (\tau)= (1-p) \sigma_{A'S} + p  \ketbra{2}{2}_{A'} \otimes \rho_S \, .
\end{equation}
$\phi^+_{A'S} \equiv \ketbra{\phi^+}{\phi^+}_{A'S}$ is the maximally entangled state, where $\ket{\phi^+}_{A'S}\equiv (\ket{00}_{A'S} + \ket{11}_{A'S})/\sqrt{2}$ and $\sigma_{A'S} \in \Int [ \I (\Lambda_{\tau}^{(t_0,\alpha)}) ]$. 

In order to define completely $\rho_{A'S}'^{(\tau)} (\tau)$ and $\rho_{A'S}''^{(\tau)} (\tau)$, we fix their free components: $\sigma_{A'S} \equiv (\ketbra{0}{0}_{A'} + \ketbra{1}{1}_{A'} )/2 \otimes \mathbbm{1}_S/2 $ and  $\rho_S \equiv \mathbbm{1}_S/2$ and we get:
\begin{eqnarray}\label{rrhoB'}
\rho_{A'S}'^{(\tau)} (\tau)&=&\frac{ (\ketbra{0}{0}_{A'} + \ketbra{1}{1}_{A'} )\otimes \mathbbm{1}_S}{4}\nonumber\\
&+&p\,\frac{ \sigma_x\otimes \sigma_x-\sigma_y\otimes \sigma_y+\sigma_z\otimes \sigma_z}{4} \, ,
\end{eqnarray}
\begin{equation}\label{rrhoB''}
\rho_{A'S}''^{(\tau)} (\tau)= \left( (1-p) \frac{ \ketbra{0}{0}_{A'}+\ketbra{1}{1}_{A'}}{2}+p\ketbra{2}{2}_{A'}\right) \otimes \frac{\mathbbm{1}_S}{2} \, .
\end{equation}

Considering the rates given in Eq. (\ref{gammaxyz}), the evolution induced by the dynamical map $\Lambda_{\tau}^{(t_0,\alpha)}$ (see Eqs. (\ref{randomunit})) that precedes the action of $V^{(t_0,\alpha)}_{\tau+\Delta t,\tau}$, can be written as:
\begin{eqnarray}\label{ru}
\Lambda_{\tau}^{(t_0,\alpha)}(\sigma_x)&&=\left( \expp{-\tau}  \frac{\cosh(\tau-t_0)}{\cosh(t_0)} \right)^{\alpha/2}\sigma_x \equiv \lambda_{xy}^{(t_0,\alpha)}(\tau) \, \sigma_x \, ,\nonumber\\
\Lambda_{\tau}^{(t_0,\alpha)}(\sigma_y)&&=\left( \expp{-\tau}  \frac{\cosh(\tau-t_0)}{\cosh(t_0)} \right)^{\alpha/2} \sigma_y\equiv \lambda_{xy}^{(t_0,\alpha)}(\tau) \,  \sigma_y \, ,\nonumber\\
\Lambda_{\tau}^{(t_0,\alpha)}(\sigma_z)&&=\expp{-\alpha \tau} \sigma_z\equiv \lambda_{z}^{(\alpha)}(\tau) \, \sigma_z \, , \nonumber\\
\Lambda_{\tau}^{(t_0,\alpha)}(\mathbbm{1})&&=\mathbbm{1} \, ,
\end{eqnarray}
where, for $\tau>t_0$, we have $\lambda_{xy}^{(t_0,\alpha)}(\tau)>\lambda_{z}^{(\alpha)}(\tau)$.
The state $\rho_{A'S}''^{(\tau)} (\tau) $, since it assumes the form $\rho_{A'}\otimes \mathbbm{1}_S/2$ and the evolution is random unitary, it   is stationary for $\mathcal{I}_{A'}\otimes \Lambda_{t}^{(t_0,\alpha)}$. Therefore, $\rho_{A'S}''^{(\tau)} (0)=( \mathcal{I}_{A'}\otimes \Lambda_{t}^{(t_0,\alpha)})^{-1} (\rho_{A'S}''^{(\tau)} (\tau))=\rho_{A'S}''^{(\tau)} (\tau)$. Conversely, $\rho_{A'S}'^{(\tau)} (\tau)$  is not a stationary state and
$( \mathcal{I}_{A'}\otimes \Lambda_{\tau}^{(t_0,\alpha)})^{-1} (\rho_{A'S}'^{(\tau)} (\tau))$ is not physical for every $p\in[0,1]$. Indeed, we can write
\begin{eqnarray}\label{rrrhoB'}
&\rho_{A'S}'^{(\tau)}& (0)= \left(\Lambda_{\tau}^{(t_0,\alpha)}\right)^{-1}( \rho_{A'S}'^{(\tau)} (\tau) )=\frac{ (\ketbra{0}{0}_{A'} + \ketbra{1}{1}_{A'} )\otimes \mathbbm{1}_S}{4}\nonumber\\
&+&\frac{p}{\lambda_{xy}^{(t_0,\alpha)}(\tau)} \frac{ \sigma_x\otimes \sigma_x-\sigma_y\otimes \sigma_y}{4} + \frac{p}{\lambda_{z}^{(\alpha)}(\tau)} \frac{\sigma_z\otimes \sigma_z}{4} \, ,
\end{eqnarray}
which is a physical state for ${p}/{\lambda_{xy}^{(t_0,\alpha)}(\tau)}<{p}/{\lambda_{z}^{(\alpha)}(\tau)}\leq 1$ (see Section \ref{IVDrand}). Therefore, if $p$ satisfies the condition $p<\lambda_{z}^{(\alpha)}(\tau)$, Eq. (\ref{rrrhoB'}) is the physical initial state $\rho_{A'S}'^{(\tau)} (0)$ which at time $\tau$ is evolved to the state given in Eq. (\ref{rrhoB'}), and therefore, together with $\rho_{A'S}''^{(\tau)} (0)$, fulfills the requirements of the constructive method given in \cite{bogna}. 

The construction of the state $\rho_{AB}^{(\tau)}(0)$ is now straightforward (see Eq. (\ref{hatrho})). Being $A$ an ancillary qubit for which we adopt the orthonormal basis $\{\ketbra{0}{0}_A,\ketbra{1}{1}_A\}$, we have
\begin{eqnarray}\label{explicitprobe}
\rho_{AB}^{(\tau)}(0) =\frac{1}{2}\left(\ketbra{0}{0}_A \otimes \rho_{B}'^{(\tau)} (0) + \ketbra{1}{1}_A \otimes \rho_{B}''^{(\tau)} (0) \right) \, ,
\end{eqnarray}
where $ \rho_{B}'^{(\tau)} (0)$ is given in Eq. (\ref{rrrhoB'}) and $ \rho_{B}''^{(\tau)} (0)$ in Eq. (\ref{rrhoB''}).

\subsection{Scenarios where ME-POVMs have more than two outputs}\label{VI'B}

As mentioned, similar correlation measures $C^{(n)}$ can be obtained by fixing the number of outputs of the ME-POVMs to any integer $n\geq 3$. Indeed, similarly to Eqs. (\ref{CA}), (\ref{CB}) and (\ref{C}), we define:
\begin{equation}\label{CAn}
C_A^{(n)} (\rho_{AB}) \equiv \max_{ \{P_{A,i}  \}_{i=1}^n \in \Pi_n \left( \rho_{A} \right) }  P_g \left( \mathcal{E} \left( \rho_{AB} ,\left\{ P_{A,i}\right\}_{i=1}^n  \right) \right)  - \frac{1}{n} \, ,
\end{equation} 
\begin{equation}\label{CBn}
C_B^{(n)} (\rho_{AB}) \equiv \max_{ \{P_{B,i}  \}_{i=1}^n \in \Pi_n \left( \rho_{B} \right) }  P_g \left( \mathcal{E} \left( \rho_{AB} ,\left\{ P_{B,i}\right\}_{i=1}^n  \right) \right)  - \frac{1}{n} \, ,
\end{equation}
and
\begin{equation}\label{Cn}
C^{(n)} (\rho_{AB}) \equiv \max \left\{C_A^{(n)} (\rho_{AB}),C_B^{(n)} (\rho_{AB}) \right\} \, .
\end{equation}
The measuring scenario considered by $C_A^{(n)}(\rho_{AB})$, and similarly for $C_B^{(n)}(\rho_{AB})$, is described in Fig. \ref{figees}. Indeed, the value of $C_A^{(2)} (\rho_{AB}) $ is given when Alice chooses an $n$-output ME-POVM that maximizes the guessing probability of the output ensemble generated on Bob's side.

Moreover, we define $ C(\rho_{AB}) \equiv \max \, \{ C_A (\rho_{AB}), \, C_B (\rho_{AB}) \} $, where $C_A(\rho_{AB})$ ($C_B(\rho_{AB})$) is obtained without fixing the number of outputs of the ME-POVMs in $\Pi(\rho_A)$ ($\Pi(\rho_B)$), namely
\begin{equation}\label{C_Aorig}
C_A (\rho_{AB}) \equiv \max_{ \{P_{A,i}  \}_i \in \Pi \left( \rho_{A} \right) }  P_g \left( \mathcal{E} \left( \rho_{AB} ,\left\{ P_{A,i}\right\}_i  \right) \right)  - \frac{1}{2} \, . 
\end{equation} 
We define $C_B(\rho_{AB})$ similarly.

The details of the proof that shows that $C^{(n)}$ (for any $n\geq 3$) and $C$ are valid correlation measures, i.e., satisfy the conditions given in Sec. \ref{III0corrme}, can be found in Appendix \ref{monolocal}. {Moreover, in Appendix \ref{conti} we show that the correlation measures $C^{(n)}$ are continuous on the set of states  for any $n\geq 2$ and in Appendix \ref{bounded} we give a bound for the number of outcomes of the ME-POVMs that Alice has to consider in order to evaluate $C^{(n)}_A(\rho_{AB})$.}

In Section \ref{VI'A} we gave a physical interpretation of the value of the measure $C^{(2)}_A(\rho_{AB})$ in terms of the most distinguishable pair of output states that Alice, performing two-output ME-POVMs on her share of $\rho_{AB}$, can produce on Bob's side. Increasing to $n\geq 3$ the number of outputs of the ME-POVMs adopted by Alice, this interpretation can be easily adopted also for $C^{(n)}_A$. 

The meaning of the correlation measure $C_A$, and consequently of $C_B$ and $C$, is slightly different. Indeed, $C_A(\rho_{AB})$ represents the guessing probability of the most distinguishable ensemble of states that Alice can produce on Bob's side applying \textit{any} ME-POVM on $\rho_{AB}$, namely without fixing the number of states of the output ensemble obtained by Bob (although the number of outputs should be larger than one). Note that for product states the best measurement consists of only two outputs.

Making use of the probe state $\rho^{(\tau)}_{AB}(t)$ given in Eq. (\ref{hatrho}), also the correlation measure $C$ is able to witness almost-all non-Markovian evolutions. Indeed, in Appendices \ref{2enough} and \ref{2enoughA} we prove that for this state $C_B(\rho_{AB}^{(\tau)}( t))=C_B^{(2)}(\rho_{AB}^{(\tau)}( t))$ and $C_A(\rho_{AB}^{(\tau)}(t)) = C_A^{(2)}(\rho_{AB}^{(\tau)}(t))$. From these results it follows that for this initial probe state $C^{(2)} (\rho_{AB}^{(\tau)}( t)) = C (\rho_{AB}^{(\tau)}( t)) $ at any time $t\geq 0$. Therefore, for this state, $ C (\rho_{AB}^{(\tau)}( t))$ can {also} be calculated efficiently using SDP and shows the same correlation backflow given in Eq. (\ref{ddtCA}).

\subsection{Problems to witness any non-Markovian dynamics}\label{VFbusc}

The construction of the correlation measure is inspired by the results in~\cite{bogna}, valid for distinguishability measures for dynamics that at most point-wise non-bijective. However, in Ref. \cite{Buscemi&Datta} it was shown that for any non CP-divisible intermediate map $V_{s,t}$ an ensemble of states $\overline {\mathcal{E}}=\{p_i,\rho_i\}$ exists such that the guessing probability $P_g(\overline {\mathcal{E}})$ increases between time $t$ and $s$. How to construct such an ensembles is not known and there is no reason to believe that an equiprobable ensemble with this property exists for a general dynamics. A natural question is therefore if a correlation measure of the type described in Sect. \ref{V0dario} can be defined that shows an increase when $P_g(\mathcal{\bar{E}})$ increases.

A first attempt in this direction could go as follows. It is possible to generalize the correlation measure $C_A(\rho_{AB})$ by performing a maximization of $P_g$ over the set of POVMs $\Pi_A^p(\rho_{A})$ that produce some given fixed output probabilities $\{p_i\}$ when acting on $\rho_A$, not necessarily fully entropic. The resulting correlation measure  $C_A^p(\rho_{AB})$ reads
\begin{eqnarray}
C_A^P(\rho_{AB})\equiv\max_{\{P_A^i\}_i\in \Pi_A^p(\rho_{A})}P_g[\mathcal{E}(\rho_{AB},\{P_A^i\}_i)].\end{eqnarray}
The idea would be to mimic the previous result by using as initial probe a state consisting of the ensemble states of Ref.~\cite{Buscemi&Datta} for systems $S$ and $A'$ correlated with orthogonal states on $A$ according to the probabilities in the ensemble.

Unfortunately, this approach does not work. In fact, we show that for a classical-quantum state $\rho_{AB}$ of the type 
\begin{eqnarray}\label{ssst}
\rho_{AB}=\sum_i p_i |i\rangle\langle i|\otimes \rho_i.
\end{eqnarray}
the ensemble that maximizes $P_g$ is not defined by the projections on the states $\ket{i}$ in system $A$.
We show this by constructing a counterexample. Consider the state
\begin{eqnarray}
\rho_{AB}=p_1 |1\rangle\langle 1|\otimes \rho_1+p_2 |2\rangle\langle 2|\otimes \rho_2+p_3 |3\rangle\langle 3|\otimes \rho_3,
\end{eqnarray}
where $2p_3>p_1>2p_2$ and
\begin{eqnarray}
\rho_1=\left(\begin{array}{cc}
\frac{1}{2} & 0 \\
0 & \frac{1}{2}\end{array}\right),\rho_2=\left(\begin{array}{cc}
1 & 0 \\
0 & 0\end{array}\right),\rho_3= \left(\begin{array}{cc}
0 & 0 \\
0 & 1\end{array}\right).
\end{eqnarray}
For this example we can directly find $P_g$. To do so we use that $P_g$ is the solution to a convex optimization problem where strong duality holds
~\cite{yuen,eldar}. The dual formulation of $P_g$ is
\begin{eqnarray}
P_g=\min_{K} \Tr K\phantom{o}\textrm{s.t.}\phantom{o} K\geq p_i\rho_i \phantom{o}\forall i.
\end{eqnarray}

For the output ensemble $\mathcal{{E}}=\{\{p_1,p_2,p_3\}, \{\rho_1,\rho_2,\rho_3\}\}$, achieved by the p-POVM $\{|1\rangle\langle 1|,|2\rangle\langle 2|,|3\rangle\langle 3|\}$, it can be seen that $P_g=p_1/2+p_3$ by directly constructing the optimal $K$. However, for the output ensemble $\{\{p_1,p_2,p_3\}, \{(1-p_2/p_1)\rho_1+p_2/p_1\rho_2,\rho_1,\rho_3\}\}$ achieved by a non-projective p-POVM $\{(1-p_2/p_1)|1\rangle\langle 1|+|2\rangle\langle 2|,p_2/p_1|1\rangle\langle 1|,|3\rangle\langle 3|\}$ the construction of the optimal $K$ gives $P_g=p_1/2+p_2/2+p_3$. 
Thus, in general the maximization of $P_g$ over $\Pi_A^p(\rho_{A})$ does not produce the desired ensemble for systems $S$ and $A'$. Therefore, it is not clear to us how to adapt the results in~\cite{Buscemi&Datta} into a correlation measure that detects all non-Markovian evolutions. 

\section{Quasi-correlation measures}

We would like to conclude our study by discussing the use of what we call quasi-correlation measures in the context of non-Markovian detection. Note that while for a correlation measure it is demanded that it does not increase by local operations, for the detection of non-Markovianity it suffices to consider functions that do not increase under the action of operations by one of the parties (the one evolving through the dynamics). An increase on the value of these measures is enough to detect the failure of CP divisibility. We name quasi-correlation measures those functions of a bipartite state that do not increase when applying operations on only one half of the state.

An example of such measures is the version of the singlet fraction defined in Ref.~\cite{renner}. Given a bipartite state $\rho_{AB}$, it is defined as 
\begin{equation}
\label{singlfr}
\text{SF}(\rho_{AB})=\max_\epsilon \bra{\phi^+}(\mathbbm{1}_A\otimes\epsilon)(\rho_{AB})\ket{\phi^+} ,
\end{equation}
that is the maximum of the fidelity with a maximally entangled state optimised over local operations by one of the parties. Other definitions of the singlet fraction allow operations by both parties and include classical communication. It is easy to see that this measure cannot increase by local operations on the second system, as any further local processing can always be adsorbed in the optimization in~\eqref{singlfr}. 

It immediately follows from previous results that this measure detects all non-Markovian dynamics. In~\cite{renner}, it was shown that for classical-quantum correlated states of the form 
\begin{equation}
\rho_{\text{CQ}}=\sum_i p(i) \ket i \bra i \otimes \rho_i ,
\end{equation}
the singlet fraction is equal to the guessing probability of the ensemble $\{p(i),\rho_i\}$. Therefore, we can combine this with the results in~\cite{Buscemi&Datta}, proving the existence of an ensemble with increasing guessing probability for any non-Markovian dynamics, to conclude that this version of the singlet fidelity also detects all such dynamics.

\section{Discussion}\label{VI0}

Understanding the operational consequences of non-Markovian effects in terms of information backflow is a fundamental question. In this work, we focus on correlations and study how they can be used to detect the failure of CP-divisibility. We have identified strengths and weaknesses of several known correlation measures. In particular, we have shown that:
\begin{itemize}
\item {Non-Markovian effects in single-parameter dynamics, such as depolarization, dephasing or amplitude damping, always lead to correlation backflows for any continously differentiable measure that is not time-independent on the image of the the preceding evolution};
\item {It is possible to detect backflows in the mutual information for any qubit unital non-P-dividible dynamics};
\item Maximally entangled states are not necessarily optimal for observing backflows in the mutual information;
\item {There exist quasi-eternal non-Markovian dynamics with no backflow in the mutual information};
\item  {There exist quasi-correlation measures that an be used for non-Markovianity detection and always show a backflow.}
\end{itemize}

{In addition to this we have revisited and expanded upon or exemplified several points from Ref. \cite{previous}:}
\begin{itemize}
\item {Measures of entanglement between the system and an ancilla cannot provide any backflow in those cases where the non-Markovian dynamics is P-divisible and appear only after the dynamics has become entanglement breaking};
\item {Measures based on state distinguishability that detect almost all non-Markovian dynamics can be constructed in a setting with two additional particles}.
\end{itemize}

Our results clarify many issues but also point to several open questions. The most obvious one is to construct a correlation measures able to detect all dynamics that are not CP-divisible, either by adapting the results in~\cite{Buscemi&Datta} to our approach, or by considering a novel approach. A second open question is to understand if the use of the second additional particle can be of use for existing correlation measures, such as those based on entanglement or mutual information. 

{As mentioned, while completing this work, we became aware of the work \cite{Janek} where it is shown that negativity, a computable entanglement measure, is able to detect all the bijective non-Markovian dynamics, and even non-bijective in the case of qubits, in the enlarged scenario using two auxiliary particles. Note that this second particle is also crucial to extend the detection power of our correlation measure and detect all at most point-wise non-bijective evolutions. Whether similar effects can be observed for other correlation measures, such as the mutual information, deserves further investigation.}

\begin{acknowledgments}
Support from the ERC CoG QITBOX, the AXA Chair in Quantum Information Science,
Spanish MINECO (QIBEQI FIS2016-80773-P and Severo Ochoa SEV-2015-0522),
Fundaci\'o Privada Cellex, and the
Generalitat de Catalunya (SGR1381 and CERCA Program) is acknowledged.
D.D.S. acknowledge support from the ICFOstepstone programme, funded by the Marie Sk\l odowska-Curie COFUND action (GA665884).
B.B. acknowledge support by the ICFO-MPQ
Fellowship.
Fellowship. N.K.B. thanks the Brazilian funding agency CAPES.
\end{acknowledgments}

\appendix

\section{Nonzero time derivatives of initially zero eigenvalues of rank one matrices for non-unitary single parameter maps}\label{dermu}

Let $\rho$ be a positive semidefinite Hermitian matrix.
Consider an eigenvalue $\lambda_k$ of $\rho$ and its corresponding normalized eigenvector $u_k$. By definition it holds that $\lambda_k=u_k^{\dagger}\rho u_k$. The time derivative of $\lambda_k$ is 

\begin{eqnarray}\frac{d \lambda_k}{dt}&=&\frac{d u_k^{\dagger}}{dt}\rho u_k+u_k^{\dagger}\rho \frac{du_k}{dt}+u_k^{\dagger}\frac{d\rho}{dt} u_k\nonumber\\
&=&\lambda_k\left(\frac{d u_k^{\dagger}}{dt}u_k+u_k^{\dagger} \frac{du_k}{dt}\right)+u_k^{\dagger}\frac{d\rho}{dt} u_k.\nonumber\\
\end{eqnarray}

 Since $u_k$ is normalized it follows that its derivative is orthogonal to $u_k$, i.e., $\frac{d u_k^{\dagger}}{dt}u_k=0$. If the evolution of $\rho$ is described by a continuously differentiable family of dynamical maps so that $\frac{d\rho}{dt}=\frac{d}{ds}V_{s,t}(\rho)|_{s=t}=\mathcal{L}_t(\rho)$ it follows that $\frac{d \lambda_k}{dt}=u_k^{\dagger}\mathcal{L}_t(\rho) u_k$.

Next consider the special case where $\rho$ is a a rank one positive semidefinite trace one $n\times n$ matrix and consider its block diagonal form.

\begin{eqnarray}
\rho=\left(\begin{array}{cc}
1 & \mathbf{0} \\
\mathbf{0} & \mathbf{0}\end{array}\right),\end{eqnarray} 
where $1$ represents the $1\times 1$ block corresponding to the nonzero eigenvalue $1$ and $\mathbf{0}$ represents the $n\times 1$, $1\times n$, and $n\times n$ zero blocks. We want to investigate $\frac{d}{ds}V_{s,t}(\rho)|_{s=t}=\mathcal{L}_t(\rho)$ for the case of single parameter evolution. In particular we want to study the projection of $\mathcal{L}_t(\rho)$ onto the zero-eigenspace of $\rho$.

First we consider the unitary part of $\mathcal{L}_t$. Let $P_0$ be the projector onto the zero eigenspace of $\rho$. One easily finds that $P_0[H,\rho]P_0=0$ for any $H$.

Then we consider $\left( G_k  \rho G_k^{\dagger}-\frac{1}{2}\left\{G_k^{\dagger} G_k,\rho\right\}\right)$ and express the matrix $G_k$ on the same block form as $\rho$, i.e.,
\begin{eqnarray}
G_k=\left(\begin{array}{cc}
A_k & B_k \\
C_k & D_k\end{array}\right),
\end{eqnarray}
where $A_k$ is the $1\times 1$ block. The projection of $\left( G_k  \rho G_k^{\dagger}-\frac{1}{2}\left\{G_k^{\dagger} G_k,\rho\right\}\right)$ onto the zero eigenspace of $\rho$ is then 
\begin{eqnarray}
P_0\left( G_k  \rho G_k^{\dagger}-\frac{1}{2}\left\{G_k^{\dagger} G_k,\rho\right\}\right)P_0= C_kC_k^{\dagger}.
\end{eqnarray}
The matrix $C_kC_k^{\dagger}$ is clearly Hermitian and positive semidefinite. It follows that 
\begin{eqnarray}
P_0\mathcal{L}_tP_0= \gamma(t)\sum_kC_kC_k^{\dagger},
\end{eqnarray}
is also a positive semidefinite matrix if $\gamma(t)>0$ and negative semidefinite if $\gamma(t)<0$. Moreover, $P_0\mathcal{L}_tP_0$ is zero if and only if $C_k$ is zero for every $k$.
Hence, if and only if for each $k$ the lower off-diagonal $n\times 1$ block of $G_k$ is zero in every basis will there be no rank one $\rho$ such that $P_0\mathcal{L}_tP_0$ is nonzero. In this case $G_k$ is proportional to the identity which implies $\left( G_k  \rho G_k^{\dagger}-\frac{1}{2}\left\{G_k^{\dagger} G_k,\rho\right\}\right)=0$ for every $\rho$.
Thus, for any $\mathcal{L}_t$ with non-zero dissipative part there exist at least one rank one $\rho$ such that the time derivative of the initial zero-eigenspace is nonzero.

To analyse the special case when $\rho=\phi^+_n$ where $\phi^+_n$ is the maximally entangled state on $\mathcal{H}\otimes\mathcal{H}$ and $\dim(\mathcal{H})=n$
we note that the condition $C_k=0$ can be formulated as $G_k\rho=\rho G_k \rho$.
We write $\phi^+=1/n\sum_{ij}E_{ij}\otimes E_{ij}$ where $E_{ij}$ is the matrix with the $ij$th element equal to $1$ and all other elements equal to zero.  We write $G_k=\mathbbm{1}\otimes G$ where $G$ is any matrix. Then

\begin{eqnarray}
G_k\rho&=&\frac{1}{n}\sum_{ij}E_{ij}\otimes G E_{ij}\nonumber\\
\rho G_k \rho&=&\frac{1}{n^2}\sum_{ijl}E_{li}E_{ij}\otimes E_{li} G E_{ij}=\frac{\Tr(G)}{n^2}\sum_{jl}E_{lj}\otimes E_{lj}.\nonumber\\
\end{eqnarray}
These two expressions are equal if and only if $G E_{ij}=\Tr(G)/nE_{ij}$ for each $ij$. Since the matrices $E_{ij}$ form a basis for the matrix space it follows that  $G_k\rho=\rho G_k \rho$ if and only if $G_k\propto \mathbbm{1}\otimes \mathbbm{1}$. As noted before this implies $\left( G_k  \rho G_k^{\dagger}-\frac{1}{2}\left\{G_k^{\dagger} G_k,\rho\right\}\right)=0$ for every $\rho$, and thus for any $\mathcal{L}_t$ with non-zero dissipative part there is an eigenvalue of $V_{s,t}(\phi^+_n)$ that is zero for $s=t$ but has a non-zero time derivative.



\section{Proof of Proposition \ref{prop3}}\label{critic2}

Let $M(\bar{a})$ be a correlation measure that is an analytic function of the coordinates $a_i$ in a point $\bar{a}$ corresponding to a product state $\rho$ and let $V_{s,t}$ be a continuously differentiable intermediate map for all $s,t$.
Consider a family of states $\rho_{\epsilon}=\rho+\epsilon \chi$ where $\chi$ is Hermitian. The Taylor expansion of $M(\rho_{\epsilon})$ in $\epsilon$ around $\epsilon=0$ is

\begin{eqnarray}
M(\rho_{\epsilon})=\frac{\partial M(\rho_{\epsilon})}{\partial \epsilon}\Bigg{|}_{\epsilon=0}\epsilon+\frac{\partial^2 M(\rho_{\epsilon})}{2\partial \epsilon^2}\Bigg{|}_{\epsilon=0}\epsilon^2+\dots,\nonumber\\
\end{eqnarray}
where we have used that $M(\rho_\epsilon)|_{\epsilon=0}=0$. Since $M\geq 0$ on $S(\mathcal{H}_A\otimes \mathcal{H}_S)$ it follows that the first order term of the expansion must be zero if $\rho\in \Int[ S(\mathcal{H}_A\otimes \mathcal{H}_S)]$. Otherwise there would be a sufficiently small $\epsilon$ for which both $\rho_{\epsilon},\rho_{-\epsilon}\in \Int[ S(\mathcal{H}_A\otimes \mathcal{H}_S)]$ and  either $M(\rho_{\epsilon})$ or $M(\rho_{-\epsilon})$ was negative.

Assume that $\rho\in \Int[ S(\mathcal{H}_A\otimes \mathcal{H}_S)]$ and consider the Taylor expansion of $M[\mathcal{I}\otimes V_{s,t}(\rho_{\epsilon})]$ in $\epsilon$ around $\epsilon=0$

\begin{eqnarray}
M[\mathcal{I}\otimes V_{s,t}(\rho_{\epsilon})]=\frac{\partial M[\mathcal{I}\otimes V_{s,t}(\rho_{\epsilon})]}{\partial \epsilon}\Bigg{|}_{\epsilon=0}\epsilon\nonumber\\
+\frac{\partial^2 M[\mathcal{I}\otimes V_{s,t}(\rho_{\epsilon})]}{2\partial \epsilon^2}\Bigg{|}_{\epsilon=0}\epsilon^2+\dots\nonumber,\\
\end{eqnarray}
where we have used that $M(\mathcal{I}\otimes V_{s,t}(\rho_{\epsilon}))|_{\epsilon=0}=0$. Since $M\geq 0$ it follows again that the first order term of the expansion must be zero.
Thus $\frac{\partial M[(\mathcal{I}\otimes V_{s,t}(\rho_{\epsilon})]}{\partial \epsilon}\big{|}_{\epsilon=0}=0$ for all $s$. 

Next, consider the Taylor expansion of the derivative $\frac{d}{ds}M[\mathcal{I}\otimes V_{s,t}(\rho_{\epsilon})]|_{s=t}$ in $\epsilon$

\begin{eqnarray}
\frac{d}{ds}M[\mathcal{I}\otimes V_{s,t}(\rho_{\epsilon})]\Bigg{|}_{s=t}=\epsilon\frac{\partial}{\partial \epsilon}\frac{d}{ds}M[\mathcal{I}\otimes V_{s,t}(\rho_{\epsilon})]\Bigg{|}_{s=t}\Bigg{|}_{\epsilon=0}\nonumber\\
+\epsilon^2\frac{\partial^2 }{2\partial \epsilon^2}\frac{d}{ds}M[\mathcal{I}\otimes V_{s,t}(\rho_{\epsilon})]\Bigg{|}_{s=t}\Bigg{|}_{\epsilon=0}+\dots,\nonumber\\
\end{eqnarray}
where we have used that $\frac{d}{ds}M(\mathcal{I}\otimes V_{s,t}(\rho_{\epsilon}))\big{|}_{\epsilon=0}=0$.
Due to the continuous differentiability of $V_{s,t}$ it follows that $\frac{\partial}{\partial \epsilon}\frac{d}{ds}M[(\mathcal{I}\otimes V_{s,t}(\rho_{\epsilon})]$ is continuous, and therefore 
$\frac{\partial}{\partial \epsilon}\frac{d}{ds}M[(\mathcal{I}\otimes V_{s,t}(\rho_{\epsilon})]=\frac{d}{ds}\frac{\partial}{\partial \epsilon}M[(\mathcal{I}\otimes V_{s,t}(\rho_{\epsilon})]$ \cite{rudin}. It follows that the first order term in the Taylor expansion is zero.
Since this holds for every $\chi$ it follows that every product state in $\Int[ S(\mathcal{H}_A\otimes \mathcal{H}_S)]$ is a critical point of $M$.

\section{Proof of Proposition \ref{prop4}}\label{critic3}

We begin by considering the following two propositions.

\begin{proposition}\label{proood}
Let $\{\Lambda_{t}\}_t:B(\mathcal{H}_S)\to B(\mathcal{H}_S)$ be a family of CP qubit maps.
If the set of stationary states in $S(\mathcal{H}_S)$ is of non-zero dimension, the $\Lambda_{t}$ are unital. 
\end{proposition}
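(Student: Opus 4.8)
The plan is to reduce the statement to the elementary fact that a completely positive, trace-preserving qubit map which fixes two distinct pure states is automatically unital, and then to show that a stationary set of non-zero dimension always forces the existence of two such fixed pure states.

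First I would pass to a single map: the set of stationary states of the family $\{\Lambda_t\}_t$ is contained in the fixed-point set of each individual $\Lambda_t$, so it suffices to prove that any such qubit map $\Lambda$ whose set of fixed states has non-zero dimension is unital. Pick two distinct fixed states $\rho_1\neq\rho_2$. For every real $\lambda$ the operator $\rho_2+\lambda(\rho_1-\rho_2)$ is Hermitian, has trace one, and lies in $\ker(\Lambda-\mathcal I)$ (since $\rho_2$ and $\rho_1-\rho_2$ do). It is positive semidefinite exactly on a bounded closed interval $[\lambda_-,\lambda_+]$: bounded because, rescaling by $1/\lambda$ and letting $\lambda\to\pm\infty$, an unbounded interval would force $\rho_1-\rho_2$ (resp.\ $\rho_2-\rho_1$) to be positive semidefinite, contradicting that it is nonzero and traceless; and $\lambda_-\neq\lambda_+$ because $\rho_1,\rho_2$ sit at $\lambda=1,0$. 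At the endpoints the operator acquires a zero eigenvalue, hence, being trace one, it is a pure state. So $\Lambda$ fixes two distinct pure states $\ket{\psi_1},\ket{\psi_2}$, which are linearly independent and therefore a basis of $\mathcal H_S$.

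Next I would exploit a Kraus decomposition $\Lambda(\cdot)=\sum_k A_k(\cdot)A_k^\dagger$ with $\sum_k A_k^\dagger A_k=\mathbbm 1$. From $\sum_k A_k\ketbra{\psi_i}{\psi_i}A_k^\dagger=\ketbra{\psi_i}{\psi_i}$, sandwiching between any vector orthogonal to $\ket{\psi_i}$ makes every summand vanish there, so $A_k\ket{\psi_i}=c_{k,i}\ket{\psi_i}$, i.e.\ both $\ket{\psi_1}$ and $\ket{\psi_2}$ are common eigenvectors of all Kraus operators; sandwiching between $\ket{\psi_i}$ itself gives $\sum_k|c_{k,i}|^2=1$. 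Now two cases. If $\langle\psi_1|\psi_2\rangle=0$, then in the orthonormal basis $\{\ket{\psi_1},\ket{\psi_2}\}$ every $A_k$ is diagonal, so $\Lambda(\mathbbm 1)=\sum_k A_kA_k^\dagger$ is diagonal with entries $\sum_k|c_{k,1}|^2=\sum_k|c_{k,2}|^2=1$, i.e.\ $\Lambda(\mathbbm 1)=\mathbbm 1$. If $\langle\psi_1|\psi_2\rangle\neq0$, write $\ket{\psi_2}=a\ket{\psi_1}+b\ket{\psi_1^\perp}$ with $a\neq0$ (non-orthogonality) and $b\neq0$ (distinctness); the eigenvalue conditions force $A_k$ to be upper triangular in $\{\ket{\psi_1},\ket{\psi_1^\perp}\}$ with diagonal $(c_{k,1},c_{k,2})$ and upper-right entry $\tfrac ab(c_{k,2}-c_{k,1})$. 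The $(2,2)$ component of $\sum_k A_k^\dagger A_k=\mathbbm 1$, combined with $\sum_k|c_{k,2}|^2=1$ and $a\neq0$, yields $\sum_k|c_{k,2}-c_{k,1}|^2=0$, hence $c_{k,1}=c_{k,2}$ for all $k$, so $A_k=c_{k,1}\mathbbm 1$ and $\Lambda$ is a positive multiple of the identity, hence the identity — again unital.

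Since this argument applies to every member of the family, all $\Lambda_t$ are unital. The only genuinely delicate step is the passage from a stationary set of positive dimension to two fixed pure states — one must be careful that the extremal parameters $\lambda_\pm$ are finite and that the endpoint operators are genuine trace-one pure states; the rest is routine linear algebra. I would also flag that trace preservation is essential (it is what pins the normalisation in the non-orthogonal case), so the statement should be read with the $\Lambda_t$ CPTP, which is the relevant setting for the dynamical maps of the paper.
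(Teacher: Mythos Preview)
Your proof is correct, but it takes a genuinely different route from the paper's. Both arguments begin by exploiting the traceless direction $\rho_1-\rho_2$, yet they part ways immediately after. The paper considers the line through the \emph{center} of the Bloch ball, i.e.\ the operators $\mathbbm{1}\pm x(\rho_1-\rho_2)$, chooses $x$ so that these are rank one (hence antipodal pure states), and then simply computes their images: by linearity $\Lambda_t(\mathbbm{1}\pm x(\rho_1-\rho_2))=\mathbbm{1}\pm x(\rho_1-\rho_2)+\theta$ with $\theta=\Lambda_t(\mathbbm{1})-\mathbbm{1}$. A nonzero traceless $\theta$ translates both boundary points by the same vector, pushing at least one outside the Bloch ball and violating positivity --- done in three lines, no Kraus operators, no case split.

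You instead stay on the affine line through $\rho_1,\rho_2$ itself, which is contained in $\ker(\Lambda-\mathcal I)$; this gives you two \emph{fixed} pure states that need not be orthogonal, and you then run a Kraus-operator analysis to handle both the orthogonal and non-orthogonal cases. Your route is longer but yields a little more: in the non-orthogonal case you actually prove $\Lambda=\mathcal I$, not merely unital, and the Kraus eigenvector step is dimension-independent, which may be useful if one ever wants to push beyond qubits. The paper's argument, by contrast, is pure Bloch-ball geometry and does not separate these cases. Your flag that trace preservation is essential is well taken and equally applies to the paper's proof (it is what makes $\theta$ traceless so that the translation genuinely pushes a boundary point out).
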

\begin{proof}

Assume that $\Lambda_{t}(\mathbbm{1})=\mathbbm{1}+\theta$ and $\Lambda_{t}(\rho_1)=\rho_1$, $\Lambda_{t}(\rho_2)=\rho_2$ where $\rho_1\neq\rho_2$. It follows that
$\Lambda_{t}(\mathbbm{1}+x(\rho_1-\rho_2))=\mathbbm{1}+x(\rho_1-\rho_2)+\theta$. Note that $x$ can be chosen such that $\mathbbm{1}\pm x(\rho_1-\rho_2)$ are rank one. Since these rank one qubit states are antipodal points on the Bloch ball $S(\mathcal{H}_S)$ it follows that unless $\theta=0$ at least one of 
$\mathbbm{1}+ x(\rho_1-\rho_2)+\theta$ and $\mathbbm{1}-x(\rho_1-\rho_2)+\theta$ is not positive semidefinite.
Thus if the set of stationary states has dimension greater than zero, it follows that $\Lambda_{t}$ is unital.
\end{proof}

\begin{proposition}\label{propend}
Let $\{\Lambda_{t}\}_t:B(\mathcal{H}_S)\to B(\mathcal{H}_S)$ be a family of CP qubit maps.
Then the set of stationary states of $\{\Lambda_{t}\}_t$ in $S(\mathcal{H}_S)$ has a dimension different from 2. 
\end{proposition}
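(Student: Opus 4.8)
The strategy is a proof by contradiction. Suppose the set $\mathcal{F}$ of common stationary states of $\{\Lambda_{t}\}_{t}$ in $S(\mathcal{H}_{S})$ has dimension exactly $2$. Since $\dim\mathcal{F}=2>0$, Proposition~\ref{proood} applies and every $\Lambda_{t}$ is unital. Fix $t$ and let $S_{t}\supseteq\mathcal{F}$ be the (convex) set of stationary states of $\Lambda_{t}$ alone; then $\dim S_{t}\ge 2$. Writing a state as $\rho=\tfrac12(\mathbbm{1}+\vec v\cdot\vec\sigma)$, the equation $\Lambda_{t}(\rho)=\rho$ is, by linearity and unitality ($\Lambda_{t}(\mathbbm{1})=\mathbbm{1}$), a system of linear equations in $\vec v$, so $S_{t}$ corresponds to $W_{t}\cap\{|\vec v|\le 1\}$ for a linear subspace $W_{t}\subseteq\mathbb{R}^{3}$ with $\dim W_{t}\ge 2$. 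In particular $W_{t}$ contains a two-dimensional subspace; after conjugating $\Lambda_{t}$ by a suitable qubit unitary — which preserves complete positivity, unitality, and the property of being the identity map, and which acts on Bloch vectors by an arbitrary rotation — we may assume this subspace is the $\sigma_{x}$--$\sigma_{y}$ plane, so that $\Lambda_{t}$ fixes $\tfrac12(\mathbbm{1}+v_{x}\sigma_{x}+v_{y}\sigma_{y})$ for all $v_{x},v_{y}$; evaluating this at $v_{y}=0$ and at $v_{x}=0$ gives $\Lambda_{t}(\sigma_{x})=\sigma_{x}$ and $\Lambda_{t}(\sigma_{y})=\sigma_{y}$.

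The key step is to feed this rigidity into complete positivity via the Choi matrix $C=\sum_{i,j}\ketbra{i}{j}\otimes\Lambda_{t}(\ketbra{i}{j})\ge 0$. From $\Lambda_{t}(\sigma_{x})=\sigma_{x}$ and $\Lambda_{t}(\sigma_{y})=\sigma_{y}$ one gets $\Lambda_{t}(\ketbra{0}{1})=\tfrac12\Lambda_{t}(\sigma_{x}+i\sigma_{y})=\ketbra{0}{1}$ and, likewise, $\Lambda_{t}(\ketbra{1}{0})=\ketbra{1}{0}$. Hence the principal $2\times2$ submatrix of $C$ on the vectors $\ket{00},\ket{11}$ has off-diagonal entry $\langle 0|\Lambda_{t}(\ketbra{0}{1})|1\rangle=1$ and diagonal entries $u\equiv\langle 0|\Lambda_{t}(\ketbra{0}{0})|0\rangle$ and $w\equiv\langle 1|\Lambda_{t}(\ketbra{1}{1})|1\rangle$. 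Positivity of $\Lambda_{t}$ (implied by CP) gives $u,w\ge 0$, and unitality together with positivity gives $u=1-\langle 0|\Lambda_{t}(\ketbra{1}{1})|0\rangle\le 1$ and, symmetrically, $w\le 1$. Since a principal submatrix of a positive semidefinite matrix is positive semidefinite, $uw-1\ge 0$; with $u,w\in[0,1]$ this forces $u=w=1$. Then $\langle 0|\Lambda_{t}(\ketbra{1}{1})|0\rangle=0$ with $\Lambda_{t}(\ketbra{1}{1})\ge 0$ forces its first row and column to vanish, so $\Lambda_{t}(\ketbra{1}{1})=\mathrm{diag}(0,w')$; symmetrically $\Lambda_{t}(\ketbra{0}{0})=\mathrm{diag}(u',0)$; and unitality then pins $u'=w'=1$, i.e. $\Lambda_{t}(\ketbra{0}{0})=\ketbra{0}{0}$ and $\Lambda_{t}(\ketbra{1}{1})=\ketbra{1}{1}$. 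Thus $\Lambda_{t}$ fixes the basis $\{\ketbra{0}{0},\ketbra{1}{1},\ketbra{0}{1},\ketbra{1}{0}\}$ of $B(\mathcal{H}_{S})$, so $\Lambda_{t}=\mathcal{I}$.

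Since this holds for every $t$, we obtain $\mathcal{F}=S(\mathcal{H}_{S})$, which has dimension $3$, contradicting $\dim\mathcal{F}=2$; hence the set of common stationary states can never have dimension exactly $2$. I expect the only delicate points of a polished write-up to be: (i) making explicit that complete positivity — not mere positivity — is what excludes dimension $2$, since a merely positive unital qubit map such as the one with Bloch action $\mathrm{diag}(1,1,c)$, $0\le c<1$, does possess a two-dimensional disk of stationary states (so the argument must genuinely use the Choi condition, or equivalently the Pauli-channel complete-positivity conditions of Eq.~(\ref{CPmap}) and Ref.~\cite{Hall}, which force $c=1$); and (ii) the two elementary facts invoked, namely that the stationary set of a unital $\Lambda_{t}$ is $W_{t}\cap\{|\vec v|\le1\}$ with $W_{t}$ a linear subspace, and that a vanishing diagonal entry of a positive semidefinite matrix annihilates its row and column.
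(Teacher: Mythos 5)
Your proof is correct and follows essentially the same route as the paper: contradiction, unitality via Proposition \ref{proood}, a unitary conjugation to assume two fixed Pauli directions, and then complete positivity of the Choi matrix forcing $\Lambda_t=\mathcal{I}$, which gives a three-dimensional stationary set. The only difference is cosmetic: you extract the contradiction from a $2\times 2$ principal submatrix of the Choi matrix (off-diagonal $1$ with diagonal entries in $[0,1]$), while the paper parametrizes $\Lambda_t(\sigma_x)=a\sigma_x+b\sigma_y+c\sigma_z$ and computes the Choi eigenvalues explicitly to force $a=1$, $b=c=0$.
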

\begin{proof}
Assume that the dimension of the set of stationary states in  $S(\mathcal{H}_S)$ is 2. From Prop. \ref{proood} follows that $\Lambda_{t}$ is unital. Without loss of generality we can assume that $\Lambda_{t}(\sigma_z)=\sigma_z$ and $\Lambda_{t}(\sigma_y)=\sigma_y$ and $\Lambda_{t}(\sigma_x)=a\sigma_x+b\sigma_y+c\sigma_z$. The Choi matrix of $\Lambda_{t}$ has eigenvalues $\pm\sqrt{1-2a+a^2+b^2+c^2}$ and $2\pm \sqrt{1+2a+a^2+b^2+c^2}$. Therefore $\Lambda_{t}$ is CP if and only if $a=1$ and $b=c=0$, i.e., if an only if $\Lambda_{t}=\mathcal{I}$. In this case the the set of stationary states has dimension 3, contradicting the assumption.
\end{proof}

Next, consider a family of continuously differentiable dynamical qubit maps $\{\Lambda_{t}\}_t$ and a correlation measure $M$. If the set of stationary states in $S(\mathcal{H}_S)$ is non-empty its dimension is either zero or non-zero. If the dimension is zero, the set of stationary states in $S(\mathcal{H}_A\otimes \mathcal{H}_S)$ is a set of product states. Then it follows from Prop. \ref{prop3} that such a stationary state is a critical point if $M$ is analytical at the state and the state is in the interior of $S(\mathcal{H}_A\otimes \mathcal{H}_S)$.

If the dimension is 3 all states are stationary points and $\frac{d M}{dt}=0$ on all of $S(\mathcal{H}_A\otimes \mathcal{H}_S)$. Thus all states are critical points. By Prop. \ref{propend} the dimension is never 2.
The remaining case is a one-dimensional set of stationary states. Without loss of generality we can express any state in this set as $\rho_{AS}=\rho_A\otimes{\mathbbm{1}}+\chi_A\otimes{\sigma_z}$ for some $\rho_A,\chi_A\in B(\mathcal{H}_A)$.

Now, assume that $\rho_{AS}$ is in the interior of $S(\mathcal{H}_A\otimes \mathcal{H}_S)$ and that the correlation measure $M$ is an analytic function at $\rho_{AS}$. Then
consider the family of states $\rho_{\epsilon}=\rho_{AS}+\epsilon\chi_A\otimes\chi_S$ parametrized by $\epsilon$, where $\chi_A,\chi_S$ are Hermitian and $\Tr(\chi_A\otimes\chi_S)=0$. 
If $\chi_S=\sigma_z$ or $\chi_S=\mathbbm{1}$ it follows that $\rho_{\epsilon}$ is also a stationary state and thus $\frac{\partial}{\partial \epsilon}\frac{d}{dt}M(\rho_{\epsilon},t)\big{|}_{\epsilon=0}=0$.
If $\chi_S=\sigma_x$ or $\chi_S=\sigma_y$ there exist a local unitary operation, $I\otimes\sigma_z$, that commutes with $\rho_{AS}$ but anticommutes with  $\chi_A\otimes\chi_S$.
Since $M(\bar{a},t)$ is invariant under local unitary operations it follows that $M(\rho_{\epsilon})=M(\rho_{-\epsilon})$. Thus  $M(\rho_{\epsilon})$ is an even analytic function in $\epsilon$ and it follows that $\frac{\partial}{\partial \epsilon}M(\rho_{\epsilon},t)\big{|}_{\epsilon=0}=0$.
Since the $\sigma_x,\sigma_y,\sigma_z$ and $\mathbbm{1}$ span $B(\mathcal{H}_S)$ 
we can conclude that $\frac{\partial}{\partial \epsilon}M(\rho_{\epsilon},t)\big{|}_{\epsilon=0}=0$ for every $\chi_A\otimes\chi_S$.
Moreover, $\frac{\partial}{\partial \epsilon}M(\rho_{\epsilon},t)\big{|}_{\epsilon=0}=0$ holds for any $t$. Therefore we can conclude that $\frac{d}{dt}\frac{\partial}{\partial \epsilon}M(\rho_{\epsilon},t)\big{|}_{\epsilon=0}=0$.
By the analyticity of $M$ and the continuous differentiability of $\Lambda_t$ it follows that 
$\frac{d}{dt}M$, $\frac{d}{d\epsilon}M$ and $\frac{d}{dt}\frac{d}{d\epsilon}M$ exist and that
$\frac{d}{dt}\frac{d}{d\epsilon}M$ is continuous as a function of $\epsilon$ and $t$. Therefore it follows that $\frac{d}{d\epsilon}\frac{d}{dt}M$ exist and $\frac{d}{d\epsilon}\frac{d}{dt}M=\frac{d}{dt}\frac{d}{d\epsilon}M$ \cite{rudin}.

We can conclude that all first derivatives of $\frac{d}{dt}M(\bar{a},t)$ with respect to $\bar{a}$ equal zero for states in the interior of $S(\mathcal{H}_A\otimes \mathcal{H}_S)$ that are stationary under a continuously differentiable $\{\Lambda_t\}_t$.

\section{Continuity of $C(\rho_{AB})$}\label{conti}

{We here show that the measures $C^{(n)}(\rho_{AB})$ for any $n$ as well as $C(\rho_{AB})$ are continuous functions on the set of states. Let $\Tr(\sum_i P_{A,i}\otimes P_{B,i} \rho_{AB})$ be an optimal solution to the optimization $\max_{\{P_{A,i}\}_{i=1}^n\in \Pi_n(\rho_{A})}P_g[\mathcal{E}(\rho_{AB},\{P_{A,i}\}_{i=1}^n)]$ (or alternatively $\max_{\{P_{A,i}\}_{i}\in \Pi(\rho_{A})}P_g[\mathcal{E}(\rho_{AB},\{P_{A,i}\}_{i})]$).}
Let $S_X$ be the set of traceless Hermitian operators $X$ satisfying $\Tr(XX^{\dagger})=1$ and consider a state $\tilde{\rho}_{AB}=\rho_{AB}+\epsilon X$ where $X\in S_X$. {Let
$\Tr(\sum_i \tilde{P}_{A,i}\otimes \tilde{P}_{B,i} \tilde{\rho}_{AB})=\Tr(\sum_i \tilde{P}_{A,i}\otimes \tilde{P}_{B,i} \rho_{AB})+\epsilon\Tr(\sum_i \tilde{P}_{A,i}\otimes \tilde{P}_{B,i} X)$ be an optimal solution to the optimization $\max_{\{P_{A,i}\}_{i=1}^n\in \Pi_A(\tilde{\rho}_{A})}P_g[\mathcal{E}(\tilde{\rho}_{AB},\{P_{A,i}\}_i)]$ (or alternatively $\max_{\{P_{A,i}\}_{i}\in \Pi(\rho_{A})}P_g[\mathcal{E}(\tilde{\rho_{AB}},\{P_{A,i}\}_{i})]$)}. Then we can construct the following upper and lower bounds 
\begin{eqnarray}\label{contin}
\Tr\left(\sum_i P_{A,i}\otimes P_{B,i} \rho_{AB}\right)-\epsilon \max_{X\in S_X,P_{A,i},P_{B,i} }\Bigg|\Tr\left(\sum_i P_{A,i}\otimes P_{B,i}  X\right)\Bigg|\nonumber\\
\leq\Tr\left(\sum_i P_{A,i}\otimes P_{B,i}  \rho_{AB}\right)+\epsilon\Tr\left(\sum_i P_{A,i}\otimes P_{B,i}  X\right)\nonumber\\\leq\Tr\left(\sum_i \tilde{P}_{A,i}\otimes \tilde{P}_{B,i}  \rho_{AB}\right)+\epsilon\Tr\left(\sum_i \tilde{P}_{A,i}\otimes \tilde{P}_{B,i}  X\right)\nonumber\\
\leq \Tr\left(\sum_i P_{A,i}\otimes P_{B,i}  \rho_{AB}\right)+\epsilon\Tr\left(\sum_i \tilde{P}_{A,i}\otimes \tilde{P}_{B,i}  X\right)\nonumber\\
\leq \Tr\left(\sum_i P_{A,i}\otimes P_{B,i} \rho_{AB}\right)+\epsilon \max_{X\in S_X,P_{A,i},P_{B,i}}\Bigg|\Tr\left(\sum_i P_{A,i}\otimes P_{B,i}  X\right)\Bigg|,\nonumber\\
\end{eqnarray}
{where we have used that $\Tr(\sum_i \tilde{P}_{A,i}\otimes \tilde{P}_{B,i} \rho_{AB})\leq\Tr(\sum_i P_{A,i}\otimes P_{B,i} \rho_{AB})$ and $\Tr(\sum_i P_{A,i}\otimes P_{B,i} \tilde{\rho}_{AB})\leq \Tr(\sum_i \tilde{P}_{A,i}\otimes \tilde{P}_{B,i} \tilde{\rho}_{AB})$.
Since any eigenvalues of an $X\in S_X$ is smaller or equal to one, the lower and upper bounds in Eq. (\ref{contin}) converge to $\Tr(\sum_i P_{A,i}\otimes P_{B,i} \rho_{AB})$ as $\epsilon$ converges to zero. An analogous argument can be made for the case where the roles of $A$ and $B$ are interchanged. Thus $C^{(n)}(\tilde{\rho}_{AB})$ converges to  $C^{(n)}(\rho_{AB})$ as $\tilde{\rho}_{AB}$ converges to $\rho_{AB}$ for any $\rho_{AB}$. Likewise $C(\tilde{\rho}_{AB})$ converges to  $C(\rho_{AB})$ as $\tilde{\rho}_{AB}$ converges to $\rho_{AB}$ for any $\rho_{AB}$.}

\section{Sufficient number of POVM elements}\label{bounded}

We can express the correlation measure $C (\rho_{AB})$ as
\begin{equation}\label{Capp}
C (\rho_{AB}) = \max_{\substack{\{ P_{A,i}^{(n)}\}_{i}\in \Pi_A(\rho_{AB}) \\\text{or}\\ \{ P_{B,i}^{(n)}\}_{i}\in \Pi_B(\rho_{AB})}}\max_{n:n\geq 2}\max_{\{ P_{A,i}^{(n)}\}_{i},\{ P_{B,i}^{(n)}\}_{i}}  \sum_i^{n}\Tr\left( \rho_{AB}  P^{(n)}_{A,i}\otimes{P^{(n)}_{B,i}} \right)  - \frac{1}{2},
\end{equation}
where the first maximization is over constraining either $\{ P_{A,i}^{(n)}\}_{i}$ or $\{ P_{B,i}^{(n)}\}_{i}$ to be a ME-POVM and $\max_{n:n\geq 2}$ denotes maximization over the number of POVM-elements running over all integers $n\geq 2$. 

While the number of POVM elements in the maximization is not constrained, here we show that it is sufficient to maximize over a finite range of integers.
We give a proof that it is sufficient to consider $n=2$ and odd $n$ in the range $2< n<\bar{n}$ where $\bar{n}$ is a function of $d_A=\dim(\mathcal{H}_A)$ and $d_B=\dim(\mathcal{H}_B)$ that satisfies $\bar{n}\leq \max(d_A,d_B)$. It is unclear what the necessary conditions for maximization over $n$ are.

Moreover, for classical-quantum states it is sufficient to consider maximization for $n=2$ and odd $n$ in the range $2< n\leq d_A$. For quantum-classical states it is sufficient to consider maximization for $n=2$ and odd $n$ in the range $2< n\leq d_B$. 
In particular this implies that $C_B(\rho_{AB}^{(\tau)}) = C^{(2)}_B (\rho_{AB}^{(\tau)})$ and that $C_A({\rho}_{AB}^{(\tau)}) = C_A^{(2)} ({\rho}_{AB}^{(\tau)}) $.

To find these sufficient conditions we introduce the following notation.
Denote by $C^{(n)} (\rho_{AB})$ the correlation measure where the maximization is limited to $n$-outcome POVMs
\begin{eqnarray}
C^{(n)} (\rho_{AB})=\max_{\substack{\{ P_{A,i}^{(n)}\}_{i}\in \Pi_A(\rho_{AB}) \\\text{or}\\ \{ P_{B,i}^{(n)}\}_{i}\in \Pi_B(\rho_{AB})}}\max_{\{ P_{A,i}^{(n)}\}_{i},\{ P_{B,i}^{(n)}\}_{i}}  \sum_i^{n}\Tr\left( \rho_{AB}  P^{(n)}_{A,i}\otimes{P^{(n)}_{B,i}} \right)  - \frac{1}{2}.\nonumber\\
\end{eqnarray}
Let $\{\overline{P}_{A,i}^{(n)}\}_{i}$ and $\{\overline{P}_{B,i}^{(n)}\}_{i}$ be two POVMs that maximize $C^{(n)}(\rho_{AB})$ where either $\{\overline{P}_{A,i}^{(n)}\}_{i}$ or $\{\overline{P}_{B,i}^{(n)}\}_{i}$ is a ME-POVM
\begin{eqnarray}\label{Pgneven}
C^{(n)} (\rho_{AB}) =\tr{{\rho}_{AB} \left( \sum_{i=1}^n \overline{P}_{A,i}^{(n)} \otimes \overline {P}_{B,i}^{(n)}  \right)   }  - \frac{1}{2} .
\end{eqnarray}

\subsection{The general case}

If $n$ is even we consider the following 2-output POVM:
\begin{eqnarray}\label{2outputeven}
P^{(2)}_{A,1}&=& \sum_{i\in E_1} \overline{P}_{A,i}^{(n)} \nonumber\\
P^{(2)}_{A,2} &=&\sum_{i\in E_2} \overline{P}_{A,i}^{(n)},
\end{eqnarray}
where $E_1$ and $E_2$ are any two sets of $n/2$ indices such that $E_1 \cup E_2 = \{ 1,2, \dots , n\}$. This structure guarantees that Eq. (\ref{2outputeven}) is a 2-output ME-POVM for $ \rho_{AB}$ if  $\{\overline{P}_{A,i}^{(n)}\}_{i}$ is. Similarly we define the 2 element POVM $\{P^{(2)}_{B,i}\}_i$ as
\begin{eqnarray}\label{PAiPBi}
P^{(2)}_{B,1} &=& \sum_{i\in E_1} \overline {P}_{B,i}^{(n)}\nonumber\\
P^{(2)}_{B,2} &=& \sum_{i\in E_2} \overline{P}_{B,i}^{(n)}.
\end{eqnarray}

We compare Eq. (\ref{Pgneven}) with the guessing probability of the output ensemble that we obtain applying Eq. (\ref{2outputeven}) on $\rho_{AB}$:
\begin{eqnarray}\label{Pg2x} 
C^{(2)} (\rho_{AB})&=&\max_{\substack{\{ P_{A,i}^{(n)}\}_{i}\in \Pi_A(\rho_{AB}) \\\text{or}\\ \{ P_{B,i}^{(n)}\}_{i}\in \Pi_B(\rho_{AB})}}\max_{\{P_{A,i}\}_{i},\{P_{B,i}\}_{i} }  \sum_{i=1}^2\tr{{\rho}_{AB}    P^{}_{A,i} \otimes P_{B,i}   } - \frac{1}{2} \nonumber\\
&\geq& \tr{{\rho}_{AB}  \left(  \sum_{i=1}^2  P^{(2)}_{A,i} \otimes {P}^{(2)}_{B,i}  \right)  }  - \frac{1}{2}\nonumber\\
&=& \tr{  {\rho}_{AB} \left( \sum_{i=1}^n \overline P_{A,i}^{(n)} \otimes  \overline{P}_{B,i}^{(n)} + P_{AB}^{mix} \right) } - \frac{1}{2} \nonumber\\
&=& C^{(n)} (\rho_{AB}) + \tr{ {\rho}_{AB}   P_{AB}^{mix}   }  \geq C^{(n)} (\rho_{AB}) \, ,
\end{eqnarray}
where $P_{AB}^{mix}$ is a sum of mixed terms of the form $\overline  P_{A,i}^{(n)} \otimes  \overline P_{B,j}^{(n)}$ with $i\neq j$, and provides a non-negative contribution. 
Thus for even $n$ $C^{(2)} (\rho_{AB})\geq C^{(n)} (\rho_{AB})$ for any $n\geq 2$.

If $n$ is odd, we define a 2-output POVM:
\begin{eqnarray}\label{PBodd}
P^{(2)}_{A,k} =\frac{1}{2}\overline P_{A,x}^{(n)}+ \sum_{i\in O_k^x}\overline P_{A,i}^{(n)} \hspace{0.75cm} (k=1,2) 
\end{eqnarray}
\begin{eqnarray}\label{PAodd}
P^{(2)}_{B,k} = \frac{1}{2} \overline P_{B,x}^{(n)}+ \sum_{i\in O_k^x} \overline P_{B,i}^{(n)}   \hspace{0.75cm} (k=1,2) 
\end{eqnarray}
where $O_1^x$ and $O_2^x$ are any two sets of $(n-1)/2$ indices such that $O_1^x \cup O_2^x = \{ 1,2, \dots , n\} \setminus \! x $. There are thus $n$ different ways to choose $x$.
This structure guarantees that $\{P^{(2)}_{A,k}\}$ is a 2-output ME-POVM for $ \rho_{AB}$ if  $\{\overline{P}_{A,i}^{(n)}\}_{i}$ is and likewise for $B$.

We can relate  $C^{(2)} (\rho_{AB})$ and ${C^{(n)} (\rho_{AB})}$ as
 
\begin{widetext}
\begin{eqnarray}\label{biggest}
C^{(2)} (\rho_{AB}) &\geq&  \tr{ {\rho}_{AB}  \left(  \left(\sum_{i\in O_1} \overline P_{A,i}^{(n)}+\frac{1}{2} \overline P_{A,x}^{(n)} \right)\otimes\left( \sum_{i\in O_1}\overline P_{B,i}^{(n)} + \frac{1}{2}\overline P_{B,x}^{(n)}\right)+ \left(\sum_{i\in O_2} \overline P_{A,i}^{(n)}+\frac{1}{2} \overline P_{A,x}^{(n)} \right)\otimes\left( \sum_{i\in O_2}\overline P_{B,i}^{(n)} + \frac{1}{2}\overline P_{B,x}^{(n)}\right)\right)   }-\frac{1}{2}\nonumber\\
&=&\tr{ {\rho}_{AB}  \left(  \sum_{i\neq x} \overline P_{A,i}^{(n)} \otimes \overline P_{B,i}^{(n)} \,+ \frac{1}{2} \overline P_{A,x}^{(n)} \otimes\overline P_{B,x}^{(n)}  + \frac{1}{2} \left(\sum_{i\neq x} \overline P_{A,i}^{(n)}  \right) \otimes\overline P_{B,x}^{(n)}  + \frac{1}{2}\overline P_{A,x}^{(n)}\otimes\left(\sum_{i\neq x} \overline P_{B,i}^{(n)} \right)   + P_{AB}^{mix} \right)   } -\frac{1}{2} \nonumber\\
&=&\tr{ {\rho}_{AB}   \left(  \sum_{i =1}^n\overline P_{A,i}^{(n)} \otimes\overline  P_{B,i}^{(n)} \,- \frac{3}{2} \overline P_{A,x}^{(n)} \otimes\overline P_{B,x}^{(n)}  + \frac{1}{2} \mathbbm{1}_A \otimes \overline P_{B,x}^{(n)}  + \frac{1}{2}\overline P_{A,x}^{(n)}\otimes \mathbbm{1}_B   +P_{AB}^{mix}\right)   } -\frac{1}{2} \nonumber\\
&=& C^{(n)} (\rho_{AB})  +  \tr{ \rho_{AB}  \left( - \frac{3}{2} \overline P_{A,x}^{(n)} \otimes \overline P_{B,x}^{(n)}  + \frac{1}{2} \mathbbm{1}_A \otimes \overline P_{B,x}^{(n)}  + \frac{1}{2}\overline P_{A,x}^{(n)}\otimes \mathbbm{1}_B   +P_{AB}^{mix}\right) } .
\end{eqnarray}
\end{widetext}
A sufficient condition for $C^{(2)} (\rho_{AB})\geq{C^{(n)} (\rho_{AB})}$ is that $\theta_x\equiv- {3} \overline P_{A,x}^{(n)} \otimes \overline P_{B,x}^{(n)}  +  \mathbbm{1}_A \otimes \overline P_{B,x}^{(n)}  +  \overline P_{A,x}^{(n)}\otimes \mathbbm{1}_B  $ is positive semidefinite. Let $\lambda_{Ak}$ and $\lambda_{Bk}$ be the eigenvalues of $P_{A,x}^{(n)}$ and $ P_{B,x}^{(n)}$, respectively. Then $\theta_x$ is positive-semidefinite if $-3\lambda_{Ak} \lambda_{Bl}+\lambda_{Ak}+\lambda_{Bl}\geq 0$ for every $k,l$. If $\lambda_{Ak}\leq 1/2$ or $\lambda_{Bl}\leq 1/2$ for every $k,l$ this inequality holds. If $\lambda_{Ak}\geq 1/2$ or $\lambda_{Bl}\geq 1/2$ the inequality holds if $\lambda_{Ak}\leq z$ and $\lambda_{Bl}\leq z/(3z-1)$ for some $1/2\leq z\leq 1$ and every $k,l$. 
A sufficient condition for the existence of a pair $\overline P_{A,i}^{(n)}$ and $\overline P_{B,j}^{(n)}$ satisfying this is $n\geq d_B(3z-1)/z$ and $n\geq d_A/z$ since $\Tr(\sum_i\overline P_{B,i}^{(n)})=d_B $ and $\Tr(\sum_i\overline P_{A,i}^{(n)})=d_A $.

Thus, a sufficient condition for $C^{(2)} (\rho_{AB})\geq{C^{(n)} (\rho_{AB})}$ is that $n\geq \bar{n}$ where $\bar{n}$ 

\begin{eqnarray}
\bar{n}\equiv \min_{z\in[\frac{1}{2},1]}\left[\max\left(\frac{d_A}{z},\frac{d_B(3z-1)}{z}\right)\right]=
\begin{cases}
    d_B       &  \text{if } \frac{d_A}{d_B}\leq \frac{1}{2}\\
    \frac{3d_Ad_B}{d_A+d_B}  &  \text{if } \frac{1}{2}<\frac{d_A}{d_B}< 2 \\
    d_A       &  \text{if } 2\leq\frac{d_A}{d_B}
  \end{cases}.\nonumber\\
\end{eqnarray}

In conclusion, for general states it is sufficient to consider maximization for $n=2$ and odd $n$ in the range $2< n<\bar{n}$.

\subsection{Classical-quantum and Quantum-classical states}

For the case of classical-quantum or quantum-classical states we can derive a different bound on $n$. To see this we derive an inequality similar to thart of Eq. (\ref{biggest}) 

\begin{widetext}
$$
P^{(2)}_g \geq  \tr{ {\rho}_{AB}  \left(  \sum_{i\neq x} \overline P_{A,i}^{(n)} \otimes  \overline P_{B,i}^{(n)} \,+ \frac{1}{2} \overline P_{A,x}^{(n)} \otimes\overline P_{B,x}^{(n)}  + \frac{1}{2} \left(\sum_{i\neq x} \overline P_{A,i}^{(n)}  \right) \otimes\overline P_{B,x}^{(n)}  + P_{AB}^{mix} \right)   } 
$$
$$
\geq \tr{{\rho}_{AB}   \left(  \sum_{i =1}^n\overline P_{A,i}^{(n)} \otimes\overline  P_{B,i}^{(n)} \,- \frac{1}{2} \overline P_{A,x}^{(n)} \otimes\overline P_{B,x}^{(n)}  + \frac{1}{2} \left(\sum_{i\neq x}\overline P_{A,i}^{(n)}  \right) \otimes\overline P_{B,x}^{(n)}   \right)   } 
$$
$$ 
= P_g^{(n)}  +  \tr{ \rho_{AB}  \left( \frac{ - \overline P_{A,x}^{(n)} }{2} \otimes \overline P_{B,x}^{(n)}  +  \frac{ \sum_{i\neq x}\overline P_{A,i}^{(n)}  }{2}\otimes\overline P_{B,x}^{(n)}   \right) }
= P_g^{(n)}  + \tr{ \rho_{AB}  \frac{ \mathbbm{1}_A - 2 \overline P_{A,x}^{(n)} }{2}  \otimes \overline P_{B,x}^{(n)} }  .
$$
\end{widetext}

If we restrict the set of states to the classical-quantum states
\begin{equation}
\rho_{AB}(t) = \sum_s p_s|s\rangle\langle s|_A \otimes \rho_{Bs} (t), ,
\end{equation}
it is clear that  $\tr{{\rho}_{AB}  \frac{ \mathbbm{1}_A - 2 \overline P_{A,x}^{(n)} }{2}  \otimes\overline P_{B,x}^{(n)} }$ is non-negative if
the diagonal elements of $(\mathbbm{1}_A- 2 \overline P_{A,x}^{(n)} )\otimes \overline P_{B,x}^{(n)}$ are non-negative, i.e., if the diagonal elements of $\overline P_{A,x}^{(n)}$ are all smaller or equal to $1/2$.   
An $x$ such that $\overline P_{A,x}^{(n)}$ satisfies this exists with certainty if $n>d_A$ since at most $d_A$ elements of $\{\overline{P}_{A,i}^{(n)}\}_{i}$ can have a diagonal element larger than $1/2$. 
In particular, for the probe state in Eq. (\ref{hatrho}) it is sufficient to consider $n=d_A=2$ regardless of $d_B$. Thus, $C({\rho}_{AB}^{(\tau)}) = C^{(2)} ({\rho}_{AB}^{(\tau)}) $.

If we restrict the set of states to the quantum-classical states
\begin{equation}
\rho_{AB}(t) = \sum_s  p_s\rho_{As} (t) \otimes|s\rangle\langle s|_B, ,
\end{equation}
a completely analogous argument gives the sufficient condition for the existence of a $\overline P_{B,x}^{(n)}$ with all diagonal elements smaller or equal to $1/2$. Such a $\overline P_{B,x}^{(n)}$ exists if $n>d_B$.

\section{$\Pi_A(\rho_{AB}) \neq \{ \mathbf{0}\}$ }\label{neverempty}
We  explicitly construct an element $\{ P_{A,i}\}_i$ of $\Pi_A(\rho_{AB})$ for an arbitrary state $\rho_{AB}$. The method that we use should convince the reader that there are innumerable other ways to construct a ME-POVM with any number of outputs.

By definition $\{ P_{A,i}\}_{i=1,\dots,n} \in\Pi_A(\rho_{AB})$   if the output ensemble $\mathcal{E}(\rho_{AB}, \{ P_{A,i}\}_i)=\{p_i, \rho_{B,i} \}_{i}$ is characterized by $p_i=1/n$. In general, we have that $p_{i} = \tr{ \rho_{AB} P_{A,i} } = \trA{ \rho_A P_{A,i} }  \, ,
$
where $\rho_A = \trB{\rho_{AB}}$. Using an orthogonal decomposition of $\rho_A$, we can always write it as: $\rho_A  = \sum_{i=1}^{d_A} \pi_i \ketbra{i}{i}_A$, where $\{\ket{i}_A\}_i$ is an orthonormal basis of $\mathcal{H}_A$. The condition $\sum_{i=1}^{d_A} \pi_i =1$ implies that there exist an $\overline{i}$, such that $S (\overline{i}) \equiv \sum_{i=1}^{\overline{i}} \pi_i > 1/2$ and $S(\overline{i}-1) \equiv \sum_{i=1}^{\overline{i}-1} \pi_i \leq 1/2$. We consider the following class of 2-output POVM that depends on a real parameter $\omega \in [0,1]$
$
P_{A,1} (\omega) = \sum_{i=1}^{\overline{i}-1} \ketbra{i}{i}_A + \omega \ketbra{\overline{i}}{\overline{i}}_A \, ,
$
$
P_{A,2}(\omega) = (1-\omega)\ketbra{\overline{i}}{\overline{i}}_A+ \sum_{i= \overline{i}+1}^{d_A}\ketbra{i}{i}_A \, .
$
We evaluate $p_{1}$ for a general value of $\omega$ and we obtain
$
p_{1}(\omega) =\sum_{i=1}^{\overline{i}-1} \pi_i + \omega \, \pi_{\overline{i} } = S(\overline{i}-1) + \omega \, \pi_{\overline{i} }  \, .
$
It is clear that, since $p_{1}(0) = S(\overline{i}-1)\leq1/2$ and $p_{1}(1) = S(\overline{i})>1/2$, the value $\omega =  \overline{\omega}\equiv (1/2 - S(\overline{i}-1) )/{\pi_{\overline{i}} }$, gives the uniform distribution  $p_{1,2}(\overline{\omega})=1/2$ and consequently $\{ P_{A,i}(\overline \omega)  \}_i \in \Pi_A(\rho_{AB})$, i.e a  ME-POVM for $\rho_{AB}$.

\section{Monotonic behaviour of $C$  {and $C^{(n)}$} under local operations}\label{monolocal}

Firstly, we prove that {$C_A$} is monotone under local operations of the form $\Lambda_A \otimes  \mathcal{I}_B$, and secondly we consider the case where the local operation is $\mathcal{I}_A \otimes  \Lambda_B$, where $\Lambda_A$ ($\Lambda_B$) is a CPTP map on $A$ ($B$) and $\mathcal{I}_A$ ($\mathcal{I}_B$) is the identity map on $A$ ($B$). The proof for {$C_A$} easily generalizes to {$C_B$} and {$C$}.
{Finally, we prove that the same monotonicity property holds for $C^{(n)}$ for any $n\geq 2$. We denote the set of ME-POVMs acting on $A$ for the state $\rho_{AB}$ by $\Pi_A(\rho_{AB})$ and similarly for $B$.}

In order to show the effect  of the application of a local operation of the form  $\Lambda_A \otimes  \mathcal{I}_B$ on $C_A(\rho_{AB})$, we look at $\Pi_A(\rho_{AB})$ in a different way. Each element of this collection is a ME-POVM for $\rho_{AB}$, i.e. they generate sets of { \it equiprobable ensembles of states} (EES) from $\rho_{AB}$. In fact 
\begin{equation}\label{CAapp}
C_A (\rho_{AB}) \equiv \max_{ \left\{ P_{A,i}\right\}_i  \in \Pi_A \left( \rho_{A} \right) }  P_g \left( \mathcal{E} \left( \rho_{AB} ,\left\{ P_{A,i} \right\}_i \right)  \right)  - \frac{1}{2} \, .
\end{equation}
 is a maximization over all the possible EES that we can generate from $\rho_{AB}$ with a measurement procedure on $A$.

The effect of the first local operation that we consider is: $\tilde{\rho}_{AB}=  \Lambda_A \otimes  \mathcal{I}_B \, ( \rho_{AB} ) = \sum_k \left( E_k \otimes \mathbbm{1}_B \right) \cdot \rho_{AB} \cdot \left( E_k \otimes \mathbbm{1}_B \right)^\dagger  \, ,$
where $\left\{ E_k \right\}_k$ is the set of the Kraus operators that defines $\Lambda_A$. What is the relation between $\Pi_A (\rho_{AB}) $ and $\Pi_A (\tilde{\rho}_{AB})$? Given an $n$-output ME-POVM for $\tilde \rho_{AB}$, i.e. $\{ P_{A,i}\}_i \in \Pi_A (\tilde{\rho}_{AB})$, the probabilities and the states of the output ensemble $\mathcal{E} \left( \tilde{\rho}_{AB}, \{ P_{A,i} \}_i\right)$ are $\tilde{p}_i =  \tr{ \tilde{\rho}_{AB} \cdot P_{A,i} }=1/n$ and $\tilde \rho_{B,i} = \trA{\tilde\rho_{AB} \cdot P_{A,i}}/\tilde{p}_i$. Now we look at the term
$$
\trA{\tilde{\rho}_{AB} \cdot P_{A,i}} = \tr{\Lambda_{A}\otimes \mathcal{I}_B\, ( \rho_{AB}) \cdot P_{A,i}} =
$$
$$
=\trA{\sum_k (E_k \otimes \mathbbm{1}_B)\cdot \rho_{AB} \cdot (E_k^\dagger \otimes \mathbbm{1}_B)\cdot P_{A,i} }=
$$
$$
=\trA{\rho_{AB} \sum_k (E_k^\dagger \otimes \mathbbm{1}_B) \cdot P_{A,i} \cdot (E_k \otimes \mathbbm{1}_B) } =
$$
$$
= \trA{\rho_{AB} \cdot \Lambda^*_{A} (P_{A,i} ) } = \trA{\rho_{AB} \cdot \tilde P_{A,i} } \, ,
$$
and we rewrite the probabilities and the output states as:
$\tilde p_i= \mbox{Tr}[ \rho_{AB} \cdot \tilde P_{A,i} ] = 1/n$ and $\rho_{B,i}=\mbox{Tr}_A[ \rho_{AB} \cdot \tilde P_{A,i} ] /\tilde p_i$. This ensemble is an EES. Next we show that:
$\{ \tilde{P}_{A,i} \}_i =\left\{ \Lambda^*_A \left( P_{A,i} \right) \right\}_i = \{ \sum_k E_k^\dagger\cdot P_{A,i} \cdot E_k \}_i \, ,
$
is a POVM. The elements of $\{ \tilde{P}_{A,i} \}_i$ sum up to the identity: 
$
\sum_i \tilde{P}_{A,i} =   \sum_{k,i} E_k^\dagger \, P_{A,i} \,  E_k = \sum_{k} E_k^\dagger \, \left( \sum_i P_{A,i} \right) \, E_k = \sum_{k} E_k^\dagger\,  E_k = \mathbbm{1}_B \, ,
$
and they are positive operators:
$
\tilde{P}_{A,i} = \sum_k E_k^\dagger\,  P_{A,i} \,  E_k = \sum_{k}  E_k^\dagger \,  M^\dagger_{A,i} \, M_{A,i} \,  E_k =  \tilde{M}_{A,i}^\dagger \tilde{M}_{A,i} \, ,
$
where the decomposition $P_{A,i} = M^\dagger_{A,i} M_{A,i}$ exists since $P_{A,i}$ is positive-semidefinite and  $\tilde{M}_{A,i} = \sum_k M_{A,i} \, E_k$. It follows that, $\{ \tilde{P}_{A,i} \}_i$ is a ME-POVM for $\rho_{AB}$, i.e. $\{ \tilde{P}_{A,i} \}_i\in \Pi_A(\rho_{AB})$.  Thus, for every ME-POVM $\{P_{A,i}\}_i \in \Pi_A(\tilde\rho_{AB})$ for $\tilde \rho_{AB}$, there is a ME-POVM $\{\tilde P_{A,i}\}_i \in \Pi_A(\rho_{AB})$ for $\rho_{AB}$, such that the output ensembles are identical: $\mathcal{E} (\tilde \rho_{AB}, \{P_{A,i}\}_i) = \mathcal{E} (\rho_{AB}, \{\tilde P_{A,i} \}_i )$.
Thus, any EES that can be generated from $\tilde{\rho}_{AB}$, is obtainable from $\rho_{AB}$ as well
\begin{equation}\label{EI}
\bigcup_{ \{P_{A,i} \}_i \in \Pi_A (\tilde{\rho}_{AB} ) }   \!\!\!\!\!\!\!\!  \mathcal{E} \left( \tilde{\rho}_{AB}, \, \{ P_{A,i} \}_i \right)\,  \subseteq \!\!\!\!\!\!\! \bigcup_{ \{P_{A,i} \}_i \in \Pi_A ({\rho}_{AB} ) }    \!\!\!\!\!\!\!\! \mathcal{E} \left( {\rho}_{AB}, \, \{ P_{A,i} \}_i \right) \, .
\end{equation}
Finally, because $C_A(\rho_{AB})$ could be thought as the maximum guessing probability of the EESs that can be generated from $\rho_{AB}$ (see Eq. (\ref{CAapp})), we conclude that 
\begin{equation}\label{monA2}
C_A \left( \rho_{AB} \right) \geq C_A \left(  \Lambda_A \otimes  \mathcal{I}_B \,  ( \rho_{AB}) \right) \, ,
\end{equation}
for any state $\rho_{AB}$ and CPTP map $\Lambda_A$.

{Fixing the number $n$ of outputs of the ME-POVMs considered in (\ref{CAapp}), Eq. (\ref{EI}) becomes:
\begin{equation}\label{EIn}
\bigcup_{ \{P_{A,i} \}_{i=1}^n \in \Pi_A (\tilde{\rho}_{AB} ) }   \!\!\!\!\!\!\!\!  \mathcal{E} \left( \tilde{\rho}_{AB}, \, \{ P_{A,i} \}_i \right)\,  \subseteq \!\!\!\!\!\!\! \bigcup_{  \{P_{A,i} \}_{i=1}^n \in \Pi_A ({\rho}_{AB} ) }    \!\!\!\!\!\!\!\! \mathcal{E} \left( {\rho}_{AB}, \, \{ P_{A,i} \}_i \right) \, .
\end{equation}
Therefore, it follows that:
\begin{equation}\label{monA2n}
C_A^{(n)} \left( \rho_{AB} \right) \geq C_A^{(n)} \left(  \Lambda_A \otimes  \mathcal{I}_B \,  ( \rho_{AB}) \right) \, ,
\end{equation}
for any integer $n\geq 2$, state $\rho_{AB}$ and  CPTP map $\Lambda_A$.
}

Next we show the property of monotonicity of $C_A(\rho_{AB})$  under  the action of local operations of the form $\mathcal{I}_A \otimes  \Lambda_B$. We find that the collection of the ME-POVMs for $\tilde{\rho}_{AB} =  \mathcal{I}_A \otimes  \Lambda_B \, ( {\rho}_{AB})$, i.e. $\Pi_A (\tilde{\rho}_{AB})$, coincides with $\Pi_A (\rho_{AB})$. 

In order to prove this, we apply a general POVM $\{P_{A,i}\}_i$ on both $\rho_{AB}$ and $\tilde \rho_{AB}$ and we show that the respective output ensembles are defined by the same probability distribution. We can write $p_i= \tr{\rho_{AB}\cdot P_{A,i}}$ and $\tilde p_i =\tr{ \mathcal{I}_A\otimes \Lambda_B \, ( \rho_{AB}) \cdot P_{A,i} } = \tr{\rho_{AB} \cdot P_{A,i} }$, where the last step uses the trace-preserving property of the superoperator $\mathcal{I}_A\otimes \Lambda_B$. Consequently, $p_i=1/n$ if and only if $\tilde p_i=1/n$ and $\{ P_{A,i}\}_i\in \Pi_A(\rho_{AB})$ if and only if $\{ P_{A,i}\}_i\in \Pi_A(\tilde \rho_{AB})$
\begin{equation}\label{uguale}
\Pi_A (\rho_{AB} ) = \Pi_A (\tilde{\rho}_{AB}) \, .
\end{equation}
Given a ME-POVM for both $\rho_{AB}$ and $\tilde{\rho}_{AB}$, we relate the output states
\begin{equation}\label{rhobicontratti}
\tilde{\rho}_{B,i} = \Lambda_B \cdot \trA{ \rho_{AB} P_{A,i}  }  / p_i = \Lambda_B (\rho_{B,i}) \, .
\end{equation}
From Eq. (\ref{rhobicontratti}) and the definition of the guessing probability, it follows that
\begin{equation}\label{monB0}
P_g\left(  \left\{ p_i, \, \rho_{B,i}  \right\}_i  \right) \geq P_g\left(  \left\{ p_i, \, \Lambda_B (\rho_{B,i})  \right\}_i  \right) \, ,
\end{equation}
and, considering Eq. (\ref{uguale}), Eq. (\ref{rhobicontratti}) and Eq. (\ref{monB0})
\begin{equation}\label{monB}
C_A \left( \rho_{AB} \right) \geq C_A \left(  \mathcal{I}_A \otimes \Lambda_B   \,  ( \rho_{AB}) \right) \, ,
\end{equation}
that is true for any state $\rho_{AB}$ and CPTP map $\Lambda_B$.

{From Eq. (\ref{uguale}) it follows the collection of the $n$-output ME-POVMs does not change if we apply a CPTP map $\Lambda_B$ on $\rho_{AB}$. Therefore, since Eq. (\ref{monB0}) is true for any number of outputs:
\begin{equation}\label{monB}
C_A^{(n)} \left( \rho_{AB} \right) \geq C_A^{(n)}  \left(  \mathcal{I}_A \otimes \Lambda_B   \,  ( \rho_{AB}) \right) \, ,
\end{equation}
for any integer $n\geq 2$, state $\rho_{AB}$ and  CPTP map $\Lambda_B$.
}

We underline that from this proof we automatically obtain the invariance under local unitary transformations of $C$ and $C^{(n)}$ for any $n\geq 2$.

\section{$C_A({\rho}_{AB}^{(\tau)}) \geq  C^{(2)}_B({\rho}_{AB}^{(\tau)})$ }\label{CB2CA}

In this section (where we omit the time dependence of $\rho_{AB}^{(\tau)}(t)$, $\rho_B'^{(\tau)}(t)$ and $\rho_B''^{(\tau)}(t)$) we show that $ C_A(\rho_{AB}^{(\tau)}) \geq C^{(2)}_B( \rho_{AB}^{(\tau)})$, where $C^{(2)}_B( \rho_{AB}^{(\tau)})$ is defined by
$$
C_B^{(2)} ({\rho}_{AB}^{(\tau)}) \equiv\!\!\!\!\!\!\!\! \max_{ \left\{P_{B,i}\right\}_i \in \Pi_B^{(2)} \left( \rho_{AB}^{(\tau)} \right) } \!\!\!\!P_g \left( \mathcal{E} \left( \rho_{AB}^{(\tau)} ,\left\{ P_{B,i} \right\}_i \right)  \right)-\frac{1}{2} \, ,
$$
where $\Pi_B^{(2)} ( \rho_{AB}^{(\tau)} )$ is the set of the 2-output ME-POVMs.
In Appendix  \ref{bounded} we showed that $C^{(2)}_B(\rho_{AB}^{(\tau)}) = C_B(\rho_{AB}^{(\tau)})$ and this completes the proof that $C_A({\rho}_{AB}^{(\tau)}) \geq  C_B({\rho}_{AB}^{(\tau)})$.

We  apply a general but fixed 2-output ME-POVM for ${\rho}_{AB}^{(\tau)}$, where now the measured system is  $B$: $\{P_{B,i}^{(2)}\}_i = \{ P_{B}, \, \overline{P}_B \} \in \Pi_B({\rho}_{AB}^{(\tau)})$, where $\overline{P}_B = \mathbbm{1}_B -P_{B}$. The output ensemble  $\mathcal{E} (\rho_{AB}^{(\tau)}, \{P_{B,i}^{(2)}\}_i)= \{ p_{A,i}, \, \rho_{A,i}\}_{i}$ is composed by an uniform distribution (by definition of ME-POVM) and states in the following form: 

\begin{equation}\label{p1}
p_{A,1}=   \frac{1}{2} \trB{\left( \rho_B'^{(\tau)} +\rho_B''^{(\tau)}\right) P_B  } =\frac{1}{2} \, ,
\end{equation}
\begin{equation}\label{p2}
p_{A,2} = \frac{1}{2} \trB{\left( \rho_B'^{(\tau)} +\rho_B''^{(\tau)}\right)\overline{P}_B  } =\frac{1}{2} \, ,
\end{equation}
\begin{equation}
\rho_{A,1}=   \ket{0}\bra{0}_A \trB{ \rho_B'^{(\tau)}  P_B  } +\ket{1}\bra{1}_A \trB{ \rho_B''^{(\tau)}  P_B  }  \, ,
\end{equation} 
\begin{equation}\label{rho2}
\rho_{A,2}=  \ket{0}\bra{0}_A \trB{ \rho_B'^{(\tau)}  \overline{P}_B  } +\ket{1}\bra{1}_A \trB{ \rho_B''^{(\tau)} \overline{P}_B }  \, .
\end{equation}
If we use Eq. (\ref{PgD}) to get $P_g( \mathcal{E} (\rho_{AB}^{(\tau)}, \{P_{B,i}^{(2)}\}_i))$, firstly we have to evaluate $|| \rho_{A,1} - \rho_{A,2} ||_1$. Hence, with Eqs. (\ref{p1})-(\ref{rho2}), we can write it as:
$$
||  \ketbra{0}{0}_A \trB{ \rho'^{(\tau)}_B  \Delta P_B} + \ketbra{1}{1}_A \!\trB{ \rho''^{(\tau)}_B \Delta P_B } ||_1 =
$$
$$  
= | \trB{ \rho'^{(\tau)}_B  \Delta P_B }| + | \trB{ \rho''^{(\tau)}_B \Delta P_B }| \, ,
$$
where $\Delta P_B = P_B - \overline{P}_B$. Hence:
$$
|| \rho_{A,1} - \rho_{A,2} ||_1 = \max_{\pm} |  \trB{ ( \rho'^{(\tau)}_B \pm \rho''^{(\tau)}_B)  \Delta P_B}|  \, .
$$
 Using Eq. (\ref{p1}) and Eq. (\ref{p2}) we see that
$
| \trB{ ( \rho'^{(\tau)}_B + \rho''^{(\tau)}_B) \Delta P_B }| = | \trB{ ( \rho'^{(\tau)}_B + \rho''^{(\tau)}_B)  P_B} - \trB{ ( \rho'^{(\tau)}_B + \rho''^{(\tau)}_B)  \overline{P}_B }| = 2| p_{A,1} - p_{A_2} | =0 \, .
$
Hence
$
|| \rho_{A,1} - \rho_{A,2} ||_1  =  | \, \trB{ ( \rho'^{(\tau)}_B - \rho''^{(\tau)}_B) (2P_B - \mathbbm{1}_B )}| = 2 |  \trB{ ( \rho'^{(\tau)}_B - \rho''^{(\tau)}_B) P_B}| \, ,
$
from which follows that $C_B^{(2)} ({\rho}_{AB}^{(\tau)} )$ is equal to
\begin{equation}\label{maximB}
\max_{ \{P_{B,i}^{(2)} \}_{i} \in \Pi_B( {\rho}_{AB}^{(\tau)} ) }  \frac{| \trB{ ( \rho'^{(\tau)}_B - \rho''^{(\tau)}_B)  P_B}| }{2}  .
\end{equation}

To compare with $C_A({\rho}_{AB}^{(\tau)})$, we write it as
$$
C_A({\rho}_{AB}^{(\tau)}) = P_g (  \{ \{ p_{A,1,2}=1/2 \}_i , \{ \rho'^{(\tau)}_B , \rho''^{(\tau)}_B \} \} ) -\frac{1}{2} =$$
$$=\max_{ \{P_{B,i} \}_{i} \in \Pi_B  } 	\frac{  \trB{ \rho_B'^{(\tau)}  P_B + \rho_B''^{(\tau)}  \overline{P}_B } }{2} -\frac{1}{2}= $$
$$ {= \max_{ \{P_{B,i} \}_{i} \in \Pi_B  }}    \frac{   \trB{(\rho'^{(\tau)}_B - \rho''^{(\tau)}_B) P_B } }{2}= $$
$$
= \max_{ \{P_{B,i} \}_{i} \in \Pi_B  }   \frac{ |\trB{(\rho'^{(\tau)}_B - \rho''^{(\tau)}_B) P_B }| }{2}  \, .
$$
 We have used the definition in Eq. (\ref{Pg}): $\Pi_B $ is the collection of all the POVMs that we can perform on $B$. It follows that the only difference between  $C_B^{(2)}(\rho_{AB}^{(\tau)}) $ and $C_A(\rho_{AB}^{(\tau)}) $ is in the maximization procedure: in the former we maximize only over the ME-POVMs $\Pi_B({\rho}_{AB}^{(\tau)})$, while in the latter we can pick any POVM: $C_A({\rho}_{AB}^{(\tau)}(t)) \geq C_B^{(2)}({\rho}_{AB}^{(\tau)}(t))$ follows as a natural consequence.

\section{Proof that $C_B(\rho_{AB}^{(\tau)}) = C^{(2)}_B (\rho_{AB}^{(\tau)})$}\label{2enough}

In this Appendix, in contrast to Appendix \ref{CB2CA}, we consider the action of any ME-POVM over $B$ for $\rho_{AB}^{(\tau)}$.
We want to show  that for each  ME-POVM $\{P_{B,i}^{(n)}\}_{i}$ that we can consider in $C_B(\rho_{AB}^{(\tau)})$, where $i$ runs from 1 to $n>2$, we can  always find at least one 2-output ME-POVM acting on $B$, i.e. $\{P_{B,1}, P_{B,2}  \} \in \Pi_B(\rho_{AB}^{(\tau)}) $, that provides an ensemble with a higher value of $P_g (\cdot) $. We recall that, if $\mathcal{E}=\{ p_i,\rho_i\}_i$ is a generic ensemble of $n$ states defined on $S(\mathcal{H})$, where $\mathcal{H}$ is a generic finite dimensional Hilbert space, the guessing probability of $\mathcal{E}$ is
\begin{equation}\label{Pg}
P_g (\mathcal{E})\equiv \max_{ \left\{ P_i \right\}_i }\sum_{i=1}^n  p_i \tr{ \rho_i \cdot P_i  } \, ,
\end{equation}
where the maximization is performed over the space of the $n$-output POVMs $\{P_i\}_i$ on $S(\mathcal{H})$.
 Starting from a general $n$-output ME-POVM $\{P_{B,i}^{(n)}\}_{i}$, we construct the corresponding 2-output ME-POVM $\{P_{B,1}, P_{B,2}  \} \in \Pi_B(\rho_{AB}^{(\tau)})$ that accomplishes this task.

For every given $n$-output ME-POVM $\{ P_{B,i}^{(n)} \}_i$ for ${\rho}_{AB}^{(\tau)}$, we can generate an equiprobable ensemble of states (EES) of the form $\mathcal{E}( {\rho}_{AB}^{(\tau)} , \, \{P_{B,i}^{(n)} \}_i) = \{\{p_i=1/n\},\{\rho_{A,i}\} \}_i$. The guessing probability of this ensemble, which we denote by  $P_g^{(n)} = P_g(\mathcal{E} (\rho_{AB}^{(\tau)},\{ P_{B,i}^{(n)} \}_i))$, is
\begin{equation}\label{Pgneven}
P_g^{(n)} =\tr{{\rho}_{AB}^{(\tau)}\cdot \left( \sum_{i=1}^n \overline{P}_{A,i}^{(n)} \otimes P_{B,i}^{(n)}  \right)   } \, ,
\end{equation}
where  $\{ \overline{P}_{A,i}^{(n)}\}_i$ is a POVM that provides the maximum in  Eq. (\ref{Pg}). If $n$ is even we consider the following 2-output POVM
\begin{equation}\label{2outputeven}
P^{(2)}_{B,1} = \sum_{i\in E_1} P_{B,i}^{(n)} \, , \,\,\,
P^{(2)}_{B,2} = \sum_{i\in E_2} P_{B,i}^{(n)} \, ,
\end{equation}
where $E_1$ and $E_2$ are any two sets of $n/2$ indices such that $E_1 \cup E_2 = \{ 1,2, \dots , n\}$. This structure guarantees that Eq. (\ref{2outputeven}) is a 2-output ME-POVM for $\rho_{AB}^{(\tau)}$. We compare Eq. (\ref{Pgneven}) with the guessing probability of the output ensemble that we obtain applying Eq. (\ref{2outputeven}) on $\rho_{AB}^{(\tau)}$
$$
P_g^{(2)}=\max_{\{P_{A,i}\}_{i=1,2} }  \tr{{\rho}_{AB}^{(\tau)} \cdot \left(  \sum_{i=1}^2 P_{A,i} \otimes P^{(2)}_{B,i}  \right)  } \geq $$
\begin{equation}\label{Pg2x} \geq \tr{{\rho}_{AB}^{(\tau)} \cdot \left(  \sum_{i=1}^2  P^{(2)}_{A,i} \otimes {P}^{(2)}_{B,i}  \right)  } \, ,
\end{equation}
where the POVM $\{ P^{(2)}_{A,i}\}_i $ is defined by 
\begin{equation}\label{PAiPBi}
P^{(2)}_{A,1} = \sum_{i\in E_1} \overline P_{A,i}^{(n)} \, , \, \,\,
P^{(2)}_{A,2} = \sum_{i\in E_2} \overline{P}_{A,i}^{(n)} \, .
\end{equation}

$$
P_g^{(2)}\geq \tr{ {\rho}_{AB}^{(\tau)} \cdot \left(P^{(2)}_{A,1}  \otimes P^{(2)}_{B,1} + P^{(2)}_{A,2}  \otimes P^{(2)}_{B,2}  \right) } 
= $$
$$
= \tr{  {\rho}_{AB}^{(\tau)} \cdot\left( \sum_{i=1}^n \overline P_{A,i}^{(n)} \otimes  P_{B,i}^{(n)} + P_{AB}^{mix} \right) } = $$
\begin{equation}\label{Pgeven}
= P_g^{(n)} + \tr{ {\rho}_{AB}^{(\tau)}  \cdot P_{AB}^{mix}   }  \geq P_g^{(n)}  \, ,
\end{equation}
where $P_{AB}^{mix}$ is a sum of mixed terms of the form $\overline  P_{A,i}^{(n)} \otimes  P_{B,j}^{(n)}$ with $i\neq j$, and it provides a non-negative contribution. 

On the other hand, if $n$ is odd, we define
\begin{equation}\label{PBodd}
P^{(2)}_{B,k} =\frac{1}{2} P_{B,x}^{(n)}+ \sum_{i\in O_k^x} P_{B,i}^{(n)} \hspace{0.75cm} (k=1,2) 
\end{equation}
\begin{equation}\label{PAodd}
P^{(2)}_{A,k} = \frac{1}{2} \overline P_{A,x}^{(n)}+ \sum_{i\in O_k^x} \overline P_{A,i}^{(n)}   \hspace{0.75cm} (k=1,2) 
\end{equation}
where $O_1^x$ and $O_2^x$ are any two sets of $(n-1)/2$ indices such that $O_1^x \cup O_2^x = \{ 1,2, \dots , n\} \setminus \! x $ (the value of $x$ will be fixed later). 
We consider again Eq. (\ref{Pg2x}), where $\{P^{(2)}_{B,i}\}_i$ is now given by Eq. (\ref{PBodd}) and and $P^{(2)}_{A,i}$ is now given by Eq. (\ref{PAodd}). Since $P^{(2)}_{A,i}$ is not necessarily a POVM that maximizes Eq. (\ref{Pg}) we have the following inequality for $P_g^{(2)}$
\begin{widetext}
$$
P^{(2)}_g \geq  \tr{ {\rho}_{AB}^{(\tau)}\cdot  \left(  \sum_{i\neq x} \overline P_{A,i}^{(n)} \otimes  P_{B,i}^{(n)} \,+ \frac{1}{2} \overline P_{A,x}^{(n)} \otimes P_{B,x}^{(n)}  + \frac{1}{2} \left(\sum_{i\neq x} \overline P_{A,i}^{(n)}  \right) \otimes P_{B,x}^{(n)}  + P_{AB}^{mix} \right)   } \geq 
$$
$$
=\tr{ {\rho}_{AB}^{(\tau)}  \cdot \left(  \sum_{i =1}^n P_{A,i}^{(n)} \otimes  P_{B,i}^{(n)} \,- \frac{1}{2} \overline P_{A,x}^{(n)} \otimes P_{B,x}^{(n)}  + \frac{1}{2} \left(\sum_{i\neq x} P_{A,i}^{(n)}  \right) \otimes P_{B,x}^{(n)}   \right)   } = 
$$
$$ 
= P_g^{(n)}  +  \tr{ \rho_{AB}^{(\tau)} \cdot \left( \frac{ - \overline P_{A,x}^{(n)} }{2} \otimes P_{B,x}^{(n)}  +  \frac{ \sum_{i\neq x} P_{A,i}^{(n)}  }{2}\otimes P_{B,x}^{(n)}   \right) }
= P_g^{(n)}  + \tr{\rho_{AB}^{(\tau)} \cdot \frac{ \mathbbm{1}_A - 2 \overline P_{A,x}^{(n)} }{2}  \otimes P_{B,x}^{(n)} }  ,
$$
\end{widetext}
where $P_{AB}^{mix}$ represents terms that provide positive contributions  to $P_g^{(2)}$. We have to find a value of $x$ that makes the second term of the last relation positive. Let $a_x$ and $b_x$ be the diagonal elements of $\overline P_{A,x}^{(n)}$ in the orthonormal basis $\{\ket{0}_A,\ket{1}_A\}$. We recall that $\rho_{AB}^{(\tau)} = ( \ketbra{0}{0}_A \otimes \rho_B'^{(\tau)}+ \ketbra{1}{1}_A \otimes \rho_B''^{(\tau)} )/2$
and we obtain
\begin{equation}\label{Pgodd}
P_g^{(2)} \!\geq \! P_g^{(n)}+  \mbox{Tr}_B{\left[ \!\left( \!\frac{1-2 a_x}{4} {\rho_B'}^{(\tau)} \!+\! \frac{1-2b_x}{4} \rho_B''^{(\tau)} \! \right)\cdot P^{(n)}_{B,x} \right]\!,}
\end{equation}
where the second term on the right-hand side of the inequality is definitely positive when $a_x,\, b_x \leq 1/2$. From  $\sum_i \overline P_{A,i}^{(n)}  = \mathbbm{1}_A $ follows that $\sum_{i=1}^n a_i = 1$ and $\sum_{i=1}^n b_i=1$. Therefore, if $a_x > 1/2 $ ($b_x > 1/2$), then $a_y \leq 1/2$ ($b_y \leq 1/2$) for any $y \neq x$. In order to fix the value of $x$, we must consider that $a_x$ and $b_x$ could be bigger than $1/2$ for two different values of $x$: let's say $x_a$ and $x_b$. Even in this ``worst-case'' scenario we still have $n-2$ other possible choices for $x$ such that $(1-2a_x),\, (1-2b_x) \geq 0$. We pick one of these values, and we call it $\overline x \in \{1,\dots,n \} \setminus \{x_a,x_b\}$. Finally, if we use ${\overline x}$ in the definition of the POVMs $\{ P^{(2)}_{A,i} \}_i$ and $\{ P^{(2)}_{B,i}\}_i$, from Eq. (\ref{Pgodd}) we obtain
\begin{equation}\label{Pgodd2}  
P_g^{(2)} \geq  P_g^{(n)}   \, .
\end{equation}

Equations (\ref{Pgeven}) and (\ref{Pgodd2}) show that, when we evaluate $C_B( \rho_{AB}^{(\tau)})$, the guessing probability of the ensembles generated by the $n$-output ME-POVMs is never bigger than the one that we obtain if we only consider the 2-output ME-POVMs:
$
C^{(2)}_B(\rho_{AB}^{(\tau)})= C_B(\rho_{AB}^{(\tau)}) \, .
$
Thanks to this result we can finally say that $C_A(\rho_{AB}^{(\tau)}(t)) \geq C_B(\rho_{AB}^{(\tau)}(t))$ and $C (\rho_{AB}^{(\tau)}(t))= C_A(\rho_{AB}^{(\tau)}(t))$. This result is valid if we consider $ \rho_{AB}^{(\tau)}$, but in general it is not true.

\section{Proof that $C_A({\rho}_{AB}^{(\tau)}) = C_A^{(2)} ({\rho}_{AB}^{(\tau)}) $}\label{2enoughA}

When we considered $C_A(\rho_{AB}^{(\tau)})$, we have seen that if the maximization over the ME-POVMs is considered only over the 2-output ones, the maximum is obtained for $\{P^{proj}_{A,i}\}_i=\{ \ketbra{0}{0}_A, \ketbra{1}{1}_A \}$. 
In order to complete the proof, we need to show that even if we consider general $n$-output ME-POVMs  (as in the definition (\ref{CAapp})), we don't get higher guessing probabilities of the corresponding output ensembles. In other words, if we use the definition
$$
C_A^{(2)} ({\rho}_{AB}^{(\tau)}) =\!\!\!\!\!\!\!\! \max_{ \left\{P_{A,i}\right\}_i \in \Pi_A^{(2)} \left( \rho_{AB}^{(\tau)} \right) } \!\!\!\!P_g \left( \mathcal{E} \left( \rho_{AB}^{(\tau)} ,\left\{ P_{A,i} \right\}_i \right)  \right)-\frac{1}{2} \, ,
$$
where $\Pi_A^{(2)}(\rho_{AB}^{(\tau)})$ contains only the 2-output ME-POVMs of $\rho_{AB}^{(\tau)}$, then $C_A(\rho_{AB}^{(\tau)}) = C_A^{(2)} ({\rho}_{AB}^{(\tau)}) $. 

To see this we can make the same analysis as done in Appendix \ref{2enough} for $C_B(\rho_{AB}^{(\tau)})$ but we switch the role of $A$ and $B$ in Eq. (\ref{2outputeven}) and Eq. (\ref{PAiPBi}) when $n$ is even and Eq. (\ref{PBodd}) and Eq. (\ref{PAodd}) when $n$ is odd. The definitions for $P_g^{(n)}$, $P_g^{(2)}$, $E_{1,2}$ and $O_{1,2}^x$ are preserved.

The guessing probability of an EES generated by a ME-POVM $ \{ P_{A,i}^{(n)}\}_i$ with an even number of outputs is
$$
P_g^{(n)} =\tr{{\rho}_{AB}^{(\tau)}\cdot \left( \sum_{i=1}^n {P}_{A,i}^{(n)} \otimes \overline P_{B,i}^{(n)}  \right)   } \, ,
$$
where $\{ \overline P_{B,i}^{(n)}\}_i$ is a POVM that maximizes the guessing probability in Eq. (\ref{Pg}).
The 2-output ME-POVM that provides a higher guessing probability is
\begin{equation}\label{PAiPBiA}
P^{(2)}_{A,1} = \sum_{i\in E_1}  P_{A,i}^{(n)} \, , \, \,\,
P^{(2)}_{A,2} = \sum_{i\in E_2} {P}_{A,i}^{(n)} \, .
\end{equation}
We define the following POVM on the system $B$
\begin{equation}\label{2outputevenA}
P^{(2)}_{B,1} = \sum_{i\in E_1} \overline P_{B,i}^{(n)} \, , \,\,\,
P^{(2)}_{B,2} = \sum_{i\in E_2} \overline P_{B,i}^{(n)} \, .
\end{equation}
Consequently, we consider the following inequality
\begin{eqnarray*}
&P_g^{(2)} \geq \tr{ {\rho}^{(\tau)}_{AB} \cdot \sum_{i=1,2} P^{(2)}_{A,i}  \otimes P^{(2)}_{B,i} } =  \\ 
&= P_g^{(n)}+ \tr{ {\rho}^{(\tau)}_{AB}  \cdot \sum_{k=1}^2 \sum_{i \neq j}^{  i,j\in E_k}    P_{A,i}^{(n)} \otimes  \overline P_{B,j}^{(n)}   } ,& 
\end{eqnarray*}
which shows that $P_g^{(2)} \geq  P_g^{(n)}$. 
If $n$ is odd, we use again the technique from Appendix \ref{2enough}, where we switch the role of $A$ and $B$, to obtain the inequality
$$
P_g^{(2)}\geq  P_g^{(n)}  + \tr{ \rho_{AB}^{(\tau)} \cdot \frac{ \mathbbm{1}_A - 2 P_{A,x}^{(n)} }{2}  \otimes \overline P_{B,x}^{(n)} }  ,
$$
where the right-hand side is greater than $P_g^{(n)}$ if $x$ is suitably chosen.
 
{
We underline that the results given in this section and Appendix \ref{CB2CA} suffice to state that $C^{(2)} (\rho^{(\tau)}_{AB})=C^{(2)}_A (\rho^{(\tau)}_{AB})\geq C^{(2)}_B (\rho^{(\tau)}_{AB})$.
}


\begin{thebibliography}{xx}

\bibitem{book_B&P} H.-P. Breuer and F. Petruccione,
{\it The Theory of Open Quantum Systems} (Oxford Univ. Press,
Oxford, 2007).

\bibitem{book_Weiss} U. Weiss, {\it Quantum Dissipative Systems} (World
Scientific, Singapore, 2000).

\bibitem{book_R&H} \'A. Rivas and S. F. Huelga, {\it Open Quantum Systems. An Introduction}, (Springer, Heidelberg, 2011).

\bibitem{rev_RHP} \'A. Rivas, S. F. Huelga, and M. B. Plenio, Rep. Prog. Phys. \textbf{77}, 094001 (2014).

\bibitem{rev_BLP} H.-P. Breuer, E.-M. Laine, J. Piilo, and B. Vacchini, Rev. Mod. Phys. {\bf 88}, 021002 (2016).

\bibitem{GKS} V. Gorini, A. Kossakowski, and E. C. G. Sudarshan,  J. Math. Phys. {\bf 17}, 821 (1976).

\bibitem{L}  G. Lindblad,  Comm. Math. Phys. {\bf 48}, 119 (1976).

\bibitem{RHP} \'A. Rivas, S. F. Huelga, and M. B. Plenio, Phys. Rev. Lett. {\bf 105}, 050403
(2010).

\bibitem{BLP0} H.-P. Breuer, E.-M. Laine and J. Piilo, Phys. Rev. Lett. {\bf 103}, 210401 (2010).

\bibitem{Buscemi&Datta} F. Buscemi and N. Datta,
Phys. Rev. A {\bf 93}, 012101 (2016).

\bibitem{bogna}B. Bylicka, M. Johansson, and A. Ac{\'\i}n,
Phys. Rev. Lett. {\bf 118}, 120501 (2017).

\bibitem{Luo}S. Luo, S. Fu, and H. Song,
Phys. Rev. A {\bf 86}, 044101 (2012).

\bibitem{vidal}G. Vidal, J. Mod. Opt. {\bf 47}, 355 (2000).

\bibitem{straton} R. L. Stratonovich, Izv.
Vyssh. Uchebn. Zaved., Radiofiz. {\bf 8}, 
116 (1965).

\bibitem{horo}M. Horodecki, P. W. Shor, and M. B. Ruskai, Rev. Math. Phys. {\bf 15}, 629 (2003). 

\bibitem{choi}M.-D. Choi, Linear Alg. Appl.
{\bf 10}, 285 (1975).

\bibitem{jammy} A. Jamio\l{}kowski,
Rep. Math. Phys.
{\bf 3}, 275 (1972).

\bibitem{rellich}F. Rellich, {\it Perturbation Theory of Eigenvalue Problems} (Courant Institute of Mathematical Sciences, New York University, New York, 1954), ch. 1.5.

\bibitem{fanchini} S. Haseli, G. Karpat, S. Salimi, A.S. Khorashad, F. F. Fanchini, B. \c{C}akmak, G. H. Aguilar, S. P. Walborn, and P. H. Souto Ribeiro, Phys. Rev. A. {\bfseries 90}, 052118 (2014).

\bibitem{Tsing}N.-K. Tsing, M. K. H. Fan, and E. I. Verriest, Linear Algebra Appl. {\bf 207}, 159 (1994). 

\bibitem{koss1}A. Kossakowski, Rep. Math. Phys. {\bf 3}, 247 (1972).

\bibitem{koss2}A. Kossakowski, Bull. Acad. Pol. Sci. Math. Ser. Math. Astron. {\bf 20}, 1021 (1972).

\bibitem{ruskai}  M. B. Ruskai, Rev. Math. Phys. {\bf 6}, 1147 (1994).

\bibitem{Wuda}D. Chru\'sci\'nski and F. A. Wudarski, Phys. Rev. A {\bf 91}, 012104 (2015)

\bibitem{darekk}D. Chru\'sci\'nski and F. A. Wudarski, Phys. Lett. A {\bf 377}, 1425 (2013).

\bibitem{eternal0} M. J. W. Hall,  J. D. Cresser, L. Li, and E. Andersson, Phys. Rev. A {\bf 89}, 042120 (2014).

\bibitem{eternal} F. Benatti, D. Chru\'sci\'nski, and S. Filippov, Phys. Rev. A {\bfseries 95}, 012112 (2017).

\bibitem{previous} D. De Santis, M. Johansson, B. Bylicka, N. K. Bernardes and A. Ac\'\i n, Phys. Rev. A \textbf{99}, 012303 (2019).

\bibitem{Janek} J. Ko{\l}ody{\'n}ski, S. Rana, and A. Streltsov, arXiv:1903.08663 (2019).

\bibitem{ottyorke}W. Ott and J. A. Yorke, Bull. Amer. Math. Soc. {\bf 42}, 263 (2005).

\bibitem{wilde}M. Wilde, {\it Quantum Information Theory, 2nd edition} (Cambridge University Press, Cambridge, 2017), ch. 4.3.4.

\bibitem{renner}R. K\"onig, R. Renner, and C. Schaffner, IEEE Trans. Inf. Theory {\bf 55}, 4337 (2009).

\bibitem{yuen}H. Yuen, R. Kennedy, and M. Lax, IEEE Trans. Inf. Theory {\bf 21}, 125 (1975).

\bibitem{eldar}Y. C. Eldar, A. Megretski, and G. C. Verghese, 	IEEE Trans. Inf. Theory { \bf 49}, 1017 (2003).
 
\bibitem{rudin}W. Rudin, {\it Principles of Mathematical Analysis, third edition} (McGraw-Hill, New York, 1976), pp. 235-236.


\bibitem{LaSt} L. J. Landau and R. F. Streater, J. Linear Alg. and Appl.,  {\bf 193}, 107 (1993).

\bibitem{Hall} M. J. W. Hall, J. Phys. A {\bf 41}, 205302 (2008)

\end{thebibliography}
\end{document}